\newcolumntype{C}[1]{>{\centering\arraybackslash}m{#1}}
\newcommand{\Tr}{\operatorname{Tr}}
\newcommand{\be}{\begin{equation}}
\newcommand{\ee}{\end{equation}}
\newcommand{\ba}{\begin{eqnarray}}
\newcommand{\ea}{\end{eqnarray}}
\newcommand{\ketbra}[2]{|#1\rangle \langle #2|}
\newcommand{\tr}{\operatorname{Tr}}
\newtheorem{theorem}{Theorem}
\newtheorem{definition}{Definition}
\newtheorem{proposition}{Proposition}
\newtheorem{example}{Example}
\newtheorem{lemma}{Lemma}
\def\>{\rangle}
\def\<{\langle}
\begin{document}

\title{
Minimal Help, Maximal Gain: Environmental Assistance Unlocks Encoding Strength} 
\author{Snehasish Roy Chowdhury}
\affiliation{Physics and Applied Mathematics Unit, Indian Statistical Institute, 203 B. T. Road, Kolkata 700108, India}
\author{Sutapa Saha}
\affiliation{QIC Group, Harish Chandra Research Institute, A CI of Homi Bhabha National Institute, Chhatnag Road, Jhunsi, Prayagraj 211019, India}
\author{Subhendu B. Ghosh}
\affiliation{Department of Physics of Complex Systems, S. N. Bose National Centre for Basic Sciences, Block JD, Sector III, Salt Lake, Kolkata 700 106, India}
\author{Ranendu Adhikary}
\affiliation{Cryptology and Security Research Unit, Indian Statistical Institute, 203 B.T. Road, Kolkata 700108, India}
\author{Tamal Guha}
\affiliation{QICI Quantum Information and Computation Initiative, School of Computing and Data Science, The University of Hong Kong, Pokfulam Road, Hong Kong}
\affiliation{Physics and Applied Mathematics Unit, Indian Statistical Institute, 203 B. T. Road, Kolkata 700108, India}
\begin{abstract}
 For any quantum transmission line, with smaller output dimension than its input, the number of classical symbols that can be reliably encoded is strictly suboptimal. In other words, if the channel outputs a lesser number of symbols than it intakes, then rest of the symbols eventually leak into the environment, during the transmission. Can these lost symbols be recovered with minimal help from the environment? While the standard notion of environment-assisted classical capacity fails to fully capture this scenario, we introduce a generalized framework to address this question. Using an elegant example, we first demonstrate that the encoding capability of a quantum channel can be optimally restored with a minimal assistance of environment, albeit possessing suboptimal capacity in the conventional sense. Remarkably, we further prove that even the strongest two-input-two-output non-signaling correlations between sender and receiver cannot substitute for this assistance. Finally, we characterize a class of quantum channels, in arbitrary dimensions, exhibiting a sharp separation between the conventional environment-assisted capacity and the true potential for unlocking their encoding strength.
\end{abstract}
\maketitle
\textit{Introduction.}-- Perfect inversion of quantum channels, there by recovering all the encoded information, is a pivotal problem in quantum information science. It bears deep implication in the context of error correction \cite{nielsen1997reversible, nielesen1998information} and error-free encryption of quantum states \cite{ambainis2000private, boykin2003optimal, nayak2006invertible}. Indeed, the condition of error-correction singles out isometries as the only class of potential invertible channels \cite{nielsen2001quantum}. However, most general quantum channels, i.e, \textit{completely positive trace-preserving} (CPTP) maps, can be realized as a joint unitary on the system along with an ancillary system (referred as \textit{environment}) \cite{stinespring1955positive}. Since reversible dynamics conserve information, this raises an intriguing possibility of retrieving encoded information via nontrivial access to the environment.

While an implicit access to the environment was initially encountered in context correlated quantum channels \cite{macchiavello2002entanglement} and subsequently in \cite{bowen2004quantum, ball2004exploiting}, the first concrete framework was introduced by Gregoratti and Werner \cite{gregorotti2003quantum}. In this seminal result, they have demonstrated that even limited access---specifically, classical communication from a so-called \textit{friendly} environment---can enable perfect recovery of information in a broad class of channels.  This observation gave rise to the concept of environment-assisted capacity, revealing that classical coordination with an external environment can significantly enhance the capabilities of quantum channels. Access to the environment further boosted subsequent investigations in context of error correction \cite{gregoratti2004quantum}, decoherece mitigation \cite{buscemi2005inverting}, reliable transmission of quantum states \cite{karumanchi2016quantum, oskouei2021capacities, harraz2022quantum, harraz2023high}, interference of multiple quantum channels \cite{chiribella2019quantum, guha2023quantum, lai2024quick, saha2025interference}, along with their real-life implementations \cite{banaszek2004experimental,pirandola2021environment, rubino2021experimental, wang2025passive}.

Conventionally, the environment assisted classical capacity (EACC) of a quantum channel is estimated in terms of mutual information (MI) between the random variable, encoded in quantum systems, and the output random variable, obtained by performing a suitable measurement on the receiver-environment joint quantum states \cite{hayden2005correcting, winter2005environment, karumanchi2016classical}. With the limited access to the environment, the question of estimating the MI then boils down to discrimination of a set of orthogonal, possibly entangled, bipartite quantum states under limited measurement settings. This, in turn, renders sub-optimality of EACC for a channel, whenever such a discrimination of output states is not possible \cite{duan2009distinguishability, watrous2005bipartite}.

Here, we reformulate the notion of environment assisted classical communication from the perspectives of generalized information processing -- a framework of growing foundational and practical importance \cite{frenkel2015classical, patra2024classical, heinossari2024maximal, chiribella2025communication}. Within this framework numerous measures of classical communication, beyond mutual information, may emerge. In particular, we deal with a physically motivated one, corresponding to the cardinality of maximum numbers of classical symbols, transmitted reliably through a quantum channel. A simple yet crucial observation concerning this particular measure is that the classical communication under the assistance of environment empowers the reliable classical symbol transmission ability of the sender. More precisely, for quantum channels with higher input dimension (\(d_i\)) than the output one (\(d_o\)) the sender can reliably send only \(d_o\) number of symbols, while an assistance from environment can unlock the encoding strength further. We highlight the importance of this measure with an elegant example of a class of quantum channels possessing suboptimal EACC, yet their input-output statistics can only be simulated by a perfect quantum system of dimension identical to that of the channels' input. This concludes an optimal unlocking of encoding strength for these channels, however impossible to reveal with the conventional measure of EACC. More strikingly, instead of any assistance from environment, it is impossible to unlock the encoding strength optimally, even if the sender is allowed to additionally share any 2-input-2-output non-signaling correlations with the receiver. This, in turn, highlights an intriguing feature in context of communication through generalized broadcasting channels in quantum network. Finally, we extend our result for a class of quantum channels of arbitrary dimension, with suboptimal EACC, albeit depicting optimal unlocking of encoding strength with a minimal assistance from the environment.  
 
\textit{Generalized classical communication.}-- Consider a scenario where Alice and Bob two distant parties are connected via a quantum channel $\mathcal{N}:\mathcal{L}(\mathbb{C}^{d_A})\mapsto\mathcal{L}(\mathbb{C}^{d_B})$, where $\mathcal{L}(\mathbb{C}^{d_k})$ is the set of all linear operators acting on the complex Hilbert space \(\mathbb{C}^{d_k}\). For a given set of random variables $X=\{x_1,x_2,\cdots,x_n\}$, Alice encodes her information in the quantum states $\{\rho_i\}_{i=1}^n\in\mathcal{L}(\mathbb{C}^{d_A})$ and communicate to Bob through the channel $\mathcal{N}$. In each run, Bob is allowed to perform a generalized quantum measurement $\{\Lambda_k\}_{k=1}^m$ and accordingly extract the random variable $Y=\{y_1,y_2,\cdots,y_m\}$. The conditional probability over the random variables $Y$ given $X$ is then quantified as $p(y_j|x_i)=\text{Tr}[\Lambda_j\mathcal{N}(\rho_i)]$. Conventionally, the strength of classical communication in such a scenario is measured in terms of mutual information between $X$ and $Y$, given by
\begin{equation}\label{mi}
    \mathcal{C}^{(1)}(\mathcal{N})=\max_{\{p_x,\rho_x,\{\Lambda_y\}\}}I(X:Y),
\end{equation}
where, $\{p_x\}_{x=1}^n$ denotes the probability of the input random variable $X=\{x_i\}_{i=1}^n$. On the other hand,  collective decoding strategies further enhances the classical capacity of quantum channels to its Holevo quantity \cite{holevo1998capacity, schumacher1997sending}.

A more general approach towards classical communication considers the set of conditional probabilities $\{p(y_j|x_i)\}_{i=1,\dots,n}^{j=1,\dots,m}$, organized as an $n \times m$ row-stochastic matrix $[M(n,m)]_{i,j} = p(y_j|x_i)$, referred to as the \textit{channel matrix}. The set of such matrices achievable by transmitting isolated quantum states from $\mathcal{L}(\mathbb{C}^d)$ through a quantum channel $\mathcal{N}$ is denoted by $\mathcal{P}^{n\to m}(\mathcal{N}(\mathcal{Q}_d))$. For brevity, we denote $\mathcal{P}^{n\to m}(\mathcal{Q}_d)$ when $\mathcal{N}$ is the quantum identity channel. At this point, a key structural property of these sets becomes relevant.

\begin{proposition}\label{prop1}
    For any quantum channel \(\mathcal{N}:\mathcal{L}(\mathbb{C}^{d_A})\mapsto\mathcal{L}(\mathbb{C}^{d_B})\), and for all \(n,m\in\mathbb{N}\); \(\mathcal{P}^{n\to m}(\mathcal{N}(\mathcal{Q}_{d_A}))\subseteq \mathcal{P}^{n\to m}(\mathcal{Q}_d)\), where \(d=\min\{d_A,d_B\}\). 
\end{proposition}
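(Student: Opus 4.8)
The plan is to start from the defining form of an element of $\mathcal{P}^{n\to m}(\mathcal{N}(\mathcal{Q}_{d_A}))$: any such channel matrix has entries $M_{ij}=\Tr[\Lambda_j\,\mathcal{N}(\rho_i)]$ for some collection of input states $\{\rho_i\}_{i=1}^n\subset\mathcal{Q}_{d_A}$ and some $m$-outcome POVM $\{\Lambda_j\}_{j=1}^m$ acting on $\mathbb{C}^{d_B}$. The goal is to reproduce the \emph{same} matrix using states and a measurement living entirely on $\mathbb{C}^d$ with $d=\min\{d_A,d_B\}$. I would split the argument according to which of the two dimensions is smaller, handling the two cases by moving the statistics into the Schr\"odinger picture or the Heisenberg picture respectively.

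For the case $d_B\le d_A$, so that $d=d_B$, the output states $\sigma_i:=\mathcal{N}(\rho_i)$ are already genuine density operators on $\mathbb{C}^{d_B}=\mathbb{C}^d$. Keeping the very same POVM $\{\Lambda_j\}$ on $\mathbb{C}^d$ yields $M_{ij}=\Tr[\Lambda_j\,\sigma_i]$, which exhibits $M$ as an element of $\mathcal{P}^{n\to m}(\mathcal{Q}_d)$ directly; no further work is needed here.

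The substantive case is $d_A<d_B$, where $d=d_A$ and the output states are too large to use directly. Here I would pass to the adjoint (dual) map $\mathcal{N}^\dagger:\mathcal{L}(\mathbb{C}^{d_B})\to\mathcal{L}(\mathbb{C}^{d_A})$, defined by $\Tr[\mathcal{N}^\dagger(B)\,\rho]=\Tr[B\,\mathcal{N}(\rho)]$ for all $B,\rho$. Because $\mathcal{N}$ is completely positive, so is $\mathcal{N}^\dagger$, and because $\mathcal{N}$ is trace preserving, $\mathcal{N}^\dagger$ is unital, i.e. $\mathcal{N}^\dagger(I_{d_B})=I_{d_A}$. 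Setting $E_j:=\mathcal{N}^\dagger(\Lambda_j)$ then gives $E_j\ge 0$ together with $\sum_j E_j=\mathcal{N}^\dagger(\sum_j\Lambda_j)=\mathcal{N}^\dagger(I_{d_B})=I_{d_A}$, so $\{E_j\}_{j=1}^m$ is a bona fide POVM on $\mathbb{C}^{d_A}=\mathbb{C}^d$. The adjoint relation immediately rewrites the statistics as $M_{ij}=\Tr[\Lambda_j\,\mathcal{N}(\rho_i)]=\Tr[\mathcal{N}^\dagger(\Lambda_j)\,\rho_i]=\Tr[E_j\,\rho_i]$, with $\rho_i\in\mathcal{Q}_d$, placing $M$ in $\mathcal{P}^{n\to m}(\mathcal{Q}_d)$.

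Since both cases deposit $M$ in $\mathcal{P}^{n\to m}(\mathcal{Q}_d)$, the claimed inclusion follows. The only step requiring care---and the conceptual heart of the argument---is recognising that trace preservation of $\mathcal{N}$ is exactly unitality of $\mathcal{N}^\dagger$, which is what guarantees that transporting the output POVM back through the dual channel produces a valid POVM on the smaller input space rather than merely a set of positive operators. Everything else is bookkeeping.
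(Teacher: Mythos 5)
Your proof is correct, and it is half the same as the paper's and half genuinely different. For the case $d_A\le d_B$ the paper argues operationally: Alice sends the same $\rho_i$ through the identity channel $\mathcal{I}_{d_A}$, and Bob, before decoding, applies $\mathcal{N}$ himself and then measures $\{\Lambda_j\}$. Your dual-map construction $E_j:=\mathcal{N}^\dagger(\Lambda_j)$ is precisely the algebraic formalization of that step---complete positivity plus unitality of $\mathcal{N}^\dagger$ is exactly what certifies that ``apply $\mathcal{N}$, then measure'' is a single legitimate POVM on $\mathbb{C}^{d_A}$---so there the two arguments coincide in substance. The genuine divergence is the other case, $d_A>d_B$: the paper observes that $[P]_{ij}=\Tr[\Lambda_j\,\mathcal{N}(\rho_i)]$ is an inner product of $d_B\times d_B$ positive semidefinite matrices, concludes $\operatorname{rank}_{\operatorname{psd}}(P)\le d_B$, and then invokes a cited equivalence (a row-stochastic matrix of psd rank at most $d$ can be simulated by a $d$-level quantum system) to place $P$ in $\mathcal{P}^{n\to m}(\mathcal{Q}_{d_B})$. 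You bypass the psd-rank machinery entirely: since $\sigma_i:=\mathcal{N}(\rho_i)$ are already bona fide density operators on $\mathbb{C}^{d_B}$, Alice can prepare them directly, send them through the identity, and Bob reuses the same POVM. Your route is more elementary and self-contained, needing no external factorization theorem; the paper's route has the side benefit of establishing along the way the bound $\operatorname{rank}_{\operatorname{psd}}(P)\le d$, which the main text states immediately after Proposition~1 and which its later psd-rank lemmas build on. As a proof of the stated inclusion alone, yours is arguably the cleaner one.
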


While we defer the detailed proof to Appendix \ref{a1}, it is instructive to outline the core insight behind Proposition \ref{prop1}. The Proposition asserts that whatever input output statistics can be generated using the channel \(\mathcal{N}:\mathcal{L}(\mathbb{C}^{d_A})\mapsto\mathcal{L}(\mathbb{C}^{d_B})\), one can sufficiently simulate all of them by communicating an isolated \(d\)-level quantum system via a perfect identity channel. In other words, in classical communication setup, any such channel \(\mathcal{N}\) offers no advantage over quantum \(d\)-level identity channel.

On the other hand, any channel matrix $[M(n,m)]_{i,j} = p(y_j|x_i)$ can be achieved via communicating a minimum of $d$-level isolated quantum system, if and only if its positive semi definite (psd) rank is $d$ \cite{lee2014upper,heinossari2024maximal}. Formally, it is defined as,
\begin{definition}\label{def0}
    \cite{gouveia2013lifts} The psd rank of a non-negative matrix \(M\) of order \(n\times m\), denoted as \(\operatorname{rank}_{\operatorname{psd}}(M)\), is the smallest integer \(r\) such that there exist two sets of \(r \times r\) positive semidefinite matrices \(\{R_i\}_{i=1}^n\) and \(\{C_j\}_{j=1}^m\), such that:
\[
M_{ij} = \Tr(R_i C_j), \quad \forall ~i,j.
\]
\end{definition}
This along with Proposition \ref{prop1} implies that for every \(P\in\mathcal{P}^{n\to m}(\mathcal{N}(\mathcal{Q}_{d_A}))\), we have \(\text{rank}_{\text{psd}}(P)\le d\).

\textit{Environment assisted classical communication.}-- While the structure of a quantum channel is limited only up to system description, in a broader picture it can be visualized as an isometry from the operators acting on Hilbert space of the input system to that of the joint Hilbert space of the output system and the environment. Therefore, for any channel $\mathcal{N}:\mathcal{L}(\mathbb{C}^{d_A})\mapsto\mathcal{L}(\mathbb{C}^{d_B})$ between Alice and Bob, we can associate an isometry $\mathcal{V}_{\mathcal{N}}:\mathbb{C}^{d_A}\mapsto\mathbb{C}^{d_B}\otimes\mathbb{C}^{d_E}$, where $\mathbb{C}^{d_E}$ is the complex Hilbert space associated with the environment. The action of the isometry is connected with the action of the quantum channel in the following way:
\begin{small}
\begin{align*}
    \mathcal{N}(\rho_A)=\sigma_B\Leftrightarrow\mathcal{V}_{\mathcal{N}}(\rho_A)\mathcal{V}_{\mathcal{N}}^\dagger=\ket{\psi_{\sigma}}_{BE}\bra{\psi_{\sigma}},~\forall~\rho_A\in\mathcal{L}(\mathbb{C}^{d_A})
\end{align*}
\end{small}
such that $\text{Tr}_E(\ketbra{\psi_{\sigma}}{\psi_{\sigma}})=\sigma_B$.

Within this description, the degree of assistance from the environment can be characterized in various ways, depending upon the restrictions imposed on the measurement performed by the receiver and the environment \cite{winter2005environment}. In this context, we identify our scenario as the one with \textit{minimal} assistance, since the implementation of the decoding measurement does not require any additional communication between them. (Refer to the Appendix \ref{a2} for formal definitions.)

In any form of environment assistance for a quantum channel \(\mathcal{N}:\mathcal{L}(\mathbb{C}^{d_A})\mapsto\mathcal{L}(\mathbb{C}^{d_B})\), the set of modified channel matrices
\(\mathcal{P}_{\text{env}}^{n\to m}(\mathcal{N}(\mathcal{Q}_{d_A}))\), satisfies a trivial inclusion relation: \(\mathcal{P}^{n\to m}(\mathcal{N}(\mathcal{Q}_{d_A}))\subseteq \mathcal{P}_{\text{env}}^{n\to m}(\mathcal{N}(\mathcal{Q}_{d_A}))\). Environmental assistance is said to enhance communication utility if there exists at least one matrix $P \in \mathcal{P}^{n \to m}_{\text{env}}(\mathcal{N}(\mathcal{Q}_{d_A}))$ such that $P \notin \mathcal{P}^{n \to m}(\mathcal{N}(\mathcal{Q}_{d_A}))$. In such cases, some measure of classical communication—not necessarily mutual information—must increase under assistance (for details see appendix). For channels with $d_A > d_B$, however, environmental assistance may reveal a more fundamental benefit: enabling access to the full input encoding capacity. This advantage is not always captured by a specific capacity measure. Rather, a more generalized formalism of classical information processing, in terms of the input-output matrices, captures this notion. This leads to the following definition:
\begin{definition}\label{def1}
    Consider a channel \(\mathcal{N}:\mathcal{L}(\mathbb{C}^{d_A})\mapsto\mathcal{L}(\mathbb{C}^{d_B})\) with \(d_A > d_B\) , and \(d'\leq d_B\) is the minimum dimension of a quantum system, such that, \(\mathcal{P}^{n\to m}(\mathcal{N}(\mathcal{Q}_{d_A}))\subseteq \mathcal{P}^{n\to m}(\mathcal{Q}_{d'})\) for all \(n,m\in\mathbb{N}\). Environment assistance is said to unlock the encoding strength for \(\mathcal{N}\), whenever there exists a \(P\in\mathcal{P}_{\operatorname{env}}^{n^\prime\to m^\prime}(\mathcal{N}(\mathcal{Q}_{d_A}))\) but \(P\notin\mathcal{P}^{n^\prime\to m^\prime}(\mathcal{Q}_{d'})\). In other words, \(\operatorname{rank}_{\operatorname{psd}}(P)> d'\).
\end{definition}
\begin{figure}
    \centering
    \includegraphics[width=1.0\linewidth]{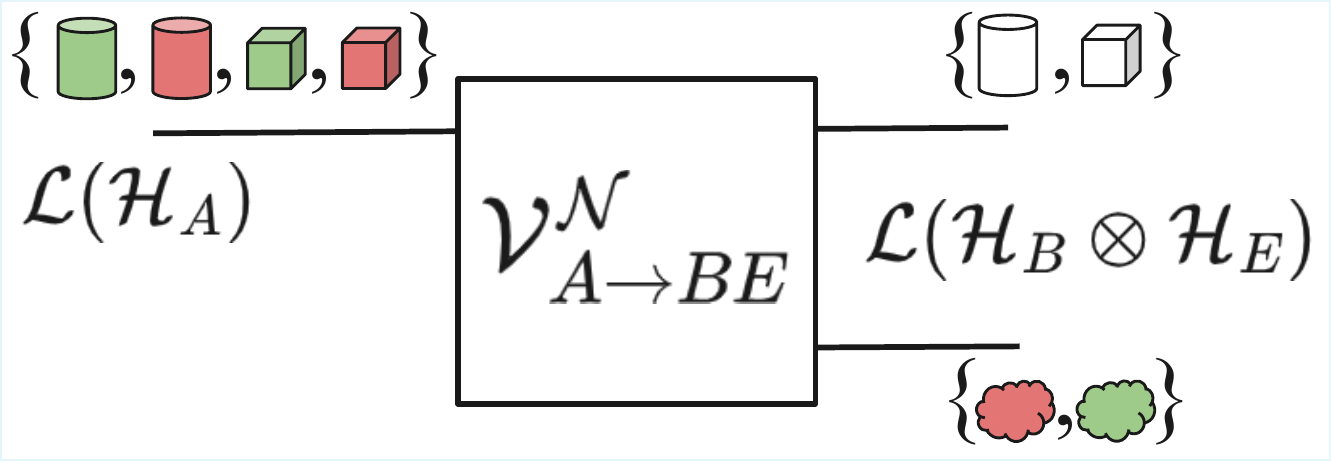}
    \caption{(Color online) A schematic diagram of unlocking the encoding strength with assistance of environment. Two binary information regarding the shape (\textit{cylinder/ cube}) and color (\textit{red/ green}) has been sent from Alice. Bob (environment) can only access the shape (color) information. Hence the channel \(\mathcal{N}_{A\to B}\), induced from the isometry \(\mathcal{V}^{\mathcal{N}}_{A\to BE}\), can only reliably communicate the shape information; However, with minimal assistance (the color information) from environment, Bob can reliably decode all the four possible information. Note that, while EACC captures the optimal unlocking of encoding strength here, this is not the case in general (see Theorem \ref{chanmatduan} and  \ref{gen}).}
    \label{fig0}
\end{figure}
With this definition, one can readily argue the following lemma:
\begin{lemma}\label{max}
    For any quantum channel $\mathcal{N}:\mathcal{L}(\mathcal{H}_A)\mapsto\mathcal{L}(\mathcal{H}_B)$ with \(d_A\ge d_B\), and for all \(n,m\in\mathbb{N}\), \(\mathcal{P}_{\operatorname{env}}^{n\to m}(\mathcal{N}(\mathcal{Q}_{d_A}))\subseteq \mathcal{P}^{n\to m}(\mathcal{Q}_{d_A})\), even with the complete access to the environment.
\end{lemma}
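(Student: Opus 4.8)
The plan is to reduce any environment-assisted decoding to an equivalent decoding acting directly on the $d_A$-level input, exploiting the fact that the Stinespring dilation is an isometry. First I would invoke the isometric representation $\mathcal{V}_{\mathcal{N}}:\mathbb{C}^{d_A}\mapsto\mathbb{C}^{d_B}\otimes\mathbb{C}^{d_E}$ introduced above, recalling that $\mathcal{V}_{\mathcal{N}}^\dagger\mathcal{V}_{\mathcal{N}}=I_{d_A}$. Complete access to the environment corresponds to allowing Bob and the environment to jointly implement an \emph{arbitrary} POVM $\{\Lambda_j\}_{j=1}^m$ on the full space $\mathbb{C}^{d_B}\otimes\mathbb{C}^{d_E}$, so that a generic entry of an assisted channel matrix reads $p(y_j|x_i)=\operatorname{Tr}[\Lambda_j\,\mathcal{V}_{\mathcal{N}}\rho_i\mathcal{V}_{\mathcal{N}}^\dagger]$ for inputs $\{\rho_i\}_{i=1}^n\in\mathcal{L}(\mathbb{C}^{d_A})$.

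The central step is to pull the measurement back through the isometry. Using cyclicity of the trace, I would rewrite each entry as $p(y_j|x_i)=\operatorname{Tr}[\tilde{\Lambda}_j\,\rho_i]$ with $\tilde{\Lambda}_j:=\mathcal{V}_{\mathcal{N}}^\dagger\Lambda_j\mathcal{V}_{\mathcal{N}}\in\mathcal{L}(\mathbb{C}^{d_A})$. It then remains to verify that $\{\tilde{\Lambda}_j\}_{j=1}^m$ is a legitimate POVM on the $d_A$-level input: positivity is immediate since conjugation by $\mathcal{V}_{\mathcal{N}}$ preserves positive semidefiniteness, and completeness follows from $\sum_j\tilde{\Lambda}_j=\mathcal{V}_{\mathcal{N}}^\dagger\big(\sum_j\Lambda_j\big)\mathcal{V}_{\mathcal{N}}=\mathcal{V}_{\mathcal{N}}^\dagger\mathcal{V}_{\mathcal{N}}=I_{d_A}$, where I use that $\{\Lambda_j\}$ resolves the identity and that $\mathcal{V}_{\mathcal{N}}$ is an isometry.

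With this in hand the conclusion is immediate: every entry $p(y_j|x_i)=\operatorname{Tr}[\tilde{\Lambda}_j\rho_i]$ is exactly the statistics generated by encoding the $d_A$-level states $\{\rho_i\}$ and decoding with the POVM $\{\tilde{\Lambda}_j\}$ through the \emph{perfect} identity channel. Hence the assisted channel matrix lies in $\mathcal{P}^{n\to m}(\mathcal{Q}_{d_A})$, establishing the claimed inclusion for all $n,m\in\mathbb{N}$; equivalently, reading off the factorization $\rho_i,\tilde{\Lambda}_j$ against Definition \ref{def0} yields the bound $\operatorname{rank}_{\operatorname{psd}}(P)\le d_A$ for any such $P$.

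I expect the only delicate point---rather than a genuine obstacle---to be the modelling assumption that full environmental assistance amounts to an arbitrary joint POVM on $B\otimes E$; once this is granted (it is the strongest form of assistance, subsuming the restricted variants defined in Appendix \ref{a2}), the pullback argument above needs nothing beyond the existence of the Stinespring isometry, which is always guaranteed. The $d_A\ge d_B$ hypothesis therefore merely fixes the regime of interest, and the resulting rank bound is saturated precisely when the encoding strength is optimally unlocked.
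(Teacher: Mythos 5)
Your proof is correct and follows essentially the same route as the paper: both reduce complete environmental assistance to an arbitrary joint POVM on the output of the Stinespring isometry and then exploit $\mathcal{V}_{\mathcal{N}}^{\dagger}\mathcal{V}_{\mathcal{N}}=\mathbb{I}_{d_A}$ to conclude that all resulting statistics are reproducible by a perfect $d_A$-level quantum channel. The only cosmetic difference is that the paper obtains the final inclusion by viewing $\mathcal{V}_{\mathcal{N}}$ as a rank-one channel and invoking Proposition \ref{prop1}, whereas you inline that step as an explicit POVM pullback $\tilde{\Lambda}_j=\mathcal{V}_{\mathcal{N}}^{\dagger}\Lambda_j\mathcal{V}_{\mathcal{N}}$, which is exactly what the simulation argument behind Proposition \ref{prop1} amounts to.
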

The proof follows directly from Proposition \ref{prop1}, however for sake of completeness, we detailed it in the Appendix \ref{a3}. With the help of Lemma \ref{max}, we can now formally define the optimal enhancement of encoding strength of a quantum channel under assistance of environment.
\begin{definition}\label{def2}
Any form of environment assistance is said to optimally unlock the encoding strength of \(\mathcal{N}:\mathcal{L}(\mathbb{C}^{d_A})\mapsto \mathcal{L}(\mathbb{C}^{d_B})\) (with \(d_A\geq d_B\)), if there exist a \(n,m\in\mathbb{N}\) such that \(\mathcal{P}_{\operatorname{env}}^{n\to m}(\mathcal{N}(\mathcal{Q}_{d_A}))\not\subseteq \mathcal{P}^{n\to m}(\mathcal{Q}_{\tilde d})\), for all \(\tilde d <d_A\).  
\end{definition}

The instance of achieving the optimal unlockable encoding strength, must be captured by a suitable measure of information processing, which need not to be the mutual information per se. Before revealing such phenomena, in the following we will characterize the scenario for which the traditional measure of EACC is sufficient to capture optimality (see Fig. \ref{fig0}), the proof of which is detailed in the Appendix \ref{a4}.     
\begin{lemma}\label{opt}
A quantum channel $\mathcal{N}:\mathcal{L}(\mathcal{H}_A)\mapsto\mathcal{L}(\mathcal{H}_B)$, reaches optimality of encoding strength, whenever its EACC is $\log d_A$. 
\end{lemma}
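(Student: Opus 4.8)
The plan is to relate the value of the environment-assisted mutual information directly to the psd rank of the achievable channel matrices, and then invoke Definition \ref{def2}. The starting point is the observation, from Lemma \ref{max}, that every $P\in\mathcal{P}_{\operatorname{env}}^{n\to m}(\mathcal{N}(\mathcal{Q}_{d_A}))$ already lies in $\mathcal{P}^{n\to m}(\mathcal{Q}_{d_A})$, and hence, by the psd-rank characterisation of quantum-simulable statistics (Definition \ref{def0}), satisfies $\operatorname{rank}_{\operatorname{psd}}(P)\le d_A$. Since the accessible information of an ensemble of $d_A$-dimensional states is bounded by $\log d_A$, this certifies that $\log d_A$ is the largest value the mutual information can attain under any environmental assistance; the hypothesis $\mathrm{EACC}=\log d_A$ says precisely that this ceiling is saturated.

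Next I would extract from the saturating strategy an explicit channel matrix. By hypothesis there is an encoding $\{\rho_i\}$, an environment-assisted decoding $\{\Lambda_j\}$, and an input distribution $\{p_i\}$ for which $I(X:Y)=\log d_A$; let $P^\star$ denote the corresponding matrix, so that $P^\star\in\mathcal{P}_{\operatorname{env}}^{n\to m}(\mathcal{N}(\mathcal{Q}_{d_A}))$ for suitable $n,m$.

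The crux is to show $\operatorname{rank}_{\operatorname{psd}}(P^\star)=d_A$. Suppose instead $\operatorname{rank}_{\operatorname{psd}}(P^\star)=\tilde d<d_A$. By the psd-rank characterisation, $P^\star$ can then be reproduced by transmitting a $\tilde d$-level quantum ensemble $\{\tilde\rho_i\}$ decoded by a POVM $\{\tilde\Lambda_j\}$, i.e. $P^\star_{ij}=\Tr(\tilde\rho_i\tilde\Lambda_j)$. The Holevo bound then forces $I(X:Y)\le\chi(\{p_i,\tilde\rho_i\})\le S(\bar\rho)\le\log\tilde d<\log d_A$ for every input distribution, contradicting $I(X:Y)=\log d_A$. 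Hence $\operatorname{rank}_{\operatorname{psd}}(P^\star)\ge d_A$, and combined with the upper bound $\operatorname{rank}_{\operatorname{psd}}(P^\star)\le d_A$ from Lemma \ref{max} we conclude $\operatorname{rank}_{\operatorname{psd}}(P^\star)=d_A$.

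Finally, because any matrix in $\mathcal{P}^{n\to m}(\mathcal{Q}_{\tilde d})$ has psd rank at most $\tilde d$, the matrix $P^\star$ of psd rank $d_A$ cannot belong to $\mathcal{P}^{n\to m}(\mathcal{Q}_{\tilde d})$ for any $\tilde d<d_A$. This is exactly the condition in Definition \ref{def2}, so the environmental assistance optimally unlocks the encoding strength of $\mathcal{N}$. The main obstacle is the crux step: rigorously tying the numerical value of the mutual information to the psd rank, for which the accessible-information/Holevo bound $\left(\operatorname{rank}_{\operatorname{psd}}(P)=\tilde d\Rightarrow I(X:Y)\le\log\tilde d\right)$ is the essential tool; everything else is bookkeeping through the inclusions already established.
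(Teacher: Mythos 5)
Your proof is correct, and it reaches the conclusion by a genuinely different key step than the paper. The paper reads the hypothesis $\mathrm{EACC}=\log d_A$ as reliable transmission of $d_A$ symbols, so its witness matrix is explicitly the identity, $P=\mathbb{I}_{d_A}\in\mathcal{P}^{d_A\to d_A}_{\mathrm{env}}(\mathcal{N}(\mathcal{Q}_{d_A}))$; it then lower-bounds the psd rank of this specific matrix combinatorially, via the max-monotone $\Lambda_{\max}$ of Eqs.~(\ref{el21})--(\ref{el22}), obtaining $\operatorname{rank}_{\operatorname{psd}}(\mathbb{I}_{d_A})=d_A$ and finishing through Definition~\ref{def2}. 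You instead take an arbitrary matrix $P^\star$ realizing the mutual-information value $\log d_A$ and rule out $\operatorname{rank}_{\operatorname{psd}}(P^\star)=\tilde d<d_A$ by applying the Holevo bound to the $\tilde d$-dimensional ensemble that the psd-rank characterisation provides. Both routes rest on the same two external facts---attainment of the EACC maximum by some concrete strategy, and the equivalence between psd rank $\le d$ and simulability by a $d$-level system (stated after Definition~\ref{def0})---and both conclude identically via Definition~\ref{def2}. What your route buys: you never need to convert the capacity hypothesis into \emph{perfect} single-shot transmission (i.e.\ into the identity matrix), only into a strategy whose mutual information equals $\log d_A$, and you make explicit the clean general principle $I(X:Y)\le\log\operatorname{rank}_{\operatorname{psd}}(P)$, which shows that psd rank caps any MI-based capacity. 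What the paper's route buys: once the identity matrix is extracted, the argument is purely combinatorial---no entropy inequalities---and the max-monotone bound $\operatorname{rank}_{\operatorname{psd}}(M)\ge\Lambda_{\max}(M)$ it introduces is reused elsewhere in the paper (Lemma~\ref{chanmatclas}, Proposition~\ref{prop2}), so the paper's proof doubles as setup for that later machinery.
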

\noindent
This, in other words, states that for a quantum channel if (any form of) environment assistance unlocks its encoding strength optimally, then the generalized communication set-up can not further highlight any additional feature. In the following we will note few of such known quantum channels.
\begin{example}\label{ex1}
\cite{duan2009distinguishability,song2024existence} For any quantum channel (i) $\mathcal{N}:\mathcal{L}(\mathbb{C}^{d_A})\mapsto\mathcal{L}(\mathbb{C}^{d_B})$, with \(d_E(\le3)\) dimensional environment, and (ii) $\mathcal{N}:\mathcal{L}(\mathbb{C}^{d_A})\mapsto\mathcal{L}(\mathbb{C}^{d_B})$, where \(d_B\leq 3\), the encoding strength is unlocked optimally under the environment assistance. 
\end{example}
\noindent
Example \ref{ex1}, in line with Lemma \ref{opt}, demonstrate that the encoding strength is fully unlocked whenever the channel achieves its EACC. One might expect the converse to hold as well. Strikingly, this intuition fails. In the following, we construct a class of channels that defy this equivalence and thereby establish a genuine counterexample.  
\begin{lemma}\label{seventothree}
Consider an uncountable set of isometries \(\mathrm{S}\)
\begin{align}\label{duanspace}
\mathrm{S}:=\{\mathcal{V}_{7}&\mid\mathcal{V}_{7}:\mathbb{C}^7\mapsto\mathbb{C}^3\otimes\mathbb{C}^3\nonumber\\
&\&~\operatorname{Range}(\mathcal{V}_7)\perp\{\ket{\phi^+_3},\ket{i}\otimes\ket{j\neq i}\}
\end{align} 
where $\ket{\phi^+_3}=\frac{1}{\sqrt{3}}\sum_{k=0}^2\ket{k}\otimes\ket{k}$ and $i,j\in\{0,1,2\}$. 

EACC of all the channels \(\mathcal{N}^{\mathcal{V}}_7:\mathcal{L}(\mathbb{C}^7)\mapsto\mathcal{L}(\mathbb{C}^3)\), induced from each isometry \(\mathcal{V}_7\) is suboptimal, even if the decoding measurements of both the receiver and environment are separable super-operators (SEP).
\end{lemma}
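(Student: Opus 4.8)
The plan is to reduce the capacity claim to a state‑discrimination problem and then extract a structural obstruction from the entanglement of the complementary subspace. Write $V:=\operatorname{Range}(\mathcal{V}_7)\subseteq\mathbb{C}^3\otimes\mathbb{C}^3$, a $7$‑dimensional subspace whose orthogonal complement $V^\perp$ is the $2$‑dimensional space containing the maximally entangled $\ket{\phi^+_3}$, and recall that the induced channel $\mathcal{N}^{\mathcal{V}}_7$ is $\mathcal{V}_7$ followed by tracing out the environment factor $E$. Since $\mathcal{V}_7$ is an isometry, orthonormal inputs map to orthonormal signal states in $V$. In the accessible‑information picture of Eq.~(\ref{mi}), attaining the optimal value $\log 7$ with a separable decoding of $B\otimes E$ forces $H(X\mid Y)=0$ for a uniform $7$‑valued $X$, hence the existence of seven perfectly SEP‑distinguishable signal states; being mutually orthogonal and lying in the $7$‑dimensional $V$, they must form an orthonormal basis of $V$. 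Thus the lemma is equivalent to the statement that \emph{no orthonormal basis of $V$ is perfectly discriminable by separable operations.}

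Next I would invoke the standard necessary condition for perfect discrimination by separable operations \cite{duan2009distinguishability}. If $\{\ket{\psi_i}\}_{i=1}^7$ is an orthonormal basis of $V$ and $\{M_i\}$ a separable POVM achieving perfect discrimination, then each $M_i\succeq 0$ must annihilate every $\ket{\psi_j}$ with $j\neq i$, so $M_i$ is supported on $W_i:=\operatorname{span}\{\ket{\psi_i}\}\oplus V^\perp$, a three‑dimensional space. A nonzero separable operator is a conic combination of product projectors whose ranges lie inside that of the operator; consequently $W_i$ must contain a product vector $\ket{a_i}\otimes\ket{b_i}$ with $\langle\psi_i|a_i,b_i\rangle\neq 0$ (otherwise $\langle\psi_i|M_i|\psi_i\rangle=0$ and the $i$‑th symbol is never flagged). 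This produces a single clean geometric requirement that must hold \emph{simultaneously for all seven members} of the basis.

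The core of the argument is to show this requirement is unsatisfiable for every member of $\mathrm{S}$. Because $V^\perp$ contains $\ket{\phi^+_3}$ and (as one verifies from the construction) no product vectors, it is a completely entangled subspace, so every product vector $\ket{a,b}\in W_i$ obeys $P_V\ket{a,b}\neq 0$ and in fact $P_V\ket{a,b}\parallel\ket{\psi_i}$. The admissible directions are therefore exactly the image of the Segre variety $\Sigma=\{\ket{a}\otimes\ket{b}\}$ under $f:\Sigma\to\mathbb{P}(V)$, $\ket{a,b}\mapsto[P_V\ket{a,b}]$, and the necessary condition demands an orthonormal basis of $V$ lying entirely in the cone over $f(\Sigma)$. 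Since $\dim_{\mathbb{C}}\Sigma=4<6=\dim_{\mathbb{C}}\mathbb{P}(V)$, the image is a proper subvariety; I would then show, using the rigid alignment forced by the fixed direction $\ket{\phi^+_3}\in V^\perp$, that $f(\Sigma)$ admits no seven mutually orthogonal points, so the condition fails for every basis. This pins the SEP‑assisted accessible information strictly below $\log 7$, in contrast with the optimal unlocking certified by the generalized channel‑matrix analysis.

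The main obstacle is precisely this last algebraic‑geometric step: a bare dimension count is insufficient, since a four‑dimensional variety may a priori contain many mutually orthogonal points. I would therefore need the finer structure of $f(\Sigma)$—in particular how the fixed maximally entangled direction constrains which $\ket{\psi_i}\in V$ can carry a product vector in $W_i$—to exclude a full orthonormal septuple uniformly over the entire uncountable family, leaning on the explicit separable‑indistinguishability result of \cite{duan2009distinguishability}. The reduction and the necessary condition are routine; establishing this simultaneous incompatibility for all $\mathcal{V}_7\in\mathrm{S}$ at once is where the real difficulty lies.
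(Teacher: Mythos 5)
Your reduction is sound and is essentially the step the paper leaves implicit: single-shot EACC equal to $\log 7$ under SEP decoding would force seven perfectly SEP-distinguishable output states which, being mutually orthogonal and supported in the $7$-dimensional $V=\operatorname{Range}(\mathcal{V}_7)$, must form an orthonormal basis of $V$; and the necessary condition you extract from the separable-POVM structure (every product vector in a decomposition of $M_i$ must lie in $W_i=\operatorname{span}\{\ket{\psi_i}\}\oplus V^{\perp}$, and at least one must overlap $\ket{\psi_i}$) is correct. However, your route to the conclusion contains a factual error and then stops exactly where the proof has to start. The error: $V^{\perp}$ is \emph{not} completely entangled. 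By the very construction of $\mathrm{S}$ (made explicit by the basis in Appendix \ref{a6}), $V^{\perp}=\operatorname{span}\{\ket{\phi^+_3},\ket{i}\otimes\ket{j}\}$ with $i\neq j$, which contains a product vector by fiat; this is precisely what distinguishes the present family from the Watrous-type subspaces (complement of $\ket{\phi^+_d}$ alone) used in Lemma \ref{d2-1tod}, so any ``rigid alignment'' argument premised on a completely entangled complement starts from a false hypothesis. For instance, for the basis of Appendix \ref{a6}, the only product vector in $W_6=\operatorname{span}\{\ket{\psi_6}\}\oplus V^{\perp}$ is $\ket{01}$ itself, which has zero overlap with $\ket{\psi_6}$ --- so the admissible product vectors are governed by the product direction sitting inside $V^\perp$, not by its absence.

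The decisive gap is the second one: the claim that no orthonormal basis of $V$ can satisfy the product-vector condition simultaneously in all seven members, uniformly over the uncountable family, is the entire mathematical content of the lemma, and your proposal explicitly defers it (``where the real difficulty lies''). A dimension count on the Segre variety cannot close this, as you yourself note, and no finer argument is supplied. The paper does not prove this statement either; it \emph{cites} it: the fact invoked from Duan \emph{et al.}\ \cite{duan2009distinguishability} is precisely that $\operatorname{Range}(\mathcal{V}_7)$ admits no basis perfectly distinguishable by separable operations, established there via their necessary-and-sufficient condition for SEP discrimination. So your write-up should either invoke that theorem directly (after which the inclusion $\mathcal{P}^{n\to m}_{\min}(\mathcal{N}(\mathcal{Q}_{d_A}))\subseteq\mathcal{P}^{n\to m}_{\mathrm{SEP}}(\mathcal{N}(\mathcal{Q}_{d_A}))$ settles the minimal-assistance consequence, a point you omit), or actually carry out the uniform exclusion argument, which it currently does not. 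As it stands, the core of the lemma remains unproven in your proposal.
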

\begin{proof}
    The proof follows from the fact that $\text{Range}(\mathcal{V}_7)$, for all $\mathcal{V}_7$, is indistinguishable under the SEP measurement implemented by the parties \cite{duan2009distinguishability}.

    Moreover, one can trivially argue using the set inclusion relation for different degrees of environment assistance that under minimal one the EACC of every \(\mathcal{N}^{\mathcal{V}}_7\) is strictly less than \(\log 7\)-bits.
\end{proof}
\noindent
In other words, in single shot regime no more than $6$ symbols can be sent perfectly through the class of channels $\mathcal{N}_7$. In the following theorem, we will establish that although EACC of these channels are suboptimal, in practice, these channels achieve optimality under minimal assistance of environment. 
\begin{theorem}\label{chanmatduan}
    All channels $\mathcal{N}^{\mathcal{V}}_7$ achieves optimal encoding strength with minimal assistance of environment.
\end{theorem}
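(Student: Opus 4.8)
The plan is to exhibit, for each isometry $\mathcal{V}_7\in\mathrm{S}$, an explicit input-output channel matrix $P\in\mathcal{P}_{\operatorname{env}}^{n\to m}(\mathcal{N}^{\mathcal{V}}_7(\mathcal{Q}_7))$ with $\operatorname{rank}_{\operatorname{psd}}(P)=7$, since by Definition \ref{def2} optimal unlocking for a $7\to 3$ channel means escaping $\mathcal{P}^{n\to m}(\mathcal{Q}_{\tilde d})$ for every $\tilde d<7$, equivalently producing statistics of psd-rank $7$. First I would fix $n=7$ and encode the random variable $X=\{x_1,\dots,x_7\}$ into seven pure states $\{\ket{\mu_i}\}$ spanning $\operatorname{Range}(\mathcal{V}_7)\subset\mathbb{C}^3\otimes\mathbb{C}^3$, namely $\ket{\mu_i}=\mathcal{V}_7\ket{i}$. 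Because $\mathcal{V}_7$ is an isometry these are orthonormal, so $\{\ketbra{\mu_i}{\mu_i}\}_{i=1}^7$ are seven mutually orthogonal pure states in the joint receiver-environment space $BE$.

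The key step is to observe that minimal environment assistance allows the receiver $B$ and the environment $E$ to jointly implement \emph{any} global POVM on $BE$ whose implementation needs no communication between them beyond what the minimal-assistance model permits; in particular they may perform the projective measurement $\{\ketbra{\mu_i}{\mu_i}\}_{i=1}^7$ together with the complementary projector onto the orthocomplement. Since the seven states are orthonormal, this measurement discriminates them perfectly: choosing $m=7$ outcomes $Y=\{y_1,\dots,y_7\}$ with $\Lambda_j=\ketbra{\mu_j}{\mu_j}$ yields the channel matrix $P_{ij}=\Tr[\Lambda_j\,\mathcal{V}_7\rho_i\mathcal{V}_7^\dagger]=\delta_{ij}$. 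This is the $7\times 7$ identity matrix, whose psd rank is exactly $7$. I would then invoke the characterization recalled after Definition \ref{def0}, namely that a channel matrix is simulable by an isolated $d$-level system if and only if its psd rank is at most $d$; since $\operatorname{rank}_{\operatorname{psd}}(\mathbbm{1}_7)=7>\tilde d$ for all $\tilde d<7$, the matrix $P$ lies outside $\mathcal{P}^{7\to 7}(\mathcal{Q}_{\tilde d})$ for every $\tilde d<7$, which is precisely the condition in Definition \ref{def2}.

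The main obstacle is not the psd-rank computation, which is immediate for the identity, but justifying that the discriminating measurement $\{\ketbra{\mu_j}{\mu_j}\}$ is \emph{admissible under minimal assistance}. The subtlety is that Lemma \ref{seventothree} shows these same states cannot be distinguished by SEP operations, yet perfect discrimination requires an entangled (global) projective measurement on $BE$; I must therefore argue that the minimal-assistance model (as formalized in Appendix \ref{a2}) grants the joint $BE$ system access to exactly such global projectors without additional rounds of classical communication, so that no contradiction with the SEP-indistinguishability of Lemma \ref{seventothree} arises. Concretely, I would point out that minimal assistance permits a single joint decoding measurement on the full output $BE$ --- it is the \emph{locality/communication} restriction, not the globality of the measurement, that defines the model --- whereas the EACC lower bound of Lemma \ref{seventothree} is obstructed only because the induced bipartite states resist the \emph{SEP-restricted} decoding used to evaluate mutual-information-based capacity. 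Thus the orthonormality of $\{\ket{\mu_i}\}$, guaranteed by $\mathcal{V}_7$ being an isometry, furnishes perfect symbol transmission at the level of the generalized (channel-matrix) measure while leaving the conventional EACC suboptimal, establishing the claimed separation.
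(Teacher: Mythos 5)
Your construction fails at exactly the step you flag as the ``main obstacle,'' and it fails because you resolve that obstacle in the wrong direction. The minimal-assistance model, as formalized in Appendix \ref{a2} (Eq. \eqref{em3}), allows \emph{only} product measurements: the channel matrix entries must have the form $\Tr[(\Lambda_l^{B}\otimes\Lambda_k^{E})\,\mathcal{V}_{\mathcal{N}}(\rho_i)]$ with a classical post-processing function $j=f(k,l)$. The receiver and the environment are spatially separated parties; ``minimal'' refers to the absence of communication between them even for coordinating adaptive local measurements, and it certainly does not grant them a joint entangled projective measurement on $\mathbb{C}^3\otimes\mathbb{C}^3$. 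Indeed, the hierarchy in Fig. \ref{fig1} places $\mathcal{P}^{n\to m}_{\min}$ strictly inside the LOCC-, SEP-, and PPT-assisted sets, all of which exclude global measurements. Your proposed measurement $\{\ketbra{\mu_j}{\mu_j}\}_{j=1}^7$ onto the (partly entangled) basis of $\operatorname{Range}(\mathcal{V}_7)$ is not admissible, so the identity matrix $\mathbbm{1}_7$ does \emph{not} lie in $\mathcal{P}^{7\to 7}_{\min}(\mathcal{N}^{\mathcal{V}}_7(\mathcal{Q}_7))$. In fact, if it did, the EACC under minimal assistance would equal $\log 7$, directly contradicting Lemma \ref{seventothree}, which asserts suboptimality even under the strictly larger class of SEP decodings. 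Your attempted reconciliation --- that Lemma \ref{seventothree} only obstructs ``SEP-restricted'' decoding while minimal assistance is unrestricted in measurement globality --- inverts the actual hierarchy.

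The paper's proof is designed precisely to live inside this constraint: it never achieves perfect discrimination (which is impossible here), but instead has Bob and the environment each perform a fixed local computational-basis measurement and classically combine outcomes. With the specific basis choice for $\operatorname{Range}(\mathcal{V}_7)$ --- five anti-correlated product states $\ket{02},\ket{10},\ket{12},\ket{20},\ket{21}$ and two entangled states supported on correlated pairs --- anti-correlated outcomes identify the symbol exactly, while correlated outcomes are post-processed probabilistically. This yields not $\mathbbm{1}_7$ but the noisy matrix $M_7(p)$ of Lemma \ref{chanmatclas}, whose psd rank is nonetheless $7$ for all $p>0$ (via the block decomposition $\mathbb{I}_5\oplus\mathbb{P}(p)$ and the max-monotone bound). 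The conceptual content of the theorem is exactly this: full psd rank, hence optimal unlocking, is achievable \emph{without} perfect distinguishability, which is why the conventional EACC stays suboptimal while the encoding strength is optimally unlocked. A correct proof must produce a full-psd-rank matrix that is reachable by product measurements plus post-processing; your route, by assuming global measurements, erases the very separation the theorem is about.
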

\noindent
We detailed the proof in Appendix \ref{a6}, for which the following lemma will be instrumental. The proof of the lemma is also detailed in the Appendix \ref{a5}.  
\begin{lemma}\label{chanmatclas}
Consider a channel matrix $M_{7}(p)$, such that
\begin{eqnarray}\label{chanmat}
   \nonumber[M_7(p)]_{i,j}&=&\delta_{ij},\text{ when }i,j\leq5,\\
   \nonumber[M_7(p)]_{6,6}&=&p\text{ and }[M_7]_{6,7}=1-p\\
   \text{and }[M_7(p)]_{7,6}&=&\frac p3\text{ and }[M_7]_{7,7}=\frac{3-p}3
\end{eqnarray}
where, $0\leq p \leq 1$.
The PSD rank of \(M_7(p)\) is \(7\) for all \(p>0\).
\end{lemma}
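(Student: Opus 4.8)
The plan is to prove the nontrivial lower bound $\operatorname{rank}_{\operatorname{psd}}(M_7(p)) \ge 7$; the reverse inequality is immediate, since $M_7(p)$ has only seven columns. First I would record the block structure $M_7(p) = I_5 \oplus B(p)$, where $B(p)=\left(\begin{smallmatrix} p & 1-p \\ p/3 & 1-p/3\end{smallmatrix}\right)$ has $\det B(p) = 2p/3$, so that $\operatorname{rank}(B(p)) = 2$ precisely for $p>0$. The whole argument rests on the standard support lemma for PSD factorizations: if $M_{ij} = \Tr(R_i C_j)$ with $R_i,C_j\succeq 0$ and $M_{ij}=0$, then $R_i C_j = 0$, hence $\operatorname{Range}(R_i)\perp\operatorname{Range}(C_j)$. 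Fixing any such factorization in dimension $r:=\operatorname{rank}_{\operatorname{psd}}(M_7(p))$, the zero pattern of $M_7(p)$ then yields $\operatorname{Range}(R_i)\perp\operatorname{Range}(C_j)$ whenever $i\ne j$ and $\min\{i,j\}\le 5$.

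Second, I would extract a five-dimensional subspace from the first five columns. Since $[M_7(p)]_{ii}=1>0$ for $i\le 5$, the range of $C_i$ is not orthogonal to that of $R_i$, so I can pick $u_i\in\operatorname{Range}(C_i)$ whose projection $r_i$ onto $\operatorname{Range}(R_i)$ is nonzero. For $j\ne i$ (with $i\le 5$) the orthogonality $\operatorname{Range}(R_i)\perp\operatorname{Range}(C_j)$ gives $\langle r_i,u_j\rangle = 0$, while $\langle r_i,u_i\rangle = \|r_i\|^2>0$. This diagonal pairing forces $u_1,\dots,u_5$ to be linearly independent, so $U:=\operatorname{span}\{u_1,\dots,u_5\}$ satisfies $\dim U = 5$.

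Third --- and this is the key step --- I would show that columns $6,7$ contribute two further dimensions orthogonal to $U$. The zero entries in rows $6,7$ and columns $\le 5$ give $\operatorname{Range}(R_6),\operatorname{Range}(R_7)\perp\operatorname{Range}(C_j)$ for all $j\le 5$, hence $\operatorname{Range}(R_6),\operatorname{Range}(R_7)\subseteq U^\perp$. Letting $Q$ denote the projector onto $U^\perp$, the support condition gives $R_a = Q R_a Q$ for $a\in\{6,7\}$, so $\Tr(R_a C_b)=\Tr(R_a\tilde C_b)$ with $\tilde C_b:=Q C_b Q\succeq 0$ for $b\in\{6,7\}$. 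Thus $B(p)_{ab}=\Tr(R_a\tilde C_b)$ is a PSD factorization of the $2\times 2$ block whose column factors are supported on $U^\perp$. If $\operatorname{Range}(\tilde C_6)+\operatorname{Range}(\tilde C_7)$ were at most one-dimensional, then $\tilde C_6,\tilde C_7$ would be nonnegative multiples of a single rank-one projector, making $[\Tr(R_a\tilde C_b)]_{a,b}=B(p)$ have rank at most one --- contradicting $\operatorname{rank}(B(p))=2$. Hence the projection onto $U^\perp$ of $\operatorname{Range}(C_6)+\operatorname{Range}(C_7)$ has dimension at least $2$, and therefore $r\ge \dim U + 2 = 7$.

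Finally I would stress that the argument is uniform in $p$: it invokes only $\operatorname{rank}(B(p))=2$, which holds for every $p>0$ and fails exactly at $p=0$, consistent with the stated range. The main obstacle is the third step --- converting the rank-two-ness of the small block into two genuinely \emph{new} directions orthogonal to the five already found --- and the projection trick $\tilde C_b = Q C_b Q$ resolves it cleanly, without requiring the individual entries of $B(p)$ to be positive.
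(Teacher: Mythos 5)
Your proof is correct, and it takes a genuinely different route from the paper's. The paper argues modularly: it writes $M_7(p)=\mathbb{I}_5\oplus\mathbb{P}(p)$ (your $B(p)$ is the paper's $\mathbb{P}(p)$), invokes the block-additivity theorem for psd rank of Fawzi et al.\ (Lemma \ref{sl1}) to get $\operatorname{rank}_{\operatorname{psd}}(M_7(p))=5+\operatorname{rank}_{\operatorname{psd}}(\mathbb{P}(p))$, and then bounds the $2\times2$ block from below by the max-monotone $\Lambda_{\max}$ of Eqs.~(\ref{el21})--(\ref{el22}); for the matching upper bound it gives an explicit seven-dimensional preparation-and-measurement simulation, where you simply note that the psd rank is at most the number of columns. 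You instead prove the lower bound from first principles via the support lemma ($\Tr(R_iC_j)=0\Rightarrow R_iC_j=0$): the identity block produces five biorthogonally paired directions spanning $U$, the zero off-diagonal blocks push $R_6,R_7$ and the compressions $\tilde C_b=QC_bQ$ into $U^\perp$, and $\det B(p)=2p/3\neq0$ forces two further dimensions there; in effect you re-derive, self-contained, exactly the instance of block additivity that is needed, replacing the $\Lambda_{\max}$ bound by an ordinary-rank argument. The paper's route buys modularity and reuse (Lemma \ref{sl1} also drives the arbitrary-dimension Theorem \ref{gen}), while your route buys robustness: $\Lambda_{\max}$ as defined in Eq.~(\ref{el21}) sums \emph{row} maxima of a row-stochastic matrix, and in that form it is not a valid psd-rank lower bound --- at $p=0$ it evaluates to $2$ although $\operatorname{rank}_{\operatorname{psd}}(\mathbb{P}(0))=1$; the bound of Ref.~\cite{lee2014upper} sums \emph{column} maxima, which yields $1+2p/3$ and repairs the paper's estimate exactly for $p>0$. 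Your determinant criterion sidesteps this issue altogether and degenerates precisely at $p=0$, in agreement with the lemma's hypothesis $p>0$.
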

\textit{From channel matrix to operational task.}--  
While at its face Theorem \ref{chanmatduan} looks like a mathematical artifact, every channel matrix has potential operational significance. More precisely, in a classical information processing task, the success pay-off can be interpreted as a function which maps the set of channel matrices to real numbers. As an illustration, consider the trace of a square channel matrix. Operationally this captures the accuracy with which a receiver can correctly identify the sender’s message; the more the trace, the higher the guessing probability of the receiver. Hence, the maximum trace over all the channel matrices generated by the channel \(\mathcal{N}\), can be interpreted as the \textit{Classical Transmission Fidelity} (\(\mathcal{F}_{c}\)) of the channel \(\mathcal{N}\), which reads
\[\mathcal{F}_{c}(\mathcal{N}):=\max_{n\in\mathbb{N}}~\max_{P\in\mathcal{P}^{n\to n}(\mathcal{N})}\tr[P]\] 
This readily gives an operational significance of Theorem \ref{chanmatduan}, which further obeys the following relation: 
\begin{proposition}\label{prop2}
   For every \(\mathcal{N}_7^{\mathcal{V}}\) channel, under the minimal assistance from environment, 
   \[\mathcal{F}_c^{env}(\mathcal{N}_7^{\mathcal{V}})>\mathcal{F}_c(\mathcal{Q}_6).\]
\end{proposition}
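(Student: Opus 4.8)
The plan is to establish the strict inequality by bounding each side separately. First I would compute the right-hand side $\mathcal{F}_c(\mathcal{Q}_6)$, the classical transmission fidelity of the perfect $6$-dimensional identity channel. Since $\mathcal{Q}_6$ can reliably transmit exactly $6$ orthogonal symbols, any square channel matrix $P\in\mathcal{P}^{n\to n}(\mathcal{Q}_6)$ has $\operatorname{rank}_{\operatorname{psd}}(P)\le 6$. For a $6$-dimensional system the trace of any such matrix is maximized by the $6\times 6$ identity, giving $\mathcal{F}_c(\mathcal{Q}_6)=6$; more generally, one shows $\tr[P]\le \operatorname{rank}_{\operatorname{psd}}(P)\le 6$ for all admissible $P$, since a channel matrix with psd rank $d$ cannot have trace exceeding $d$ (each diagonal entry is a probability bounded by $1$, and no more than $d$ of them can simultaneously be close to $1$ while maintaining stochasticity and the psd-rank constraint).

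Next I would bound the left-hand side from below by exhibiting an explicit environment-assisted channel matrix with trace strictly greater than $6$. Here I would invoke Lemma \ref{chanmatclas}: the matrix $M_7(p)$ has $\operatorname{rank}_{\operatorname{psd}}=7$ for every $p>0$, and by Theorem \ref{chanmatduan} the channels $\mathcal{N}_7^{\mathcal{V}}$ achieve optimal encoding strength under minimal environmental assistance, meaning $M_7(p)\in\mathcal{P}_{\operatorname{env}}^{7\to 7}(\mathcal{N}_7^{\mathcal{V}}(\mathcal{Q}_7))$ is realizable. Reading off the diagonal of $M_7(p)$ from \eqref{chanmat}, the trace equals $5 + p + \tfrac{3-p}{3} = 6 + \tfrac{2p}{3}$, which for any $p>0$ strictly exceeds $6$. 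Hence
\[
\mathcal{F}_c^{env}(\mathcal{N}_7^{\mathcal{V}})\;\ge\;\tr[M_7(p)]\;=\;6+\frac{2p}{3}\;>\;6\;=\;\mathcal{F}_c(\mathcal{Q}_6),
\]
establishing the claim.

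The main obstacle I anticipate is rigorously justifying the upper bound $\tr[P]\le 6$ for \emph{all} square channel matrices arising from $\mathcal{Q}_6$, including those of arbitrary size $n\times n$ with $n>6$. While intuitively a $6$-level quantum system cannot perfectly distinguish more than $6$ messages, the trace functional over rectangular-truncated-to-square stochastic matrices requires care: one must argue that embedding the constraint $\operatorname{rank}_{\operatorname{psd}}(P)\le 6$ forces $\sum_i P_{ii}\le 6$. The cleanest route is to note that $P_{ii}=\tr(R_i C_i)$ with $6\times 6$ psd factors, and that the row-stochasticity $\sum_j \tr(R_i C_j)=1$ together with a resolution-of-identity-type normalization on the $\{C_j\}$ caps the sum of diagonal terms; alternatively, one appeals directly to the operational fact that the maximal guessing probability summed appropriately cannot exceed the Hilbert-space dimension. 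Once this dimensional bound is secured, the strict separation follows immediately from the explicit $6+\tfrac{2p}{3}$ trace on the assisted side.
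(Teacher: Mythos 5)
Your proposal is correct and follows essentially the same two-sided argument as the paper: lower-bound \(\mathcal{F}_c^{env}(\mathcal{N}_7^{\mathcal{V}})\) by \(\tr[M_7(p)]=6+\tfrac{2p}{3}>6\) using the realizability of \(M_7(p)\) from Theorem \ref{chanmatduan}, and upper-bound \(\mathcal{F}_c(\mathcal{Q}_6)\) by the dimension \(6\). The obstacle you flag at the end is settled in the paper by a one-line chain through the max-monotone: for any square row-stochastic \(P\), \(\tr[P]\le \Lambda_{\max}(P)=\sum_i\max_j P_{ij}\le \operatorname{rank}_{\operatorname{psd}}(P)\le 6\) (the middle inequality being Eq.~\eqref{el22}), which holds for every \(n\times n\) matrix in \(\mathcal{P}^{n\to n}(\mathcal{Q}_6)\) regardless of \(n\), so no extra care about large \(n\) or normalization of the psd factors is needed.
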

The proof is detailed in the Appendix \ref{a7}, which relies on the key observation that \(\mathcal{F}_c(\mathcal{Q}_6)=6\). However the trace of the matrix \(M_7\) reaches its maximum (\(=\frac {20}3>6\)) for \(p=1\), rendering a higher classical transmission fidelity of  \(\mathcal{N}_7^{\mathcal{V}}\) under minimal environment assistance (see Theorem \ref{chanmatduan}).

At this point it is worthwhile to mention that the mutual information (MI) of \(M_7(p=1)\), with assigning non-zero input probabilities over all the indices \(\{x_1\cdots, x_7\}\), can only reach up to \(\log 5\)-bits. Conversely, the maximum MI of \(M_7(p=1)\) can reach \(\log 6\)-bits at the cost of sacrificing the symbol \(x_7\). However, absence of that symbol results in a decreased trace. Hence, for the channel \(\mathcal{N}_7^{\mathcal{V}}\), under minimal assistance of environment, the strategy to maximize its \(\mathcal{F}_c\) is clearly distinct from that for conventional capacity enhancement, highlighting a stark inequivalence between them.
To underscore the power of environmental assistance, we compare it with preshared resources such as $2$-input--$2$-output non-signaling correlations and unlimited shared randomness. We show that, for all channels $\mathcal{N}^{\mathcal{V}}_7$, none of these can match even minimal environmental assistance in enhancing classical transmission fidelity (for proof see Appendix \ref{a8}).
\begin{theorem}\label{th2}
    \(\mathcal{F}_c^{env}(\mathcal{N}_7^\mathcal{V}) > \mathcal{F}_c^{PR}(\mathcal{N}_7^\mathcal{V})\ge\mathcal{F}_c^{SR}(\mathcal{N}_7^\mathcal{V})
    \) 
\end{theorem}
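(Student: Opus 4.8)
The plan is to separate the two values by the explicit gap between $\tfrac{20}{3}$ and the output dimension $d_B=3$: bound $\mathcal{F}_c^{env}$ from below and $\mathcal{F}_c^{PR}$ from above. First I would record the lower bound. By Theorem~\ref{chanmatduan} together with Lemma~\ref{chanmatclas}, the matrix $M_7(1)$ is attainable in $\mathcal{P}^{7\to7}_{\text{env}}(\mathcal{N}_7^{\mathcal{V}}(\mathcal{Q}_7))$, and summing its diagonal gives $\Tr[M_7(1)]=5+1+\tfrac{2}{3}=\tfrac{20}{3}$, so $\mathcal{F}_c^{env}(\mathcal{N}_7^{\mathcal{V}})\ge\tfrac{20}{3}$ (this is the same mechanism that drives Proposition~\ref{prop2}). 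It then suffices to prove $\mathcal{F}_c^{PR}(\mathcal{N}_7^{\mathcal{V}})\le 3<\tfrac{20}{3}$.

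The crux is this upper bound, which I would prove for an arbitrary pre-shared non-signaling box $P(x,y\mid a,b)$, the $2$-input--$2$-output case being a special instance. In the assisted protocol Alice feeds her message $m$ into a box input $a(m)$, obtains $x$, and transmits $\mathcal{N}_7^{\mathcal{V}}(\rho_{m,x})$; Bob fixes an input $b$, obtains $y$, and decodes with a $y$-dependent POVM $\{\Lambda_g^{y}\}_g$ on the three-dimensional output. Setting $\sigma_m^{y}:=\sum_x P(x,y\mid a(m),b)\,\mathcal{N}_7^{\mathcal{V}}(\rho_{m,x})$, the diagonal sum reads $\Tr[M]=\sum_{y}\sum_{m}\Tr[\Lambda_m^{y}\sigma_m^{y}]$. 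Because each channel output obeys $\mathcal{N}_7^{\mathcal{V}}(\rho_{m,x})\preceq I_3$, one has $\sigma_m^{y}\preceq P(y\mid a(m),b)\,I_3$, and the non-signaling constraint $P(y\mid a,b)=q_y$ --- independent of $a$, hence of $m$ --- sharpens this to the operator inequality $\sigma_m^{y}\preceq q_y I_3$. Completeness $\sum_m\Lambda_m^{y}=I_3$ then forces $\sum_m\Tr[\Lambda_m^{y}\sigma_m^{y}]\le q_y\,\Tr[I_3]=3q_y$, and summing over $y$ yields $\Tr[M]\le 3\sum_y q_y=3$. Morally, a classical non-signaling resource cannot enlarge the three-dimensional decoding space the way the environment's extra tensor factor $\mathbb{C}^3$ does, giving $\mathcal{F}_c^{PR}(\mathcal{N}_7^{\mathcal{V}})\le 3$.

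For the remaining inequality I would note that unlimited shared randomness is itself a (local) non-signaling correlation, so every shared-randomness-assisted strategy is a particular non-signaling-assisted one; maximizing $\Tr[M]$ over the larger resource set therefore gives $\mathcal{F}_c^{PR}(\mathcal{N}_7^{\mathcal{V}})\ge\mathcal{F}_c^{SR}(\mathcal{N}_7^{\mathcal{V}})$. Chaining the three estimates produces $\mathcal{F}_c^{env}\ge\tfrac{20}{3}>3\ge\mathcal{F}_c^{PR}\ge\mathcal{F}_c^{SR}$, which is the assertion.

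The delicate point is the operator inequality $\sigma_m^{y}\preceq q_y I_3$, whose whole force rests on the $m$-independence of the non-signaling marginal $P(y\mid a,b)$; the one genuine subtlety is whether it survives when Bob is allowed to choose his box input adaptively from a prior measurement of the output. I would dispose of this by folding any such pre-measurement into a quantum instrument on $\mathbb{C}^3$, after which the identity bound and the completeness relation $\sum_m\Lambda_m^{y}=I_3$ close the estimate unchanged --- confirming that no $2$-input--$2$-output non-signaling correlation can match minimal environmental assistance.
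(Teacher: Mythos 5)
Your proposal is correct, and its lower-bound half (\(\mathcal{F}_c^{env}(\mathcal{N}_7^{\mathcal{V}})\ge\Tr[M_7(1)]=\tfrac{20}{3}\)) coincides with the paper's Proposition \ref{prop2}. The upper bound on \(\mathcal{F}_c^{PR}\), however, is obtained by a genuinely different argument. The paper (Appendix \ref{a8}, Part 2) never bounds the assisted fidelity by the output dimension: it simulates the PR box by shared randomness plus one classical bit from Alice to Bob (Alice sets \(a_i=\lambda\) and transmits \(f(x_i)\); Bob computes \(b_j=\lambda\oplus f(x_i)g(j)\)), absorbs that cbit into a qubit identity channel, and then chains Proposition \ref{prop1} with the no-hypersignaling principle to conclude \(\mathcal{P}^{n\to m}_{PR}(\mathcal{N}_7^{\mathcal{V}})\subseteq\mathcal{P}^{n\to m}_{SR}(\mathcal{Q}_5)\), hence \(\mathcal{F}_c^{PR}\le 5<6<\mathcal{F}_c^{env}\). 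You instead exploit non-signaling directly at the operator level: marginal independence \(P(y\mid a(m),b)=q_y\) gives \(\sigma_m^{y}\preceq q_y I_3\), and POVM completeness caps the trace at \(\Tr[I_3]=3\). This buys a strictly stronger conclusion than the paper's: the bound \(3\) rather than \(5\) (so non-signaling assistance yields no gain at all over the unassisted \(\mathcal{F}_c(\mathcal{N}_7^{\mathcal{V}})\le3\)), and it applies verbatim to non-signaling boxes of arbitrary input/output cardinality, whereas the paper's one-cbit simulation is specific to the \(2\)-input--\(2\)-output case. What the paper's route buys is modularity: it reuses known primitives (PR-box simulation cost, no-hypersignaling, the convex-hull Lemma \ref{lm4}) rather than a bespoke estimate.

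Two points should be tightened, though neither is a gap. First, the ``adaptive'' ordering you defer to the end is not a side case but the paper's actual protocol: Bob measures first and feeds a function \(g(j)\) of his outcome into the box. Your instrument fix does work and deserves to be written out: modelling Bob's pre-measurement as an instrument \(\{\mathcal{I}_k\}\) followed by POVMs \(\{\Lambda^{k,b}_m\}\), the pulled-back effects \(E^{k,b}_m=\mathcal{I}_k^{\dagger}(\Lambda^{k,b}_m)\) satisfy \(\sum_m E^{k,b}_m=\mathcal{I}_k^{\dagger}(I_3)\) with \(\sum_k\Tr[\mathcal{I}_k^{\dagger}(I_3)]=3\), and non-signaling of Bob's marginal \(P(b\mid g(k))\) (independent of \(a(m)\)) again yields \(\Tr[M]\le\sum_k\Tr[\mathcal{I}_k^{\dagger}(I_3)]\sum_b P(b\mid g(k))=3\). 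Second, your justification of \(\mathcal{F}_c^{PR}\ge\mathcal{F}_c^{SR}\) (``shared randomness is itself a non-signaling correlation'') is loose if PR denotes a single two-input--two-output box, which cannot supply unlimited randomness; in the paper this inequality is essentially definitional, since its PR scenario includes shared randomness alongside the box. Either way, your chain \(\mathcal{F}_c^{env}\ge\tfrac{20}{3}>3\ge\mathcal{F}_c^{PR}\ge\mathcal{F}_c^{SR}\) stands.
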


\textit{Generalization to arbitrary dimensions.}-- Standing as an example, Theorem \ref{chanmatduan} hence depicts that the general communication utility can violate the already established notion of sub-optimality. As an arbitrary generalization of our results, in the following we will first identify a class of quantum channels for which the conventional EACC is always suboptimal.
\begin{lemma}\label{d2-1tod}
For every \(d\ge 3\), consider an uncountable set of isometries \(\mathrm{S}_d\)
\begin{align}\label{duanspace}
\mathrm{S}_d:=\{\mathcal{V}_{d^2-1}&\mid\mathcal{V}_{d^2-1}:\mathbb{C}^{d^2-1}\mapsto\mathbb{C}^d\otimes\mathbb{C}^d\nonumber\\
&\&~\operatorname{Range}(\mathcal{V}_{d^2-1})\perp\ket{\phi^+_d}\}
\end{align} 
where $\ket{\phi^+_d}=\frac{1}{\sqrt{d}}\sum_{k=0}^{d-1}\ket{k}\otimes\ket{k}$.

EACC of all channels \(\mathcal{N}^{\mathcal{V}}_{d^2-1}:\mathcal{L}(\mathbb{C}^{d^2-1})\mapsto\mathcal{L}(\mathbb{C}^d)\), induced from each isometry \(\mathcal{V}_{d^2-1}\) is suboptimal, even if the decoding measurements of both the receiver and the environment are separable super-operators (SEP).
\end{lemma}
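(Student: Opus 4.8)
The plan is to show that no orthonormal basis of the output subspace can be perfectly discriminated by a separable (SEP) measurement jointly performed by Bob and the environment. The constraint defining $\mathrm{S}_d$ pins the range of every $\mathcal{V}_{d^2-1}$ to be exactly the orthogonal complement $S:=\{\ket{\phi^+_d}\}^\perp$, a $(d^2-1)$-dimensional subspace of the $d^2$-dimensional space $\mathbb{C}^d\otimes\mathbb{C}^d$; since the whole uncountable family shares this range (its members differ only by an input unitary), one argument covers all of them. First I would reduce the capacity statement to a discrimination statement: attaining $\mathrm{EACC}=\log(d^2-1)$ forces $H(X\mid Y)=0$ on $d^2-1$ messages, i.e.\ perfect distinguishability of $d^2-1$ output states on $BE$ with mutually orthogonal supports inside $S$. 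As $\dim S=d^2-1$, these supports are one-dimensional, so I may assume the encodings are pure and the outputs form an orthonormal basis $\{\ket{\psi_i}\}_{i=1}^{d^2-1}$ of $S$. It then suffices to prove that no such basis is perfectly SEP-distinguishable.

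Next I would run the structural step. Suppose a separable POVM $\{E_i\}_{i=1}^{d^2-1}$, possibly with an inconclusive $E_0$, perfectly identifies the $\ket{\psi_i}$. Perfect discrimination together with $E_i\geq 0$ gives $E_i\ket{\psi_j}=0$ for $j\neq i$, so the range of $E_i$ lies in $V_i:=\operatorname{span}\{\ket{\psi_i},\ket{\phi^+_d}\}$, the two-dimensional orthogonal complement of $\operatorname{span}\{\ket{\psi_j}:j\neq i\}$. Because $E_i$ is a separable operator, its range is spanned by product vectors, all of which must lie in $V_i$. I would then pass to the matrix picture, identifying $\ket{\psi}\in\mathbb{C}^d\otimes\mathbb{C}^d$ with a $d\times d$ matrix so that $\ket{\phi^+_d}\leftrightarrow I/\sqrt d$ and product vectors correspond to rank-one matrices. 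A generic element $\alpha\ket{\psi_i}+\beta\ket{\phi^+_d}$ is then a product state iff $M_i+cI$ is rank one for some scalar $c$, where $M_i$ is the matrix of $\ket{\psi_i}$, which is traceless since $\ket{\psi_i}\perp\ket{\phi^+_d}$. Equivalently, $-c$ must be an eigenvalue of $M_i$ of multiplicity at least $d-1$.

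The main obstacle, and the unique place where the hypothesis $d\geq 3$ is used, is precisely this eigenvalue count: for $d\geq 3$ one has $2(d-1)>d$, so $M_i$ can possess at most one eigenvalue of multiplicity $\geq d-1$, whence $V_i$ contains at most one product direction $\ket{p_iq_i}$ (for $d=2$ this fails, consistent with Bell-state discrimination). Each $E_i$ is therefore forced to be $c_i\ketbra{p_iq_i}{p_iq_i}$ of rank at most one, and nonzero in order to detect $\ket{\psi_i}$. To close, I would observe that the inconclusive element obeys $E_0\ket{\psi_j}=0$ for all $j$, so $E_0\propto\ketbra{\phi^+_d}{\phi^+_d}$, a maximally entangled projector that is not separable unless $E_0=0$; hence $\sum_{i=1}^{d^2-1}E_i=I$. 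But the left-hand side is a sum of $d^2-1$ rank-one operators, so $\operatorname{rank}\big(\sum_i E_i\big)\leq d^2-1<d^2=\operatorname{rank}(I)$, a contradiction. This rules out SEP discrimination for every basis of $S$, giving $\mathrm{EACC}<\log(d^2-1)$; and since any weaker degree of environmental access, in particular the minimal assistance, only shrinks the attainable set of channel matrices, suboptimality persists there too, exactly as in the $7\to 3$ case of Lemma~\ref{seventothree}. This also furnishes a self-contained, dimension-independent strengthening of the Duan-type argument invoked there.
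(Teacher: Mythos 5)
Your proof is correct, and conceptually it travels the same road as the paper: reduce EACC-suboptimality to the statement that no orthonormal basis of \(\{\ket{\phi^+_d}\}^{\perp}\subset\mathbb{C}^d\otimes\mathbb{C}^d\) can be perfectly discriminated across the receiver--environment cut. The difference is one of self-containedness. The paper's entire proof is a one-sentence appeal to \cite{watrous2005bipartite} for that discrimination statement, whereas you (i) make the capacity-to-discrimination reduction explicit (attaining \(\log(d^2-1)\) forces pure, mutually orthogonal outputs, hence an orthonormal basis of the common range shared by every member of the uncountable family \(\mathrm{S}_d\)), and (ii) reprove the Watrous-type theorem from scratch: confinement of each \(E_i\) to \(\operatorname{span}\{\ket{\psi_i},\ket{\phi^+_d}\}\), the fact that the range of a separable positive operator is spanned by product vectors, the vectorization picture in which \(\ket{\phi^+_d}\leftrightarrow \mathbb{I}/\sqrt d\) and \(\tr M_i=0\), the multiplicity count \(2(d-1)>d\) showing each such two-dimensional subspace contains at most one product direction, and the final rank count \(\operatorname{rank}\bigl(\sum_i E_i\bigr)\le d^2-1<d^2\). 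This is essentially a reconstruction of the cited result, so the mathematical content coincides; what your version buys is that it is pitched directly at SEP measurements, which is what the Lemma actually asserts --- the paper's citation is phrased for LOCC, a strictly smaller class, so your argument closes that (cosmetic but real) gap --- and it isolates exactly where \(d\ge 3\) enters, consistent with the fact that for \(d=2\) the basis \(\{\ket{01},\ket{10},(\ket{00}-\ket{11})/\sqrt2\}\) \emph{is} locally distinguishable (the very structure exploited in the paper's Theorem~\ref{chanmatduan} construction). One shared caveat, not a defect relative to the paper: both you and the paper pass from ``EACC \(=\log(d^2-1)\)'' to exact perfect discrimination, which tacitly assumes the supremum defining EACC is attained; a fully rigorous treatment would add a compactness/limiting-measurement argument, but since the paper's own proof makes the identical leap, this does not count against you.
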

\begin{proof}
    The proof follows directly from the fact that no bases spanning the subspace orthogonal to $\ket{\phi_d^+}$ can be distinguished perfectly under LOCC \cite{watrous2005bipartite}.
    \end{proof}
    However, we will now show that for the same class of quantum channels, a minimal assistance from the environment to the receiver, empowers the sender to unlock its true encoding strength optimally. The proof of the same is depicted in the Appendix \ref{a9}.  
\begin{theorem}\label{gen}
    For all \(d\ge3\), quantum channels \(\mathcal{N}^{\mathcal{V}}_{d^2-1}\), corresponding to all \(\mathcal{V}_{d^2-1}\in\mathrm{S}_{d}\), achieve optimal encoding strength under minimal assistance of the environment.
\end{theorem}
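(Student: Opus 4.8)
The plan is to produce, for each isometry $\mathcal{V}_{d^2-1}\in\mathrm{S}_d$, one explicit channel matrix $M\in\mathcal{P}^{n\to m}_{\operatorname{env}}(\mathcal{N}^{\mathcal{V}}_{d^2-1}(\mathcal{Q}_{d^2-1}))$, reachable under minimal (non-communicating, local) assistance, for which $\operatorname{rank}_{\operatorname{psd}}(M)=d^2-1$. By Lemma \ref{max} any such matrix already satisfies $\operatorname{rank}_{\operatorname{psd}}(M)\le d^2-1$, so a single $M$ saturating this bound gives $M\notin\mathcal{P}^{n\to m}(\mathcal{Q}_{\tilde d})$ for every $\tilde d<d^2-1$, which is exactly the optimality criterion of Definition \ref{def2}.

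Since $\operatorname{Range}(\mathcal{V}_{d^2-1})=\ket{\phi^+_d}^{\perp}$, I would decompose it orthogonally into the $d(d-1)$ off-diagonal product states $\{\ket{i}\otimes\ket{j}\mid i\neq j\}$ and the $(d-1)$-dimensional correlated-diagonal subspace, fixing there an orthonormal basis $\{\ket{\eta_k}\}_{k=1}^{d-1}$ with $\ket{\eta_k}=\sum_i c^{(k)}_i\,\ket{i}\otimes\ket{i}$ and $\sum_i c^{(k)}_i=0$. Alice encodes $d^2-1$ symbols by transmitting the $\mathcal{V}_{d^2-1}$-preimages of these $d^2-1$ orthonormal vectors, and both Bob and the environment measure in the computational basis. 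Each off-diagonal symbol is then identified deterministically from the joint outcome $(i,j)$, while each diagonal symbol yields outcomes only on the diagonal $(i,i)$, with probability $|c^{(k)}_i|^2$. The induced channel matrix is therefore block diagonal, $M=I_{d(d-1)}\oplus D$, where $D$ is the $(d-1)\times d$ row-stochastic block $D_{ki}=|c^{(k)}_i|^2$.

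For the lower bound I would argue globally rather than invoke psd-rank additivity. Suppose $\operatorname{rank}_{\operatorname{psd}}(M)=r\le d^2-2$; then $M_{xy}=\Tr(\tilde\Lambda_y\tilde\rho_x)$ for states $\tilde\rho_x$ and a POVM $\{\tilde\Lambda_y\}$ on $\mathbb{C}^r$. Because the off-diagonal block is the identity, each $\tilde\Lambda_{(i,j)}$ accepts its own $\tilde\rho_{(i,j)}$ with certainty and rejects all others, so the $d(d-1)$ states $\tilde\rho_{(i,j)}$ have pairwise-orthogonal supports and occupy at least $d(d-1)$ dimensions; moreover $\Tr(\tilde\Lambda_{(i,j)}\tilde\rho_k)=0$ forces every diagonal-symbol state $\tilde\rho_k$ into the orthogonal complement of all these supports, a subspace of dimension at most $r-d(d-1)\le d-2$. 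Restricting the factorization to this subspace would then realize $D$ with $(d-2)$-dimensional psd matrices, i.e. $\operatorname{rank}_{\operatorname{psd}}(D)\le d-2$. Hence the whole theorem reduces to the single inequality $\operatorname{rank}_{\operatorname{psd}}(D)\ge d-1$, its companion $\operatorname{rank}_{\operatorname{psd}}(D)\le d-1$ being automatic, as the vectors $v_k=\sum_i c^{(k)}_i\ket{i}$ lie in the $(d-1)$-dimensional space $\mathbf{1}^{\perp}$ and give the factorization $D_{ki}=\Tr(\ketbra{v_k}{v_k}\,P\ketbra{i}{i}P)$ with $P$ the projector onto $\mathbf{1}^{\perp}$.

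I expect this residual bound $\operatorname{rank}_{\operatorname{psd}}(D)\ge d-1$ to be the crux. It cannot be obtained from ordinary rank, since $\operatorname{rank}_{\operatorname{psd}}(D)\ge\sqrt{\operatorname{rank}(D)}$ is far too weak, nor from an identity-submatrix argument, since $v_k\perp\mathbf{1}$ prevents any $\ket{\eta_k}$ from being locally supported and hence forbids $D$ from containing a positive diagonal submatrix---this is precisely the entanglement responsible for the suboptimal EACC in Lemma \ref{d2-1tod}. My approach would be to pin down the $\{\ket{\eta_k}\}$ explicitly---for instance the Gram--Schmidt ``staircase'' basis, which renders $D$ lower-triangular with strictly positive diagonal---and then certify $\operatorname{rank}_{\operatorname{psd}}(D)=d-1$ through an explicit psd-rank lower bound of the same flavour as the one establishing Lemma \ref{chanmatclas} for the $d=3$ matrix $M_7(p)$, e.g. an SDP-duality / supporting-functional certificate. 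The base case $d=3$ needs no such machinery: there $D$ has only two rows and rank two, whence $\operatorname{rank}_{\operatorname{psd}}(D)=\operatorname{rank}(D)=2$ and $\operatorname{rank}_{\operatorname{psd}}(M)=8=d^2-1$.
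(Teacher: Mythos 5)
Your construction is essentially the paper's own: the paper likewise encodes the preimages of the ``staircase'' basis \(\ket{\psi_k}\propto\sum_{i\le k}\ket{ii}-(k+1)\ket{k+1,k+1}\) together with the \(d(d-1)\) off-diagonal product states, lets Bob and the environment measure in the computational basis, and obtains a block channel matrix \(M_{k_d}=M_{\Sigma}\oplus\mathbb{I}_{d(d-1)}\); its block \(M_{\Sigma}\) is exactly your \(D\) with the two columns \(i=0,1\) merged, which makes it a square lower-triangular matrix. Your reduction to the diagonal block is also sound: the orthogonal-supports argument is a correct, self-contained derivation of the block lower bound (Lemma \ref{sl1}) in the special case of a direct sum with an identity block, and your framing via Lemma \ref{max} and Definition \ref{def2} is the right one.

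The genuine gap is exactly where you flagged it: the crux inequality \(\operatorname{rank}_{\operatorname{psd}}(D)\ge d-1\) is never proved for \(d\ge 4\); you only verify \(d=3\) and gesture at an unspecified ``SDP-duality / supporting-functional certificate.'' Moreover, your stated reason for abandoning submatrix arguments is mistaken. It is true that \(D\) contains no positive \emph{diagonal} submatrix of size \(d-1\), but diagonality is not needed: with the staircase basis, deleting the first column of \(D\) leaves a \((d-1)\times(d-1)\) \emph{lower-triangular} submatrix with strictly positive diagonal entries \(k/(k+1)\), and the paper's Lemma \ref{sl2} states precisely that such a matrix has psd rank at least its order. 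That lemma is not reachable by your orthogonal-supports argument, which requires both off-diagonal blocks to vanish, whereas a triangular matrix has a nonzero lower-left block; it follows instead from the general form of Lemma \ref{sl1} (only the upper-right block must vanish), applied recursively down the diagonal. Note also that a certificate ``of the same flavour'' as the proof of Lemma \ref{chanmatclas} would not suffice here: that proof rests on the max-monotone, and for your block one computes \(\Lambda_{\max}(D)=\sum_{k=1}^{d-1}k/(k+1)<d-1\), which is too weak for every \(d\ge 4\). So, as written, your proposal establishes the theorem only for \(d=3\); supplying the triangularity lemma closes the gap, after which your argument coincides with the paper's.
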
  
\textit{Discussions.}-- In summary, we have introduced the notion of unlocking the true classical information encoding strength for quantum channels, with input dimension exceeding the output. Our analysis, here in particular, is related to environment-assisted classical communication, i.e., by taking the limited access of environment as a resource. However, the concept serves as a general tool for characterizing communication advantage provided by arbitrary resources. This approach is particularly valuable in single-shot regime, where infinitely many inequivalent capacity measures can exist. Accordingly, to establish a resource induced communication advantage there, one would need to identify a relevant measure that increases under the given resource. Alternatively, if the resource unlocks encoding strength, such a measure is guaranteed to exist and hence quantified as a sufficient criterion for the resource to wit an enhanced communication utility. 

The situation is closely analogous to the resource theory of purity \cite{horodecki2003reversible}, particularly under single-shot \cite{gour2015resource}. There, the monotones are given by the entire family of R\'enyi entropies, and comparing the resource content of two states \(\rho\) and \(\sigma\) would, in principle, require comparing all R\'enyi entropies. However, if \(\rho\) majorizes \(\sigma\), then one can conclude immediately that \(\rho\) contains more purity, since \(\rho\) can be transformed to \(\sigma\) under noisy operations. In a similar spirit, unlocking of encoding strength provides a structural,``\textit{majorization}" alike criterion for identifying the communication advantage of any resource. While the explicit identification of a specific capacity measure, enhanced by that particular resource, carries a more operational interpretation. In our work, classical transmission fidelity serves as one such physically motivated measure.

Another highlighting feature of our work is that the advantages we uncover arise from \textit{minimal} assistance of environment, where the decoding measurements of receiver and environment do not require any prior classical communication between them. This is particularly interesting since only such assistance does not exploit measurement incompatibility---a distinctly nonclassical feature of quantum theory---yet unlocks encoding strength of the channel optimally. Furthermore we have shown that shared resources between the sender and the receiver---such as shared randomness or any \(2\)-\(2\)-\(2\)--nonlocal correlation---fail to match the power of minimal assistance from environment. This leads to a counterintuitive implication in the setting of quantum broadcast channel scenario \cite{yard2011quantum, bhattacharya2021prx}, particularly when one sender communicates simultaneously to two receivers through a channel and its complement. Our findings reveal that, in this scenario, communication between two receivers might over-perform some non-signaling correlation shared between sender and the receivers. The nontriviality of this observation lies in the fact that while the shared resources can be employed in the communication protocol depending on the classical message at the sender's end, communication between the receiver does not depend on that. Finally, we extend our result for arbitrary dimension by explicit construction of quantum channels.

\textit{Acknowledgments.}-- SRC acknowledges support from University Grants Commission, India (reference no. 211610113404). SBG acknowledges the financial support through the National Quantum Mission (NQM) of the Department of Science and Technology, Government of India. RA acknowledges financial support from the Council of Scientific and Industrial Research (CSIR), Government of India under File No.09/0093(19292)/2024-EMR-I. TG would like to acknowledge his academic visit at the Physics and Applied Mathematics Unit, Indian Statistical Institute, Kolkata, India.
\bibliography{bibliography}

\begin{thebibliography}{45}%
\makeatletter
\providecommand \@ifxundefined [1]{%
 \@ifx{#1\undefined}
}%
\providecommand \@ifnum [1]{%
 \ifnum #1\expandafter \@firstoftwo
 \else \expandafter \@secondoftwo
 \fi
}%
\providecommand \@ifx [1]{%
 \ifx #1\expandafter \@firstoftwo
 \else \expandafter \@secondoftwo
 \fi
}%
\providecommand \natexlab [1]{#1}%
\providecommand \enquote  [1]{``#1''}%
\providecommand \bibnamefont  [1]{#1}%
\providecommand \bibfnamefont [1]{#1}%
\providecommand \citenamefont [1]{#1}%
\providecommand \href@noop [0]{\@secondoftwo}%
\providecommand \href [0]{\begingroup \@sanitize@url \@href}%
\providecommand \@href[1]{\@@startlink{#1}\@@href}%
\providecommand \@@href[1]{\endgroup#1\@@endlink}%
\providecommand \@sanitize@url [0]{\catcode `\\12\catcode `\$12\catcode `\&12\catcode `\#12\catcode `\^12\catcode `\_12\catcode `\%12\relax}%
\providecommand \@@startlink[1]{}%
\providecommand \@@endlink[0]{}%
\providecommand \url  [0]{\begingroup\@sanitize@url \@url }%
\providecommand \@url [1]{\endgroup\@href {#1}{\urlprefix }}%
\providecommand \urlprefix  [0]{URL }%
\providecommand \Eprint [0]{\href }%
\providecommand \doibase [0]{https://doi.org/}%
\providecommand \selectlanguage [0]{\@gobble}%
\providecommand \bibinfo  [0]{\@secondoftwo}%
\providecommand \bibfield  [0]{\@secondoftwo}%
\providecommand \translation [1]{[#1]}%
\providecommand \BibitemOpen [0]{}%
\providecommand \bibitemStop [0]{}%
\providecommand \bibitemNoStop [0]{.\EOS\space}%
\providecommand \EOS [0]{\spacefactor3000\relax}%
\providecommand \BibitemShut  [1]{\csname bibitem#1\endcsname}%
\let\auto@bib@innerbib\@empty
\bibitem [{\citenamefont {Nielsen}\ and\ \citenamefont {Caves}(1997)}]{nielsen1997reversible}%
  \BibitemOpen
  \bibfield  {author} {\bibinfo {author} {\bibfnamefont {M.~A.}\ \bibnamefont {Nielsen}}\ and\ \bibinfo {author} {\bibfnamefont {C.~M.}\ \bibnamefont {Caves}},\ }\bibfield  {title} {\bibinfo {title} {Reversible quantum operations and their application to teleportation},\ }\href {https://doi.org/10.1103/PhysRevA.55.2547} {\bibfield  {journal} {\bibinfo  {journal} {Phys. Rev. A}\ }\textbf {\bibinfo {volume} {55}},\ \bibinfo {pages} {2547} (\bibinfo {year} {1997})}\BibitemShut {NoStop}%
\bibitem [{\citenamefont {Nielsen}\ \emph {et~al.}(1998)\citenamefont {Nielsen}, \citenamefont {Caves}, \citenamefont {Schumacher},\ and\ \citenamefont {Barnum}}]{nielesen1998information}%
  \BibitemOpen
  \bibfield  {author} {\bibinfo {author} {\bibfnamefont {M.~A.}\ \bibnamefont {Nielsen}}, \bibinfo {author} {\bibfnamefont {C.~M.}\ \bibnamefont {Caves}}, \bibinfo {author} {\bibfnamefont {B.}~\bibnamefont {Schumacher}},\ and\ \bibinfo {author} {\bibfnamefont {H.}~\bibnamefont {Barnum}},\ }\bibfield  {title} {\bibinfo {title} {Information-theoretic approach to quantum error correction and reversible measurement},\ }\href {https://doi.org/10.1098/rspa.1998.0160} {\bibfield  {journal} {\bibinfo  {journal} {Proceedings of the Royal Society of London. Series A: Mathematical, Physical and Engineering Sciences}\ }\textbf {\bibinfo {volume} {454}},\ \bibinfo {pages} {277–304} (\bibinfo {year} {1998})}\BibitemShut {NoStop}%
\bibitem [{\citenamefont {Ambainis}\ \emph {et~al.}(2000)\citenamefont {Ambainis}, \citenamefont {Mosca}, \citenamefont {Tapp},\ and\ \citenamefont {De~Wolf}}]{ambainis2000private}%
  \BibitemOpen
  \bibfield  {author} {\bibinfo {author} {\bibfnamefont {A.}~\bibnamefont {Ambainis}}, \bibinfo {author} {\bibfnamefont {M.}~\bibnamefont {Mosca}}, \bibinfo {author} {\bibfnamefont {A.}~\bibnamefont {Tapp}},\ and\ \bibinfo {author} {\bibfnamefont {R.}~\bibnamefont {De~Wolf}},\ }\bibfield  {title} {\bibinfo {title} {Private quantum channels},\ }in\ \href {https://doi.org/10.1109/sfcs.2000.892142} {\emph {\bibinfo {booktitle} {Proceedings 41st Annual Symposium on Foundations of Computer Science}}},\ \bibinfo {series and number} {SFCS-00}\ (\bibinfo  {publisher} {IEEE Comput. Soc},\ \bibinfo {year} {2000})\ p.\ \bibinfo {pages} {547–553}\BibitemShut {NoStop}%
\bibitem [{\citenamefont {Boykin}\ and\ \citenamefont {Roychowdhury}(2003)}]{boykin2003optimal}%
  \BibitemOpen
  \bibfield  {author} {\bibinfo {author} {\bibfnamefont {P.~O.}\ \bibnamefont {Boykin}}\ and\ \bibinfo {author} {\bibfnamefont {V.}~\bibnamefont {Roychowdhury}},\ }\bibfield  {title} {\bibinfo {title} {Optimal encryption of quantum bits},\ }\href {https://doi.org/10.1103/PhysRevA.67.042317} {\bibfield  {journal} {\bibinfo  {journal} {Phys. Rev. A}\ }\textbf {\bibinfo {volume} {67}},\ \bibinfo {pages} {042317} (\bibinfo {year} {2003})}\BibitemShut {NoStop}%
\bibitem [{\citenamefont {Nayak}\ and\ \citenamefont {Sen}(2007)}]{nayak2006invertible}%
  \BibitemOpen
  \bibfield  {author} {\bibinfo {author} {\bibfnamefont {A.}~\bibnamefont {Nayak}}\ and\ \bibinfo {author} {\bibfnamefont {P.}~\bibnamefont {Sen}},\ }\bibfield  {title} {\bibinfo {title} {Invertible quantum operations and perfect encryption of quantum states},\ }\href {https://doi.org/10.26421/QIC7.1-2-6} {\bibfield  {journal} {\bibinfo  {journal} {Quantum Info. Comput.}\ }\textbf {\bibinfo {volume} {7}},\ \bibinfo {pages} {103–110} (\bibinfo {year} {2007})}\BibitemShut {NoStop}%
\bibitem [{\citenamefont {Nielsen}\ and\ \citenamefont {Chuang}(2001)}]{nielsen2001quantum}%
  \BibitemOpen
  \bibfield  {author} {\bibinfo {author} {\bibfnamefont {M.~A.}\ \bibnamefont {Nielsen}}\ and\ \bibinfo {author} {\bibfnamefont {I.~L.}\ \bibnamefont {Chuang}},\ }\href {https://doi.org/https://doi.org/10.1017/CBO9780511976667} {\emph {\bibinfo {title} {Quantum computation and quantum information}}},\ Vol.~\bibinfo {volume} {2}\ (\bibinfo  {publisher} {Cambridge university press Cambridge},\ \bibinfo {year} {2001})\BibitemShut {NoStop}%
\bibitem [{\citenamefont {Stinespring}(1955)}]{stinespring1955positive}%
  \BibitemOpen
  \bibfield  {author} {\bibinfo {author} {\bibfnamefont {W.~F.}\ \bibnamefont {Stinespring}},\ }\bibfield  {title} {\bibinfo {title} {Positive functions on c star -algebras},\ }\href {https://doi.org/10.2307/2032342} {\bibfield  {journal} {\bibinfo  {journal} {Proceedings of the American Mathematical Society}\ }\textbf {\bibinfo {volume} {6}},\ \bibinfo {pages} {211} (\bibinfo {year} {1955})}\BibitemShut {NoStop}%
\bibitem [{\citenamefont {Macchiavello}\ and\ \citenamefont {Palma}(2002)}]{macchiavello2002entanglement}%
  \BibitemOpen
  \bibfield  {author} {\bibinfo {author} {\bibfnamefont {C.}~\bibnamefont {Macchiavello}}\ and\ \bibinfo {author} {\bibfnamefont {G.~M.}\ \bibnamefont {Palma}},\ }\bibfield  {title} {\bibinfo {title} {Entanglement-enhanced information transmission over a quantum channel with correlated noise},\ }\href {https://doi.org/10.1103/PhysRevA.65.050301} {\bibfield  {journal} {\bibinfo  {journal} {Phys. Rev. A}\ }\textbf {\bibinfo {volume} {65}},\ \bibinfo {pages} {050301} (\bibinfo {year} {2002})}\BibitemShut {NoStop}%
\bibitem [{\citenamefont {Bowen}\ and\ \citenamefont {Mancini}(2004)}]{bowen2004quantum}%
  \BibitemOpen
  \bibfield  {author} {\bibinfo {author} {\bibfnamefont {G.}~\bibnamefont {Bowen}}\ and\ \bibinfo {author} {\bibfnamefont {S.}~\bibnamefont {Mancini}},\ }\bibfield  {title} {\bibinfo {title} {Quantum channels with a finite memory},\ }\href {https://doi.org/10.1103/PhysRevA.69.012306} {\bibfield  {journal} {\bibinfo  {journal} {Phys. Rev. A}\ }\textbf {\bibinfo {volume} {69}},\ \bibinfo {pages} {012306} (\bibinfo {year} {2004})}\BibitemShut {NoStop}%
\bibitem [{\citenamefont {Ball}\ \emph {et~al.}(2004)\citenamefont {Ball}, \citenamefont {Dragan},\ and\ \citenamefont {Banaszek}}]{ball2004exploiting}%
  \BibitemOpen
  \bibfield  {author} {\bibinfo {author} {\bibfnamefont {J.}~\bibnamefont {Ball}}, \bibinfo {author} {\bibfnamefont {A.}~\bibnamefont {Dragan}},\ and\ \bibinfo {author} {\bibfnamefont {K.}~\bibnamefont {Banaszek}},\ }\bibfield  {title} {\bibinfo {title} {Exploiting entanglement in communication channels with correlated noise},\ }\href {https://doi.org/10.1103/PhysRevA.69.042324} {\bibfield  {journal} {\bibinfo  {journal} {Phys. Rev. A}\ }\textbf {\bibinfo {volume} {69}},\ \bibinfo {pages} {042324} (\bibinfo {year} {2004})}\BibitemShut {NoStop}%
\bibitem [{\citenamefont {Gregoratti}\ and\ \citenamefont {Werner}(2003)}]{gregorotti2003quantum}%
  \BibitemOpen
  \bibfield  {author} {\bibinfo {author} {\bibfnamefont {M.}~\bibnamefont {Gregoratti}}\ and\ \bibinfo {author} {\bibfnamefont {R.~F.}\ \bibnamefont {Werner}},\ }\bibfield  {title} {\bibinfo {title} {Quantum lost and found},\ }\href {https://doi.org/10.1080/09500340308234541} {\bibfield  {journal} {\bibinfo  {journal} {Journal of Modern Optics}\ }\textbf {\bibinfo {volume} {50}},\ \bibinfo {pages} {915–933} (\bibinfo {year} {2003})}\BibitemShut {NoStop}%
\bibitem [{\citenamefont {Gregoratti}\ and\ \citenamefont {Werner}(2004)}]{gregoratti2004quantum}%
  \BibitemOpen
  \bibfield  {author} {\bibinfo {author} {\bibfnamefont {M.}~\bibnamefont {Gregoratti}}\ and\ \bibinfo {author} {\bibfnamefont {R.}~\bibnamefont {Werner}},\ }\bibfield  {title} {\bibinfo {title} {On quantum error-correction by classical feedback in discrete time},\ }\href {https://doi.org/https://doi.org/10.1063/1.1758320} {\bibfield  {journal} {\bibinfo  {journal} {Journal of mathematical physics}\ }\textbf {\bibinfo {volume} {45}},\ \bibinfo {pages} {2600} (\bibinfo {year} {2004})}\BibitemShut {NoStop}%
\bibitem [{\citenamefont {Buscemi}\ \emph {et~al.}(2005)\citenamefont {Buscemi}, \citenamefont {Chiribella},\ and\ \citenamefont {Mauro~D'Ariano}}]{buscemi2005inverting}%
  \BibitemOpen
  \bibfield  {author} {\bibinfo {author} {\bibfnamefont {F.}~\bibnamefont {Buscemi}}, \bibinfo {author} {\bibfnamefont {G.}~\bibnamefont {Chiribella}},\ and\ \bibinfo {author} {\bibfnamefont {G.}~\bibnamefont {Mauro~D'Ariano}},\ }\bibfield  {title} {\bibinfo {title} {Inverting quantum decoherence by classical feedback from the environment},\ }\href {https://doi.org/10.1103/PhysRevLett.95.090501} {\bibfield  {journal} {\bibinfo  {journal} {Phys. Rev. Lett.}\ }\textbf {\bibinfo {volume} {95}},\ \bibinfo {pages} {090501} (\bibinfo {year} {2005})}\BibitemShut {NoStop}%
\bibitem [{\citenamefont {Karumanchi}\ \emph {et~al.}(2016{\natexlab{a}})\citenamefont {Karumanchi}, \citenamefont {Mancini}, \citenamefont {Winter},\ and\ \citenamefont {Yang}}]{karumanchi2016quantum}%
  \BibitemOpen
  \bibfield  {author} {\bibinfo {author} {\bibfnamefont {S.}~\bibnamefont {Karumanchi}}, \bibinfo {author} {\bibfnamefont {S.}~\bibnamefont {Mancini}}, \bibinfo {author} {\bibfnamefont {A.}~\bibnamefont {Winter}},\ and\ \bibinfo {author} {\bibfnamefont {D.}~\bibnamefont {Yang}},\ }\bibfield  {title} {\bibinfo {title} {Quantum channel capacities with passive environment assistance},\ }\href {https://doi.org/10.1109/TIT.2016.2522192} {\bibfield  {journal} {\bibinfo  {journal} {IEEE Transactions on Information Theory}\ }\textbf {\bibinfo {volume} {62}},\ \bibinfo {pages} {1733} (\bibinfo {year} {2016}{\natexlab{a}})}\BibitemShut {NoStop}%
\bibitem [{\citenamefont {Oskouei}\ \emph {et~al.}(2022)\citenamefont {Oskouei}, \citenamefont {Mancini},\ and\ \citenamefont {Winter}}]{oskouei2021capacities}%
  \BibitemOpen
  \bibfield  {author} {\bibinfo {author} {\bibfnamefont {S.~K.}\ \bibnamefont {Oskouei}}, \bibinfo {author} {\bibfnamefont {S.}~\bibnamefont {Mancini}},\ and\ \bibinfo {author} {\bibfnamefont {A.}~\bibnamefont {Winter}},\ }\bibfield  {title} {\bibinfo {title} {Capacities of gaussian quantum channels with passive environment assistance},\ }\href {https://doi.org/10.1109/TIT.2021.3122150} {\bibfield  {journal} {\bibinfo  {journal} {IEEE Transactions on Information Theory}\ }\textbf {\bibinfo {volume} {68}},\ \bibinfo {pages} {339} (\bibinfo {year} {2022})}\BibitemShut {NoStop}%
\bibitem [{\citenamefont {Harraz}\ \emph {et~al.}(2022)\citenamefont {Harraz}, \citenamefont {Cong},\ and\ \citenamefont {Nieto}}]{harraz2022quantum}%
  \BibitemOpen
  \bibfield  {author} {\bibinfo {author} {\bibfnamefont {S.}~\bibnamefont {Harraz}}, \bibinfo {author} {\bibfnamefont {S.}~\bibnamefont {Cong}},\ and\ \bibinfo {author} {\bibfnamefont {J.~J.}\ \bibnamefont {Nieto}},\ }\bibfield  {title} {\bibinfo {title} {Quantum state recovery via environment-assisted measurement and weak measurement},\ }\href {https://doi.org/https://doi.org/10.1007/s10773-022-05055-4} {\bibfield  {journal} {\bibinfo  {journal} {International Journal of Theoretical Physics}\ }\textbf {\bibinfo {volume} {61}},\ \bibinfo {pages} {140} (\bibinfo {year} {2022})}\BibitemShut {NoStop}%
\bibitem [{\citenamefont {Harraz}\ \emph {et~al.}(2023)\citenamefont {Harraz}, \citenamefont {Zhang},\ and\ \citenamefont {Cong}}]{harraz2023high}%
  \BibitemOpen
  \bibfield  {author} {\bibinfo {author} {\bibfnamefont {S.}~\bibnamefont {Harraz}}, \bibinfo {author} {\bibfnamefont {J.-Y.}\ \bibnamefont {Zhang}},\ and\ \bibinfo {author} {\bibfnamefont {S.}~\bibnamefont {Cong}},\ }\bibfield  {title} {\bibinfo {title} {High-fidelity quantum teleportation through noisy channels via weak measurement and environment-assisted measurement},\ }\href {https://doi.org/https://doi.org/10.1016/j.rinp.2023.107164} {\bibfield  {journal} {\bibinfo  {journal} {Results in Physics}\ }\textbf {\bibinfo {volume} {55}},\ \bibinfo {pages} {107164} (\bibinfo {year} {2023})}\BibitemShut {NoStop}%
\bibitem [{\citenamefont {Chiribella}\ and\ \citenamefont {Kristj{\'a}nsson}(2019)}]{chiribella2019quantum}%
  \BibitemOpen
  \bibfield  {author} {\bibinfo {author} {\bibfnamefont {G.}~\bibnamefont {Chiribella}}\ and\ \bibinfo {author} {\bibfnamefont {H.}~\bibnamefont {Kristj{\'a}nsson}},\ }\bibfield  {title} {\bibinfo {title} {Quantum shannon theory with superpositions of trajectories},\ }\href {https://doi.org/https://doi.org/10.1098/rspa.2018.0903} {\bibfield  {journal} {\bibinfo  {journal} {Proceedings of the Royal Society A}\ }\textbf {\bibinfo {volume} {475}},\ \bibinfo {pages} {20180903} (\bibinfo {year} {2019})}\BibitemShut {NoStop}%
\bibitem [{\citenamefont {Guha}\ \emph {et~al.}(2023)\citenamefont {Guha}, \citenamefont {Roy},\ and\ \citenamefont {Chiribella}}]{guha2023quantum}%
  \BibitemOpen
  \bibfield  {author} {\bibinfo {author} {\bibfnamefont {T.}~\bibnamefont {Guha}}, \bibinfo {author} {\bibfnamefont {S.}~\bibnamefont {Roy}},\ and\ \bibinfo {author} {\bibfnamefont {G.}~\bibnamefont {Chiribella}},\ }\bibfield  {title} {\bibinfo {title} {Quantum networks boosted by entanglement with a control system},\ }\href {https://doi.org/10.1103/PhysRevResearch.5.033214} {\bibfield  {journal} {\bibinfo  {journal} {Phys. Rev. Res.}\ }\textbf {\bibinfo {volume} {5}},\ \bibinfo {pages} {033214} (\bibinfo {year} {2023})}\BibitemShut {NoStop}%
\bibitem [{\citenamefont {Lai}\ \emph {et~al.}(2024)\citenamefont {Lai}, \citenamefont {Lin}, \citenamefont {Huang}, \citenamefont {Jan},\ and\ \citenamefont {Chen}}]{lai2024quick}%
  \BibitemOpen
  \bibfield  {author} {\bibinfo {author} {\bibfnamefont {P.-R.}\ \bibnamefont {Lai}}, \bibinfo {author} {\bibfnamefont {J.-D.}\ \bibnamefont {Lin}}, \bibinfo {author} {\bibfnamefont {Y.-T.}\ \bibnamefont {Huang}}, \bibinfo {author} {\bibfnamefont {H.-C.}\ \bibnamefont {Jan}},\ and\ \bibinfo {author} {\bibfnamefont {Y.-N.}\ \bibnamefont {Chen}},\ }\bibfield  {title} {\bibinfo {title} {Quick charging of a quantum battery with superposed trajectories},\ }\href {https://doi.org/10.1103/PhysRevResearch.6.023136} {\bibfield  {journal} {\bibinfo  {journal} {Phys. Rev. Res.}\ }\textbf {\bibinfo {volume} {6}},\ \bibinfo {pages} {023136} (\bibinfo {year} {2024})}\BibitemShut {NoStop}%
\bibitem [{\citenamefont {Saha}\ and\ \citenamefont {Sen}(2025)}]{saha2025interference}%
  \BibitemOpen
  \bibfield  {author} {\bibinfo {author} {\bibfnamefont {S.}~\bibnamefont {Saha}}\ and\ \bibinfo {author} {\bibfnamefont {U.}~\bibnamefont {Sen}},\ }\bibfield  {title} {\bibinfo {title} {Interference between lossy quantum evolutions activates information backflow},\ }\bibfield  {journal} {\bibinfo  {journal} {arXiv:2507.22150}\ }\href {https://doi.org/10.48550/arXiv.2507.22150} {10.48550/arXiv.2507.22150} (\bibinfo {year} {2025})\BibitemShut {NoStop}%
\bibitem [{\citenamefont {Banaszek}\ \emph {et~al.}(2004)\citenamefont {Banaszek}, \citenamefont {Dragan}, \citenamefont {Wasilewski},\ and\ \citenamefont {Radzewicz}}]{banaszek2004experimental}%
  \BibitemOpen
  \bibfield  {author} {\bibinfo {author} {\bibfnamefont {K.}~\bibnamefont {Banaszek}}, \bibinfo {author} {\bibfnamefont {A.}~\bibnamefont {Dragan}}, \bibinfo {author} {\bibfnamefont {W.}~\bibnamefont {Wasilewski}},\ and\ \bibinfo {author} {\bibfnamefont {C.}~\bibnamefont {Radzewicz}},\ }\bibfield  {title} {\bibinfo {title} {Experimental demonstration of entanglement-enhanced classical communication over a quantum channel with correlated noise},\ }\href {https://doi.org/10.1103/PhysRevLett.92.257901} {\bibfield  {journal} {\bibinfo  {journal} {Phys. Rev. Lett.}\ }\textbf {\bibinfo {volume} {92}},\ \bibinfo {pages} {257901} (\bibinfo {year} {2004})}\BibitemShut {NoStop}%
\bibitem [{\citenamefont {Pirandola}\ \emph {et~al.}(2021)\citenamefont {Pirandola}, \citenamefont {Ottaviani}, \citenamefont {Jacobsen}, \citenamefont {Spedalieri}, \citenamefont {Braunstein}, \citenamefont {Gehring},\ and\ \citenamefont {Andersen}}]{pirandola2021environment}%
  \BibitemOpen
  \bibfield  {author} {\bibinfo {author} {\bibfnamefont {S.}~\bibnamefont {Pirandola}}, \bibinfo {author} {\bibfnamefont {C.}~\bibnamefont {Ottaviani}}, \bibinfo {author} {\bibfnamefont {C.~S.}\ \bibnamefont {Jacobsen}}, \bibinfo {author} {\bibfnamefont {G.}~\bibnamefont {Spedalieri}}, \bibinfo {author} {\bibfnamefont {S.~L.}\ \bibnamefont {Braunstein}}, \bibinfo {author} {\bibfnamefont {T.}~\bibnamefont {Gehring}},\ and\ \bibinfo {author} {\bibfnamefont {U.~L.}\ \bibnamefont {Andersen}},\ }\bibfield  {title} {\bibinfo {title} {Environment-assisted bosonic quantum communications},\ }\href {https://doi.org/https://doi.org/10.1038/s41534-021-00413-2} {\bibfield  {journal} {\bibinfo  {journal} {npj Quantum Information}\ }\textbf {\bibinfo {volume} {7}},\ \bibinfo {pages} {77} (\bibinfo {year} {2021})}\BibitemShut {NoStop}%
\bibitem [{\citenamefont {Rubino}\ \emph {et~al.}(2021)\citenamefont {Rubino}, \citenamefont {Rozema}, \citenamefont {Ebler}, \citenamefont {Kristj\'ansson}, \citenamefont {Salek}, \citenamefont {Allard~Gu\'erin}, \citenamefont {Abbott}, \citenamefont {Branciard}, \citenamefont {Brukner}, \citenamefont {Chiribella},\ and\ \citenamefont {Walther}}]{rubino2021experimental}%
  \BibitemOpen
  \bibfield  {author} {\bibinfo {author} {\bibfnamefont {G.}~\bibnamefont {Rubino}}, \bibinfo {author} {\bibfnamefont {L.~A.}\ \bibnamefont {Rozema}}, \bibinfo {author} {\bibfnamefont {D.}~\bibnamefont {Ebler}}, \bibinfo {author} {\bibfnamefont {H.}~\bibnamefont {Kristj\'ansson}}, \bibinfo {author} {\bibfnamefont {S.}~\bibnamefont {Salek}}, \bibinfo {author} {\bibfnamefont {P.}~\bibnamefont {Allard~Gu\'erin}}, \bibinfo {author} {\bibfnamefont {A.~A.}\ \bibnamefont {Abbott}}, \bibinfo {author} {\bibfnamefont {C.}~\bibnamefont {Branciard}}, \bibinfo {author} {\bibfnamefont {i.~c.~v.}\ \bibnamefont {Brukner}}, \bibinfo {author} {\bibfnamefont {G.}~\bibnamefont {Chiribella}},\ and\ \bibinfo {author} {\bibfnamefont {P.}~\bibnamefont {Walther}},\ }\bibfield  {title} {\bibinfo {title} {Experimental quantum communication enhancement by superposing trajectories},\ }\href {https://doi.org/10.1103/PhysRevResearch.3.013093} {\bibfield  {journal} {\bibinfo  {journal} {Phys. Rev. Res.}\ }\textbf {\bibinfo {volume} {3}},\
  \bibinfo {pages} {013093} (\bibinfo {year} {2021})}\BibitemShut {NoStop}%
\bibitem [{\citenamefont {Wang}\ and\ \citenamefont {Jiang}(2025)}]{wang2025passive}%
  \BibitemOpen
  \bibfield  {author} {\bibinfo {author} {\bibfnamefont {Z.}~\bibnamefont {Wang}}\ and\ \bibinfo {author} {\bibfnamefont {L.}~\bibnamefont {Jiang}},\ }\bibfield  {title} {\bibinfo {title} {Passive environment-assisted quantum communication with gkp states},\ }\href {https://doi.org/10.1103/PhysRevX.15.021003} {\bibfield  {journal} {\bibinfo  {journal} {Phys. Rev. X}\ }\textbf {\bibinfo {volume} {15}},\ \bibinfo {pages} {021003} (\bibinfo {year} {2025})}\BibitemShut {NoStop}%
\bibitem [{\citenamefont {Hayden}\ and\ \citenamefont {King}(2005)}]{hayden2005correcting}%
  \BibitemOpen
  \bibfield  {author} {\bibinfo {author} {\bibfnamefont {P.}~\bibnamefont {Hayden}}\ and\ \bibinfo {author} {\bibfnamefont {C.}~\bibnamefont {King}},\ }\bibfield  {title} {\bibinfo {title} {Correcting quantum channels by measuring the environment},\ }\href {https://doi.org/https://dl.acm.org/doi/abs/10.5555/2011626.2011632} {\bibfield  {journal} {\bibinfo  {journal} {Quantum Info. Comput.}\ }\textbf {\bibinfo {volume} {5}},\ \bibinfo {pages} {156–160} (\bibinfo {year} {2005})}\BibitemShut {NoStop}%
\bibitem [{\citenamefont {Winter}(2005)}]{winter2005environment}%
  \BibitemOpen
  \bibfield  {author} {\bibinfo {author} {\bibfnamefont {A.}~\bibnamefont {Winter}},\ }\bibfield  {title} {\bibinfo {title} {On environment-assisted capacities of quantum channels},\ }\bibfield  {journal} {\bibinfo  {journal} {arXiv preprint quant-ph/0507045}\ }\href {https://doi.org/10.48550/arxiv.quant-ph/0507045} {10.48550/arxiv.quant-ph/0507045} (\bibinfo {year} {2005})\BibitemShut {NoStop}%
\bibitem [{\citenamefont {Karumanchi}\ \emph {et~al.}(2016{\natexlab{b}})\citenamefont {Karumanchi}, \citenamefont {Mancini}, \citenamefont {Winter},\ and\ \citenamefont {Yang}}]{karumanchi2016classical}%
  \BibitemOpen
  \bibfield  {author} {\bibinfo {author} {\bibfnamefont {S.}~\bibnamefont {Karumanchi}}, \bibinfo {author} {\bibfnamefont {S.}~\bibnamefont {Mancini}}, \bibinfo {author} {\bibfnamefont {A.}~\bibnamefont {Winter}},\ and\ \bibinfo {author} {\bibfnamefont {D.}~\bibnamefont {Yang}},\ }\bibfield  {title} {\bibinfo {title} {Classical capacities of quantum channels with environment assistance},\ }\href@noop {} {\bibfield  {journal} {\bibinfo  {journal} {Problems of Information Transmission}\ }\textbf {\bibinfo {volume} {52}},\ \bibinfo {pages} {214} (\bibinfo {year} {2016}{\natexlab{b}})}\BibitemShut {NoStop}%
\bibitem [{\citenamefont {Duan}\ \emph {et~al.}(2009)\citenamefont {Duan}, \citenamefont {Feng}, \citenamefont {Xin},\ and\ \citenamefont {Ying}}]{duan2009distinguishability}%
  \BibitemOpen
  \bibfield  {author} {\bibinfo {author} {\bibfnamefont {R.}~\bibnamefont {Duan}}, \bibinfo {author} {\bibfnamefont {Y.}~\bibnamefont {Feng}}, \bibinfo {author} {\bibfnamefont {Y.}~\bibnamefont {Xin}},\ and\ \bibinfo {author} {\bibfnamefont {M.}~\bibnamefont {Ying}},\ }\bibfield  {title} {\bibinfo {title} {Distinguishability of quantum states by separable operations},\ }\href {https://doi.org/10.1109/TIT.2008.2011524} {\bibfield  {journal} {\bibinfo  {journal} {IEEE Transactions on Information Theory}\ }\textbf {\bibinfo {volume} {55}},\ \bibinfo {pages} {1320} (\bibinfo {year} {2009})}\BibitemShut {NoStop}%
\bibitem [{\citenamefont {Watrous}(2005)}]{watrous2005bipartite}%
  \BibitemOpen
  \bibfield  {author} {\bibinfo {author} {\bibfnamefont {J.}~\bibnamefont {Watrous}},\ }\bibfield  {title} {\bibinfo {title} {Bipartite subspaces having no bases distinguishable by local operations and classical communication},\ }\href {https://doi.org/10.1103/PhysRevLett.95.080505} {\bibfield  {journal} {\bibinfo  {journal} {Phys. Rev. Lett.}\ }\textbf {\bibinfo {volume} {95}},\ \bibinfo {pages} {080505} (\bibinfo {year} {2005})}\BibitemShut {NoStop}%
\bibitem [{\citenamefont {Frenkel}\ and\ \citenamefont {Weiner}(2015)}]{frenkel2015classical}%
  \BibitemOpen
  \bibfield  {author} {\bibinfo {author} {\bibfnamefont {P.~E.}\ \bibnamefont {Frenkel}}\ and\ \bibinfo {author} {\bibfnamefont {M.}~\bibnamefont {Weiner}},\ }\bibfield  {title} {\bibinfo {title} {Classical information storage in an n-level quantum system},\ }\href {https://doi.org/10.1007/s00220-015-2463-0} {\bibfield  {journal} {\bibinfo  {journal} {Communications in Mathematical Physics}\ }\textbf {\bibinfo {volume} {340}},\ \bibinfo {pages} {563–574} (\bibinfo {year} {2015})}\BibitemShut {NoStop}%
\bibitem [{\citenamefont {Patra}\ \emph {et~al.}(2024)\citenamefont {Patra}, \citenamefont {Naik}, \citenamefont {Lobo}, \citenamefont {Sen}, \citenamefont {Guha}, \citenamefont {Bhattacharya}, \citenamefont {Alimuddin},\ and\ \citenamefont {Banik}}]{patra2024classical}%
  \BibitemOpen
  \bibfield  {author} {\bibinfo {author} {\bibfnamefont {R.~K.}\ \bibnamefont {Patra}}, \bibinfo {author} {\bibfnamefont {S.~G.}\ \bibnamefont {Naik}}, \bibinfo {author} {\bibfnamefont {E.~P.}\ \bibnamefont {Lobo}}, \bibinfo {author} {\bibfnamefont {S.}~\bibnamefont {Sen}}, \bibinfo {author} {\bibfnamefont {T.}~\bibnamefont {Guha}}, \bibinfo {author} {\bibfnamefont {S.~S.}\ \bibnamefont {Bhattacharya}}, \bibinfo {author} {\bibfnamefont {M.}~\bibnamefont {Alimuddin}},\ and\ \bibinfo {author} {\bibfnamefont {M.}~\bibnamefont {Banik}},\ }\bibfield  {title} {\bibinfo {title} {Classical analogue of quantum superdense coding and communication advantage of a single quantum system},\ }\href {https://doi.org/10.22331/q-2024-04-09-1315} {\bibfield  {journal} {\bibinfo  {journal} {Quantum}\ }\textbf {\bibinfo {volume} {8}},\ \bibinfo {pages} {1315} (\bibinfo {year} {2024})}\BibitemShut {NoStop}%
\bibitem [{\citenamefont {Heinosaari}\ and\ \citenamefont {Kerppo}(2024)}]{heinossari2024maximal}%
  \BibitemOpen
  \bibfield  {author} {\bibinfo {author} {\bibfnamefont {T.}~\bibnamefont {Heinosaari}}\ and\ \bibinfo {author} {\bibfnamefont {O.}~\bibnamefont {Kerppo}},\ }\bibfield  {title} {\bibinfo {title} {Maximal {E}lements of {Q}uantum {C}ommunication},\ }\href {https://doi.org/10.22331/q-2024-11-07-1515} {\bibfield  {journal} {\bibinfo  {journal} {{Quantum}}\ }\textbf {\bibinfo {volume} {8}},\ \bibinfo {pages} {1515} (\bibinfo {year} {2024})}\BibitemShut {NoStop}%
\bibitem [{\citenamefont {Chiribella}\ \emph {et~al.}(2025)\citenamefont {Chiribella}, \citenamefont {Roy}, \citenamefont {Guha},\ and\ \citenamefont {Saha}}]{chiribella2025communication}%
  \BibitemOpen
  \bibfield  {author} {\bibinfo {author} {\bibfnamefont {G.}~\bibnamefont {Chiribella}}, \bibinfo {author} {\bibfnamefont {S.}~\bibnamefont {Roy}}, \bibinfo {author} {\bibfnamefont {T.}~\bibnamefont {Guha}},\ and\ \bibinfo {author} {\bibfnamefont {S.}~\bibnamefont {Saha}},\ }\bibfield  {title} {\bibinfo {title} {Communication power of a noisy qubit},\ }\href {https://doi.org/10.1103/PhysRevLett.134.080803} {\bibfield  {journal} {\bibinfo  {journal} {Physical Review Letters}\ }\textbf {\bibinfo {volume} {134}},\ \bibinfo {pages} {080803} (\bibinfo {year} {2025})}\BibitemShut {NoStop}%
\bibitem [{\citenamefont {Holevo}(1998)}]{holevo1998capacity}%
  \BibitemOpen
  \bibfield  {author} {\bibinfo {author} {\bibfnamefont {A.}~\bibnamefont {Holevo}},\ }\bibfield  {title} {\bibinfo {title} {The capacity of the quantum channel with general signal states},\ }\href {https://doi.org/10.1109/18.651037} {\bibfield  {journal} {\bibinfo  {journal} {IEEE Transactions on Information Theory}\ }\textbf {\bibinfo {volume} {44}},\ \bibinfo {pages} {269–273} (\bibinfo {year} {1998})}\BibitemShut {NoStop}%
\bibitem [{\citenamefont {Schumacher}\ and\ \citenamefont {Westmoreland}(1997)}]{schumacher1997sending}%
  \BibitemOpen
  \bibfield  {author} {\bibinfo {author} {\bibfnamefont {B.}~\bibnamefont {Schumacher}}\ and\ \bibinfo {author} {\bibfnamefont {M.~D.}\ \bibnamefont {Westmoreland}},\ }\bibfield  {title} {\bibinfo {title} {Sending classical information via noisy quantum channels},\ }\href {https://doi.org/10.1103/PhysRevA.56.131} {\bibfield  {journal} {\bibinfo  {journal} {Physical Review A}\ }\textbf {\bibinfo {volume} {56}},\ \bibinfo {pages} {131} (\bibinfo {year} {1997})}\BibitemShut {NoStop}%
\bibitem [{\citenamefont {Lee}\ \emph {et~al.}(2014)\citenamefont {Lee}, \citenamefont {Wei},\ and\ \citenamefont {de~Wolf}}]{lee2014upper}%
  \BibitemOpen
  \bibfield  {author} {\bibinfo {author} {\bibfnamefont {T.}~\bibnamefont {Lee}}, \bibinfo {author} {\bibfnamefont {Z.}~\bibnamefont {Wei}},\ and\ \bibinfo {author} {\bibfnamefont {R.}~\bibnamefont {de~Wolf}},\ }\href {https://doi.org/10.48550/arXiv.1407.4308} {\bibinfo {title} {Some upper and lower bounds on psd-rank}} (\bibinfo {year} {2014}),\ \Eprint {https://arxiv.org/abs/1407.4308} {arXiv:1407.4308 [cs.CC]} \BibitemShut {NoStop}%
\bibitem [{\citenamefont {Gouveia}\ \emph {et~al.}(2013)\citenamefont {Gouveia}, \citenamefont {Parrilo},\ and\ \citenamefont {Thomas}}]{gouveia2013lifts}%
  \BibitemOpen
  \bibfield  {author} {\bibinfo {author} {\bibfnamefont {J.}~\bibnamefont {Gouveia}}, \bibinfo {author} {\bibfnamefont {P.~A.}\ \bibnamefont {Parrilo}},\ and\ \bibinfo {author} {\bibfnamefont {R.~R.}\ \bibnamefont {Thomas}},\ }\bibfield  {title} {\bibinfo {title} {Lifts of convex sets and cone factorizations},\ }\href {https://doi.org/10.1287/moor.1120.0575} {\bibfield  {journal} {\bibinfo  {journal} {Mathematics of Operations Research}\ }\textbf {\bibinfo {volume} {38}},\ \bibinfo {pages} {248–264} (\bibinfo {year} {2013})}\BibitemShut {NoStop}%
\bibitem [{\citenamefont {Song}\ \emph {et~al.}(2024)\citenamefont {Song}, \citenamefont {Chen},\ and\ \citenamefont {{\DJ}okovi{\'c}}}]{song2024existence}%
  \BibitemOpen
  \bibfield  {author} {\bibinfo {author} {\bibfnamefont {Z.}~\bibnamefont {Song}}, \bibinfo {author} {\bibfnamefont {L.}~\bibnamefont {Chen}},\ and\ \bibinfo {author} {\bibfnamefont {D.~{\v{Z}}.}\ \bibnamefont {{\DJ}okovi{\'c}}},\ }\bibfield  {title} {\bibinfo {title} {The existence of distinguishable bases in three-dimensional subspaces of qutrit-qudit systems under one-way local projective measurements and classical communication},\ }\href {https://doi.org/10.1109/TIT.2024.3435412} {\bibfield  {journal} {\bibinfo  {journal} {IEEE Transactions on Information Theory}\ }\textbf {\bibinfo {volume} {70}},\ \bibinfo {pages} {8806} (\bibinfo {year} {2024})}\BibitemShut {NoStop}%
\bibitem [{\citenamefont {Horodecki}\ \emph {et~al.}(2003)\citenamefont {Horodecki}, \citenamefont {Horodecki},\ and\ \citenamefont {Oppenheim}}]{horodecki2003reversible}%
  \BibitemOpen
  \bibfield  {author} {\bibinfo {author} {\bibfnamefont {M.}~\bibnamefont {Horodecki}}, \bibinfo {author} {\bibfnamefont {P.}~\bibnamefont {Horodecki}},\ and\ \bibinfo {author} {\bibfnamefont {J.}~\bibnamefont {Oppenheim}},\ }\bibfield  {title} {\bibinfo {title} {Reversible transformations from pure to mixed states and the unique measure of information},\ }\href {https://doi.org/10.1103/PhysRevA.67.062104} {\bibfield  {journal} {\bibinfo  {journal} {Phys. Rev. A}\ }\textbf {\bibinfo {volume} {67}},\ \bibinfo {pages} {062104} (\bibinfo {year} {2003})}\BibitemShut {NoStop}%
\bibitem [{\citenamefont {Gour}\ \emph {et~al.}(2015)\citenamefont {Gour}, \citenamefont {M{\"u}ller}, \citenamefont {Narasimhachar}, \citenamefont {Spekkens},\ and\ \citenamefont {Halpern}}]{gour2015resource}%
  \BibitemOpen
  \bibfield  {author} {\bibinfo {author} {\bibfnamefont {G.}~\bibnamefont {Gour}}, \bibinfo {author} {\bibfnamefont {M.~P.}\ \bibnamefont {M{\"u}ller}}, \bibinfo {author} {\bibfnamefont {V.}~\bibnamefont {Narasimhachar}}, \bibinfo {author} {\bibfnamefont {R.~W.}\ \bibnamefont {Spekkens}},\ and\ \bibinfo {author} {\bibfnamefont {N.~Y.}\ \bibnamefont {Halpern}},\ }\bibfield  {title} {\bibinfo {title} {The resource theory of informational nonequilibrium in thermodynamics},\ }\href {https://doi.org/https://doi.org/10.1016/j.physrep.2015.04.003} {\bibfield  {journal} {\bibinfo  {journal} {Physics Reports}\ }\textbf {\bibinfo {volume} {583}},\ \bibinfo {pages} {1} (\bibinfo {year} {2015})}\BibitemShut {NoStop}%
\bibitem [{\citenamefont {Yard}\ \emph {et~al.}(2011)\citenamefont {Yard}, \citenamefont {Hayden},\ and\ \citenamefont {Devetak}}]{yard2011quantum}%
  \BibitemOpen
  \bibfield  {author} {\bibinfo {author} {\bibfnamefont {J.}~\bibnamefont {Yard}}, \bibinfo {author} {\bibfnamefont {P.}~\bibnamefont {Hayden}},\ and\ \bibinfo {author} {\bibfnamefont {I.}~\bibnamefont {Devetak}},\ }\bibfield  {title} {\bibinfo {title} {Quantum broadcast channels},\ }\href {https://doi.org/10.1109/TIT.2011.2165811} {\bibfield  {journal} {\bibinfo  {journal} {IEEE Transactions on Information Theory}\ }\textbf {\bibinfo {volume} {57}},\ \bibinfo {pages} {7147} (\bibinfo {year} {2011})}\BibitemShut {NoStop}%
\bibitem [{\citenamefont {Bhattacharya}\ \emph {et~al.}(2021)\citenamefont {Bhattacharya}, \citenamefont {Maity}, \citenamefont {Guha}, \citenamefont {Chiribella},\ and\ \citenamefont {Banik}}]{bhattacharya2021prx}%
  \BibitemOpen
  \bibfield  {author} {\bibinfo {author} {\bibfnamefont {S.~S.}\ \bibnamefont {Bhattacharya}}, \bibinfo {author} {\bibfnamefont {A.~G.}\ \bibnamefont {Maity}}, \bibinfo {author} {\bibfnamefont {T.}~\bibnamefont {Guha}}, \bibinfo {author} {\bibfnamefont {G.}~\bibnamefont {Chiribella}},\ and\ \bibinfo {author} {\bibfnamefont {M.}~\bibnamefont {Banik}},\ }\bibfield  {title} {\bibinfo {title} {{R}andom-{R}eceiver {Q}uantum {C}ommunication},\ }\href {https://doi.org/10.1103/PRXQuantum.2.020350} {\bibfield  {journal} {\bibinfo  {journal} {PRX Quantum}\ }\textbf {\bibinfo {volume} {2}},\ \bibinfo {pages} {020350} (\bibinfo {year} {2021})}\BibitemShut {NoStop}%
\bibitem [{\citenamefont {Fawzi}\ \emph {et~al.}(2015)\citenamefont {Fawzi}, \citenamefont {Gouveia}, \citenamefont {Parrilo}, \citenamefont {Robinson},\ and\ \citenamefont {Thomas}}]{fawzi2015positive}%
  \BibitemOpen
  \bibfield  {author} {\bibinfo {author} {\bibfnamefont {H.}~\bibnamefont {Fawzi}}, \bibinfo {author} {\bibfnamefont {J.}~\bibnamefont {Gouveia}}, \bibinfo {author} {\bibfnamefont {P.~A.}\ \bibnamefont {Parrilo}}, \bibinfo {author} {\bibfnamefont {R.~Z.}\ \bibnamefont {Robinson}},\ and\ \bibinfo {author} {\bibfnamefont {R.~R.}\ \bibnamefont {Thomas}},\ }\bibfield  {title} {\bibinfo {title} {Positive semidefinite rank},\ }\href {https://doi.org/10.1007/s10107-015-0922-1} {\bibfield  {journal} {\bibinfo  {journal} {Mathematical Programming}\ }\textbf {\bibinfo {volume} {153}},\ \bibinfo {pages} {133} (\bibinfo {year} {2015})}\BibitemShut {NoStop}%
\bibitem [{\citenamefont {Dall’Arno}\ \emph {et~al.}(2017)\citenamefont {Dall’Arno}, \citenamefont {Brandsen}, \citenamefont {Tosini}, \citenamefont {Buscemi},\ and\ \citenamefont {Vedral}}]{dall2017no}%
  \BibitemOpen
  \bibfield  {author} {\bibinfo {author} {\bibfnamefont {M.}~\bibnamefont {Dall’Arno}}, \bibinfo {author} {\bibfnamefont {S.}~\bibnamefont {Brandsen}}, \bibinfo {author} {\bibfnamefont {A.}~\bibnamefont {Tosini}}, \bibinfo {author} {\bibfnamefont {F.}~\bibnamefont {Buscemi}},\ and\ \bibinfo {author} {\bibfnamefont {V.}~\bibnamefont {Vedral}},\ }\bibfield  {title} {\bibinfo {title} {No-hypersignaling principle},\ }\href {https://doi.org/10.1103/PhysRevLett.119.020401} {\bibfield  {journal} {\bibinfo  {journal} {Physical Review Letters}\ }\textbf {\bibinfo {volume} {119}},\ \bibinfo {pages} {020401} (\bibinfo {year} {2017})}\BibitemShut {NoStop}%
\end{thebibliography}%
\section*{Appendix}
\subsection{Proof of Proposition \ref{prop1}}\label{a1}
Let us first consider a quantum channel \(\mathcal{N}:\mathcal{L}(\mathcal{H}_{d_A})\mapsto\mathcal{L}(\mathcal{H}_{d_B})\), for which \(d_A\leq d_B\). Suppose, \(P\) is an arbitrary \(n\)-input-\(m\)-output channel matrix, obtained by sending the quantum states \(\{\rho_i\}_{i=1}^n\in\mathcal{L}(\mathcal{H}_{d_A})\) through the channel \(\mathcal{N}\) and performing a \(m\)-outcome POVM \(\{M_j|M_j\geq0,~\sum_jM_j=\mathbb{I}_{d_B}\}\) at the decoders end. Therefore, \(P\in\mathcal{P}^{n\to m}(\mathcal{N}(\mathcal{Q}_{d_A}))\) and assume, if possible \(P\notin\mathcal{P}^{n\to m}(Q_{d_A})\). 

Now, consider a perfect \(d_A\)-dimensional quantum channel \(\mathcal{I}_{d_A}:\mathcal{L}(\mathcal{H}_{d_A})\mapsto\mathcal{L}(\mathcal{H}_{d_B})\). Now, if the sender, Alice sends the same \(\{\rho_i\}_{i=1}^n\) states through \(\mathcal{I}_{d_A}\), then the receiver, Bob will get these states identically. Then before decoding if Bob applies the channel \(\mathcal{N}\) and perform the same POVM \(\{M_j\}_{j=1}^m\), then they can effectively generate the channel matrix \(P\). In other words, \(P\in\mathcal{P}^{n\to m}(\mathcal{Q}_{d_A})\). But this leads to a contradiction and hence, 
\begin{equation}\label{ep11}
    \mathcal{P}^{n\to m}(\mathcal{N}(\mathcal{Q}_{d_A}))\subseteq\mathcal{P}^{n\to m}(Q_{d_A}), \text{ when }d_A\leq d_B.
\end{equation}
We will now move to the other side of the statement, considering the same quantum channel \(\mathcal{N}\) with \(d_A>d_B\). Now, any \(n\times m\) channel matrix \(P\in\mathcal{P}^{n\to m}(\mathcal{N}(\mathcal{Q}_{d_A}))\), we have \[[P]_{ij}=\Tr[M_j\mathcal{N}(\rho_i)],\] where \(\rho_i\in\mathcal{L}(\mathcal{H}_A),~\mathcal{N}(\rho_i)\in\mathcal{L}(\mathcal{H}_B)\) and \(M_j\in\mathcal{L}(\mathcal{H}_B)\) is a POVM operator. In other words, every element of the channel matrix \(P\), can be realized as an inner product between two \(d_B\times d_B\) positive semi-definite matrices \(\{\mathcal{N}(\rho_i)\}_{i=1}^n\) and \(\{M_j\}_{j=1}^m\). Therefore, for any \(P\in\mathcal{P}^{n\to m}(\mathcal{N}(\mathcal{Q}_{d_A}))\), the positive semi-definite rank will satisfy \(\text{rank}_{psd}(P)\leq d_B\) \cite{fawzi2015positive, lee2014upper}. Hence the channel matrix \(P\) can be sufficiently simulated by communicating a \(d_B\)-dimensional quantum system \cite{heinossari2024maximal}. That is,   
\begin{equation}\label{ep12}
    P\in\mathcal{P}^{n\to m}(\mathcal{Q}_{d_B}), \text{ when }d_A>d_B.
\end{equation}
Eqs. (\ref{ep11}) and (\ref{ep12}) then readily implies 
\begin{equation}\label{ep13}
\mathcal{P}^{n\to m}(\mathcal{N}(\mathcal{Q}_{d_A}))\subseteq\mathcal{P}^{n\to m}(Q_d)\text{, where }d=\min\{d_A,d_B\}.\end{equation}
\subsection{Different degrees of environment assistance and classification of the sets \(\mathcal{P}^{n\to m}_{\operatorname{env}}(\mathcal{Q}_{d_A})\)}\label{a2}
Depending upon the causal constraints imposed on the decoding measurement performed by the receiver and the environment, we can assign a clear hierarchy on the degree of environment assistance. The notion of such degrees, from the perspective of conventional EACC measures, has been introduced in \cite{winter2005environment}. However, in the current context, we would like to revisit those aspects from the generalized communication perspective.

Let us first consider the scenario of \textit{environment-assisted} classical communication for the channel \(\mathcal{N}:\mathcal{L}(\mathcal{H}_A)\mapsto\mathcal{L}(\mathcal{H}_B)\), involving an one-way LOCC assistance from the environment to the receiver (Bob) to implement the decoding POVM. The set of all such \(n\times m\) channel matrix, generated under such an assistance, is denoted as \(\mathcal{P}^{n\to m}_{\leftarrow}(\mathcal{N}(\mathcal{Q}_{d_A}))\). Therefore, for any \(P\in\mathcal{P}^{n\to m}_{\leftarrow}(\mathcal{N}(\mathcal{Q}_{d_A})\), we have: 
\begin{align}\label{em1}
[P]_{ij}=\overleftarrow{p}(y_j|x_i)=\sum_k\Tr[(\Lambda_{j|k}^{B}\otimes\Lambda_k^{E})\mathcal{V}_{\mathcal{N}}(\rho_i)],
\end{align}
where $\{\Lambda_k^E\}$ is the decoding POVM applied on the environment system and $\{\Lambda_{j|k}^B\}_j$ is the POVM applied by Bob, depending on the $k^{th}$ outcome at the environment end. Additionally, \(\mathcal{V}_{\mathcal{N}}\) denotes the isometry corresponding to \(\mathcal{N}\) and \(\rho_i\in\mathcal{L}(\mathcal{H}_A)\) denotes Alice's preparation corresponding to the input \(x_i\).

Conversely, in the \textit{environment-assisting} case, implementation of decoding POVM requires a one-way communication from receiver to the environment. A channel matrix \(P\in\mathcal{P}^{n\to m}_{\rightarrow}(\mathcal{N}(\mathcal{Q}_{d_A})))\), where \(\mathcal{P}^{n\to m}_{\rightarrow}(\mathcal{N}(\mathcal{Q}_{d_A}))\) denotes the set of all \(n\times m\) channel matrices achievable via such simulation, is given by
\begin{align}\label{em2}
[P]_{ij}=\overrightarrow{p}(y_j|x_i)=\sum_{l}\sum_{k}\Tr[(\Lambda_{l}^{B}\otimes\Lambda_{k|l}^{E})\mathcal{V}_{\mathcal{N}}(\rho_i)],
\end{align}
where $j:=f(k,l)$ is decoding function over environment's outcome $k$ and Bob's outcome $l$. 

Similarly, one could consider the scenario of unbounded \textit{LOCC-}, \textit{SEP-} and \textit{PPT-assisted} classical communication from the environment, respectively by lifting the limitations on the set of allowed measurements. The corresponding set of \(n\times m\) channel matrices can be respectively denoted as \(\mathcal{P}^{n\to m}_{\text{LOCC}}(\mathcal{N}(\mathcal{Q}_{d_A})),~\mathcal{P}^{n\to m}_{\text{SEP}}(\mathcal{N}(\mathcal{Q}_{d_A}))\) and \(\mathcal{P}^{n\to m}_{\text{PPT}}(\mathcal{N}(\mathcal{Q}_{d_A}))\).

Finally, the \textit{minimal} assistance of environment is a scenario, where the implementation of decoding measurements, performed by both the receiver and environment, does not require any communication between the environment and Bob. We denote the set of all \(n\times m\) channel matrices in this context, as \(\mathcal{P}^{n\to m}_{\text{min}}(\mathcal{N}(\mathcal{Q}_{d_A}))\) and any matrix \(P\in\mathcal{P}^{n\to m}_{\text{min}}(\mathcal{N}(\mathcal{Q}_{d_A}))\) can be written as:
\begin{equation}\label{em3}
[P]_{ij}=p(y_j|x_i)=\Tr[(\Lambda_l^{B}\otimes\Lambda_k^{E})\mathcal{V}_{\mathcal{N}}(\rho_i)]
\end{equation}
where $j:=f(k,l)$ as defined earlier.

With all these degrees of environment assistance we can conclude a set inclusion relation (see Fig. \ref{fig1}) for every pair of \(n,m\). This, in turn, justifies the name \textit{minimal environment assistance} in the present work.  
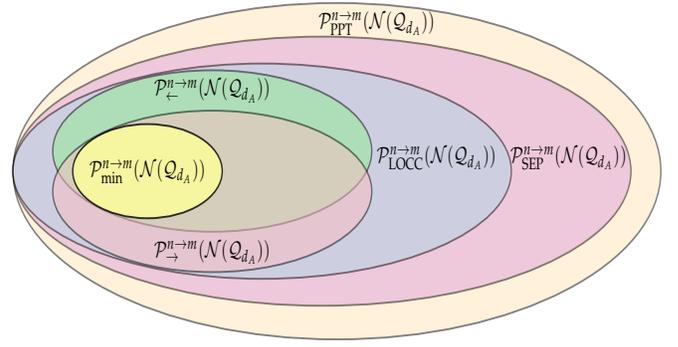
\begin{figure}[t!]
    \centering
    \resizebox{0.485\textwidth}{4.5cm}{%
    \begin{tikzpicture}

\definecolor{pptcol}{RGB}{255,228,181}   
\definecolor{sepcol}{RGB}{221,160,221}   
\definecolor{locccol}{RGB}{173,216,230}  
\definecolor{rightcol}{RGB}{255,190,193} 
\definecolor{leftcol}{RGB}{144,238,144}  
\definecolor{envcol}{RGB}{255,255,153}   

\filldraw[pptcol,opacity=0.5,draw=black,thick,rounded corners=25pt]
  (-6,0) ellipse (6.5 and 2.5);
\node at (-5.2,2.2) {$\mathcal{P}^{n\to m}_{\text{PPT}}(\mathcal{N}(\mathcal{Q}_{d_A}))$};

\filldraw[sepcol,opacity=0.5,draw=black,thick,rounded corners=25pt]
  (-6.3,0) ellipse (6.2 and 2.0);
\node at (-1.35,0.2) {$\mathcal{P}^{n\to m}_{\text{SEP}}(\mathcal{N}(\mathcal{Q}_{d_A}))$};

\filldraw[locccol,opacity=0.5,draw=black,thick,rounded corners=25pt]
  (-7.5,0) ellipse (5 and 1.6);
\node at (-4,0.2) {$\mathcal{P}^{n\to m}_{\text{LOCC}}(\mathcal{N}(\mathcal{Q}_{d_A}))$};

\filldraw[leftcol,opacity=0.5,draw=black,thick,rounded corners=20pt]
  (-8.5,0.3) ellipse (3.2 and 1.2);
\node at (-8.5,1.2) {$\mathcal{P}^{n\to m}_{\leftarrow}(\mathcal{N}(\mathcal{Q}_{d_A}))$};

\filldraw[rightcol,opacity=0.5,draw=black,thick,rounded corners=20pt]
  (-8.5,-0.3) ellipse (3.2 and 1.2);
\node at (-8.5,-1.2) {$\mathcal{P}^{n\to m}_{\to}(\mathcal{N}(\mathcal{Q}_{d_A}))$};

\filldraw[envcol,opacity=0.8,draw=black,thick,rounded corners=15pt]
  (-9.8,0) ellipse (1.5 and 0.7);
\node at (-9.8,0) {$\mathcal{P}^{n\to m}_{\min}(\mathcal{N}(\mathcal{Q}_{d_A}))$};

\end{tikzpicture}
}
    \caption{Classification of the sets \(\mathcal{P}^{n\to m}_{\text{env}}(\mathcal{Q}_{d_A})\) depending on various causal constraints imposed upon the decoding measurements of the receiver and environment.}\label{fig1}
    \label{fig:placeholder}
\end{figure}
\subsection{Proof of Lemma \ref{max}}\label{a3}
Every quantum channel \(\mathcal{N}:\mathcal{L}(\mathcal{H}_A)\mapsto\mathcal{L}(\mathcal{H}_B)\) can be associated with an isometry \(\mathcal{V}_{\mathcal{N}}:\mathcal{H}_A\mapsto\mathcal{H}_B\otimes\mathcal{H}_E\), where \(\mathcal{H}_E\) is the Hilbert space associated with the environment. Since, \(\mathcal{V}_{\mathcal{N}}^{\dagger}\mathcal{V}_{\mathcal{N}}=\mathbb{I}_{d_A}\), it is easy to identify \(\mathcal{V}_{\mathcal{N}}\) as an one-rank quantum channel itself: \[\mathcal{V}_{\mathcal{N}}X_A\mathcal{V}_{\mathcal{N}}^{\dagger} = X_{B'}, \text{ such that } X_k\in\mathcal{L}(\mathcal{H}_{k}),\] 
where \(\text{dim.}(\mathcal{H}_A)=\text{dim.}(\mathcal{H}_{B'})\leq\text{dim.}(\mathcal{H}_{B}\otimes\mathcal{H}_{E})\). Also note that, the limited assistance from environment restricts to perform any potential joint (possibly, entangled) measurement to perform on \(\mathcal{L}(\mathcal{H}_B\otimes\mathcal{H}_E)\simeq\mathcal{L}(\mathcal{H}_{B'})\). This further implies
\begin{equation}\label{el11}
\mathcal{P}^{n\to m}_{env}(\mathcal{N}(\mathcal{Q}_{d_A}))\subseteq\mathcal{P}^{n\to m}(\mathcal{V}_{\mathcal{N}}(\mathcal{Q}_{d_A})).\end{equation}
It is now trivial to argue from Eq. (\ref{ep13}), that
\begin{equation}\label{el12}
\mathcal{P}^{n\to m}(\mathcal{V}_{\mathcal{N}}(\mathcal{Q}_{d_A}))\subseteq\mathcal{P}^{n\to m}(\mathcal{Q}_{d_A}).
\end{equation}
Eq. (\ref{el11}) and (\ref{el12}), hence together implies
\[\mathcal{P}^{n\to m}_{env}(\mathcal{N}(\mathcal{Q}_{d_A}))\subseteq\mathcal{P}^{n\to m}(\mathcal{Q}_{d_A}).\]
\subsection{Proof of Lemma \ref{opt}}\label{a4}
Consider a quantum channel \(\mathcal{N}:\mathcal{L}(\mathbb{C}^{d_A})\mapsto\mathcal{L}(\mathbb{C}^{d_B})\) (possibly \(d_A\geq d_B\)) for which EACC is exactly \(\log d_A\). In other words, \(d_A\)-numbers of classical symbols can be reliably transmitted through the channel \(\mathcal{N}\), when assisted by the environment.

Therefore, \(P=\mathbb{I}_{d_A}\) is a \(d_A\times d_A\) channel matrix, such that \(P\in\mathcal{P}^{n\to m}_{env}(\mathcal{Q}_{d_A})\). At this point, let us consider a function \(\Lambda_{\max}\) (namely, the \textit{max-monotone}) for any non-negative matrix \(M\) of order \(r\times c\), such that 
\begin{equation}\label{el21}
\Lambda_{\max}(M)=\sum_{i=1}^r~\max_{j\in\{1,\cdots, c\}}M_{ij}.\end{equation}
Now, for every row-stochastic matrix \(M\), we have  \cite{lee2014upper}
\begin{equation}\label{el22}
\text{rank}_{\text{psd}}(M)\geq \Lambda_{\max}(M).
\end{equation} 
Therefore, by noting that \(\Lambda_{\max}(\mathbb{I}_{d_A})=d_A\), we can conclude 
\(\text{rank}_{\text{psd}}(\mathbb{I}_{d_A})\geq d_A.\)

On the other hand, with a \(d_A\)-dimensional perfect quantum channel one can trivially simulate \(\mathbb{I}_{d_A}\): Using the computational basis preparation \(\{\ket{i}\}_{i=0}^{d_A-1}\) and a computational basis measurement \(\{\ketbra{i}{i}\}_{i=0}^{d_A-1}\). Hence, \(\text{rank}_{\text{psd}}(\mathbb{I}_{d_A})\leq d_A,\)
which further implies 
\begin{equation}\label{el23}
\text{rank}_{\text{psd}}(\mathbb{I}_{d_A})=d_A.
\end{equation}
Therefore, \(P\notin\mathcal{P}^{n\to m}(\mathcal{Q}_{\tilde{d}})\), where \(\tilde{d}<d_A\). That is, in general
\[\mathcal{P}_{env}^{n\to m}(\mathcal{N}(\mathcal{Q}_{d_A}))\nsubseteq\mathcal{P}^{n\to m}(\mathcal{Q}_{\tilde{d}}),\quad\forall~\tilde{d}<d_A.\]
\subsection{Two instrumental Lemmas for PSD rank}
\begin{lemma}\label{sl1}\cite{fawzi2015positive}
Consider a non-negative block matrix \(\mathbb{M}\) of order \(r\times c\), such that
\begin{align*}
    \mathbb{M}=\begin{pmatrix}
      \mathbb{A}&\Theta\\
      \mathbb{B}&\mathbb{D}
    \end{pmatrix},
\end{align*}
where, \(\mathbb{A},~\mathbb{B}\text{ and }\mathbb{D}\) are the non-negative matrices of orders \(r_1\times c_1,~r_2\times c_1\text{ and }r_2\times c_2\) respectively and \(\Theta\) is a null matrix of order \(r_1\times c_2\). Also, \(r_1+r_2=r\) and \(c_1+c_2=c\). Then,
\begin{equation}\label{esl11}
    \text{rank}_{\text{psd}}(\mathbb{M})\geq \text{rank}_{\text{psd}}(\mathbb{A})+\text{rank}_{\text{psd}}(\mathbb{D}).
\end{equation}
Moreover the equality holds when \(\mathbb{B}\) is also a null matrix. 
\end{lemma}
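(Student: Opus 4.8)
The plan is to exploit the single zero block \(\Theta\) in the top-right corner, which is the only structural hypothesis available, and to convert it into a subspace-orthogonality statement at the level of an optimal psd factorization of \(\mathbb{M}\). First I would fix such a factorization: let \(k=\operatorname{rank}_{\operatorname{psd}}(\mathbb{M})\) and let \(\{R_i\}_{i=1}^{r}\), \(\{C_j\}_{j=1}^{c}\) be \(k\times k\) psd matrices with \(\mathbb{M}_{ij}=\Tr(R_iC_j)\). For every top row \(i\le r_1\) and every right column \(j>c_1\) we have \(\mathbb{M}_{ij}=\Theta_{ij}=0\), hence \(\Tr(R_iC_j)=0\). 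Because both factors are psd, this forces \(R_iC_j=0\), and in particular \(\operatorname{Range}(R_i)\perp\operatorname{Range}(C_j)\). Making this passage precise is the mechanism the rest of the argument rests on.

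Next I would introduce the subspace \(U:=\operatorname{Span}\{\operatorname{Range}(R_i):i\le r_1\}\subseteq\mathbb{C}^k\) and decompose \(\mathbb{C}^k=U\oplus U^{\perp}\). The orthogonality above says precisely that every right-column factor \(C_j\) with \(j>c_1\) has range inside \(U^{\perp}\). I would then compress each diagonal block to its natural subspace. For the top-left block, setting \(\tilde C_j:=P_UC_jP_U\) leaves \(\Tr(R_i\tilde C_j)=\Tr(R_iC_j)=\mathbb{A}_{ij}\) unchanged for \(i\le r_1\), since each such \(R_i\) already lives on \(U\); restricting to \(U\) then yields a psd factorization of \(\mathbb{A}\) of size \(\dim U\), so \(\operatorname{rank}_{\operatorname{psd}}(\mathbb{A})\le\dim U\). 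Symmetrically, compressing the bottom-row factors by \(P_{U^{\perp}}\) and using that each right-column \(C_j\) lives on \(U^{\perp}\) gives a factorization of \(\mathbb{D}\) of size \(\dim U^{\perp}\), so \(\operatorname{rank}_{\operatorname{psd}}(\mathbb{D})\le\dim U^{\perp}\). Adding the two bounds gives \(\operatorname{rank}_{\operatorname{psd}}(\mathbb{A})+\operatorname{rank}_{\operatorname{psd}}(\mathbb{D})\le\dim U+\dim U^{\perp}=k\), which is the claimed bound \eqref{esl11}.

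For the equality case, when \(\mathbb{B}\) is null as well, I would establish the matching upper bound \(\operatorname{rank}_{\operatorname{psd}}(\mathbb{M})\le\operatorname{rank}_{\operatorname{psd}}(\mathbb{A})+\operatorname{rank}_{\operatorname{psd}}(\mathbb{D})\) by an explicit block-diagonal construction. Taking optimal factorizations \(\{A_i\},\{A_j'\}\) of \(\mathbb{A}\) and \(\{D_i\},\{D_j'\}\) of \(\mathbb{D}\), I would set \(R_i=A_i\oplus0\), \(C_j=A_j'\oplus0\) for the top/left indices and \(R_i=0\oplus D_i\), \(C_j=0\oplus D_j'\) for the bottom/right indices. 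The trace of a product of two matrices supported on complementary summands vanishes, so the two off-diagonal blocks are reproduced as the zero matrices \(\Theta\) and \(\mathbb{B}\), while the diagonal blocks reproduce \(\mathbb{A}\) and \(\mathbb{D}\); this realizes \(\mathbb{M}\) with factors of size \(\operatorname{rank}_{\operatorname{psd}}(\mathbb{A})+\operatorname{rank}_{\operatorname{psd}}(\mathbb{D})\).

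The main obstacle is the first step: making the passage from \(\Tr(R_iC_j)=0\) to genuine range-orthogonality watertight, and then verifying that the compressions \(P_UC_jP_U\) and \(P_{U^{\perp}}R_iP_{U^{\perp}}\) neither destroy positivity nor alter the relevant trace pairings. Both are routine once one writes \(\Tr(R_iC_j)=\Tr\!\big(C_j^{1/2}R_iC_j^{1/2}\big)=0\) with a psd integrand, but the bookkeeping of which factor already lives on which summand is exactly where the zero-block hypothesis is consumed, and tracking that dependency correctly is the crux of the argument.
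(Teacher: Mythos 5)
Your proposal is correct, and it supplies something the paper itself does not: the paper's entire ``proof'' of this lemma is a citation to Theorem 2.10 of Fawzi \emph{et al.}, with no argument given. Your argument is the genuine one, and every step checks out. The key mechanism is sound: for psd $R_i,C_j$, writing $\Tr(R_iC_j)=\|R_i^{1/2}C_j^{1/2}\|_F^2=0$ forces $R_iC_j=0$, and since $R_i$ is Hermitian, $\ker(R_i)=\operatorname{Range}(R_i)^{\perp}$, so the ranges are indeed orthogonal. The compression step is also correct: each $R_i$ with $i\le r_1$ satisfies $P_UR_iP_U=R_i$ by construction of $U$, so $\Tr(R_i\,P_UC_jP_U)=\Tr(R_iC_j)=\mathbb{A}_{ij}$ and restriction to $U$ gives a factorization of $\mathbb{A}$ of size $\dim U$; symmetrically each $C_j$ with $j>c_1$ satisfies $P_{U^{\perp}}C_jP_{U^{\perp}}=C_j$, yielding a factorization of $\mathbb{D}$ of size $\dim U^{\perp}$, and the two bounds add to $k$. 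The equality case via the block-diagonal direct-sum factorization is likewise correct, since factors supported on complementary summands have vanishing trace pairing, which reproduces both zero blocks $\Theta$ and $\mathbb{B}$. In substance your argument is the standard proof of the cited theorem (it is essentially how Fawzi \emph{et al.} prove it), so you have not found a new route so much as made the paper self-contained at this point --- which is a net gain, since the lemma is load-bearing for Lemma \ref{sl2}, Lemma \ref{chanmatclas}, and Lemma \ref{sl3} downstream.
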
 
\begin{proof}
    The statement is proved in the Theorem 2.10 of the Ref. \cite{fawzi2015positive}. 
\end{proof}
\begin{lemma}\label{sl2}
    For any non-negative triangular matrix, the psd rank is lower-bounded by the order of the matrix. The bound further saturates if the matrix is diagonal. 
\end{lemma}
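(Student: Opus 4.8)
The plan is to establish the two halves separately: the lower bound $\operatorname{rank}_{\operatorname{psd}}(M)\ge n$ for an $n\times n$ non-negative triangular matrix $M$ (under the understanding that all diagonal entries are strictly positive, which is necessary---a zero on the diagonal already breaks the bound, since e.g. a rank-one triangular matrix has psd rank one), and the matching upper bound $\operatorname{rank}_{\operatorname{psd}}(M)\le n$ in the diagonal case.

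First I would reduce to the lower-triangular case: since Definition \ref{def0} is symmetric in the two families $\{R_i\}$ and $\{C_j\}$, transposition leaves the psd rank invariant, so an upper-triangular matrix can be replaced by its lower-triangular transpose. For a lower-triangular $M$ the zeros sit in the top-right block, which is exactly the null pattern $\Theta$ required by Lemma \ref{sl1}. I would then partition
\[
M=\begin{pmatrix} M_{11} & \mathbf{0}_{1\times(n-1)}\\ \mathbf{b} & M'\end{pmatrix},
\]
peeling off the leading $1\times1$ block $M_{11}>0$ (whose psd rank is $1$, being a nonzero scalar), with $M'$ the lower-right $(n-1)\times(n-1)$ block, itself lower-triangular with positive diagonal. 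Lemma \ref{sl1} then yields $\operatorname{rank}_{\operatorname{psd}}(M)\ge \operatorname{rank}_{\operatorname{psd}}(M_{11})+\operatorname{rank}_{\operatorname{psd}}(M')=1+\operatorname{rank}_{\operatorname{psd}}(M')$, and an induction on $n$ (base case $n=1$, a positive scalar of psd rank one) closes the lower bound at $\operatorname{rank}_{\operatorname{psd}}(M)\ge n$.

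For the saturation, a diagonal matrix $D=\operatorname{diag}(d_1,\dots,d_n)$ with $d_i>0$ admits the explicit factorization $R_i=d_i\ketbra{i}{i}$ and $C_j=\ketbra{j}{j}$, all $n\times n$ positive semidefinite, for which $\Tr(R_iC_j)=d_i\delta_{ij}=D_{ij}$; hence $\operatorname{rank}_{\operatorname{psd}}(D)\le n$. Since a diagonal matrix is a special triangular one, the lower bound applies as well, forcing equality $\operatorname{rank}_{\operatorname{psd}}(D)=n$. (Alternatively, the saturation follows from the equality clause of Lemma \ref{sl1}, applied with both off-diagonal blocks null so that the peeling is exact at every stage.)

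I expect no deep obstacle here: the argument is essentially an induction driven entirely by Lemma \ref{sl1}. The only point needing genuine care is the implicit hypothesis that the diagonal entries are nonzero---this is precisely what guarantees each peeled-off $1\times1$ block contributes a full unit of psd rank, and without it the statement is false. The other bookkeeping step is to orient the matrix as lower-triangular so that its vanishing block matches the top-right null block of Lemma \ref{sl1}, rather than contending with an orientation mismatch.
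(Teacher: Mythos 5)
Your proof is correct and takes essentially the same route as the paper's: reduce to the lower-triangular case via transposition invariance of the psd rank, peel off the leading $1\times1$ block with Lemma \ref{sl1} and recurse/induct to get the lower bound $n$, then obtain saturation for diagonal matrices (the paper invokes the equality clause of Lemma \ref{sl1}, while your explicit factorization $R_i=d_i\ketbra{i}{i}$, $C_j=\ketbra{j}{j}$ is an equally valid variant that you also note). If anything, you are more careful than the paper, which asserts that the psd rank of any $1\times1$ scalar is trivially $1$ --- false for the zero scalar --- whereas you correctly flag that strictly positive diagonal entries are an implicit, and necessary, hypothesis of the lemma.
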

\begin{proof}
    First note that, for any non-negative matrix \(\mathbb{M}\), \[\text{rank}_{\text{psd}}(\mathbb{M})=\text{rank}_{\text{psd}}(\mathbb{M}^T),\]
    where, \(*^T\) denotes the transposition operation.
    Therefore, in the following we will prove our result for lower-triangular matrices only and the same also holds for the upper-triangular one. 

    Let us now consider a non-negative lower-triangular matrix \(\mathbb{L}\) of order \(n\times n\). Using Eq.(\ref{esl11}), we can then write
    \begin{equation}\label{esl21}
    \text{rank}_{\text{psd}}(\mathbb{L})\geq \text{rank}_{\text{psd}}([l_{11}])+\text{rank}_{\text{psd}}(\mathbb{L}_1),
    \end{equation}
    where \(l_{11}=[\mathbb{L}]_{1,1}\) and \(\mathbb{L}_1\) is a \((n-1)\times(n-1)\) lower-triangular matrix omitting the first row and first column of the matrix \(\mathbb{L}\). 

    Similarly, we can recursively construct a set of  lower-triangular matrices \(\{\mathbb{L}_2,\mathbb{L}_3,\cdots,\mathbb{L}_{n-1}\}\), such that \(\mathbb{L}_k\) is generated by  omitting the first row and first column of the matrix \(\mathbb{L}_{k-1}\). Hence, we can rewrite the Eq.(\ref{esl21}) as 
    \[\text{rank}_{\text{psd}}(\mathbb{L})\geq \sum_{k=1}^{n-1} \text{rank}_{\text{psd}}([l_{kk}]) + \text{rank}_{\text{psd}}(\mathbb{L}_{n-1}).\]
    Finally, by noting that the matrix \(\mathbb{L}_{n-1}=[l_{nn}]\) and that the psd rank of any \(1\times 1\) matrix (scalar) is trivially 1, we have
    \[\text{rank}_{\text{psd}}(\mathbb{L})\geq n.\]
    This completes the proof. 
    
    With the help of Lemma \ref{sl1}, it is now trivial to argue that the bound saturates for \(\mathbb{L}\) being diagonal.
    
\end{proof}
\subsection{Proof of Lemma \ref{chanmatclas}}\label{a5}
Let us first show that \(\text{rank}_{\text{psd}}(M_7(p))\leq 7\) $\forall$ \(p>0\). To this goal, the following strategy establishes that a \textit{seven}-dimensional lone quantum system is sufficient to simulate all the channel matrices \(M_7(p)\).
\begin{enumerate}
\item Alice uses the \textit{seven}-dimensional computational basis \(\{\ket{i}\}_{i=0}^6\) to encode her input random variables \(X:=\{x_0,\cdots,x_6\}\) respectively. She sends the quantum system reliably through a perfect \textit{seven}-dimensional quantum channel to Bob.
\item Upon receiving the system Bob performs the computational basis measurement \(\{P_j:=\ketbra{j}{j}\}_{j=0}^6\).
\item He then outputs the random variable \(\{y_j\}_{j=0}^4\) whenever the projector \(\{P_j\}_{j=0}^4\) clicks. Additionally, after getting the click of the \(P_5\) (\(P_6\)) projector, he uses a \(\{p,1-p\}\) (\(\{\frac p3,1-\frac p3\}\)) local randomness to output \(y_5\) and \(y_6\) respectively. 
\end{enumerate}
To prove the converse, that is \(\text{rank}_{\text{psd}}(M_7(p))\geq 7\), first note that the channel matrix 
\[M_7(p)=\mathbb{I}_5\oplus\mathbb{P}(p)\text{,  where }\mathbb{P}(p)=\begin{pmatrix}
    p & 1-p\\
    \frac p3 & 1-\frac p3
\end{pmatrix}.\]
Therefore, using Lemma \ref{sl1}, we can write
\begin{align}
  \nonumber  \text{rank}_{\text{psd}}(M_7(p))&=\text{rank}_{\text{psd}}(\mathbb{I}_5)+\text{rank}_{\text{psd}}(\mathbb{P}(p))\\\nonumber
    &\geq 5 + \Lambda_{\max}(\mathbb{P}(p))\\\nonumber
    &=5+\max\{p,1-p\}+1-\frac p3\\\nonumber
    &>6.
\end{align}
The first inequality follows from Eq. (\ref{el22}) and (\ref{el23}). The second equality follows from  Eq. (\ref{el21}) and using the fact that \(\frac p3< \frac 12\) whenever \(0\leq p\leq 1\). Finally, the last strict inequality can argued trivially: \(\max\{p,1-p\}\geq\frac 12 > \frac p3\). This, in other words, implies that a perfect quantum system of dimension \textit{six} is unable to simulate the channel matrix \(M_7\). Hence,
\[\text{rank}_{\text{psd}}(M_7(p))=7\quad\forall~p>0.\]
\subsection{Proof of Theorem \ref{chanmatduan}}\label{a6}
We will prove the theorem by showing that the channels \(\mathcal{N}^{\mathcal{V}}_7\), corresponding to all isometries \(\mathcal{V}_7\in\mathrm{S}\), can simulate \(M_7(p)\) under minimal assistance of environment. To this end, consider an orthonormal basis of \(\text{Range}(\mathcal{V}_7)\):
 \begin{align*}
   \left\{\!\begin{aligned}\ket{\psi_1}=\ket{02},~\ket{\psi_2}=\ket{10},~\ket{\psi_3}=\ket{12},\\\ket{\psi_4}=\ket{20},~\ket{\psi_5}=\ket{21},~
      \ket{\psi_6}=\frac{1}{\sqrt{2}}(\ket{00}-\ket{11})\\ \text{and } ~\ket{\psi_7}={\frac{1}{\sqrt{6}}(\ket{00}+\ket{11})-\sqrt\frac{2}{3}}\ket{22}\end{aligned}\right\}\end{align*}
      Notice that, for all \(\mathcal{V}_7\in\mathrm{S}\), there exist a basis \(\mathbf{B}_{\mathcal{V}}\) of \(\mathbb{C}^7\) such that:
      \begin{align}\label{step2}
          \mathcal{V}_7\ket{\phi_i}=\ket{\psi_i}\quad\forall~\ket{\phi_i}\in\mathbf{B}_{\mathcal{V}}
      \end{align}
Now, a strategy to simulate the channel matrix \(M_7(p)\) using the channel \(\mathcal{N}^\mathcal{V}_7\) is listed below:

{\bf Encoding.} Depending on the isometry \(\mathcal{V}_7\), Alice encodes her random variable via the map \(\mathcal{E}:X\mapsto\mathbf{B}_\mathcal{V}\) defined as \(\mathcal{E}(x_i)=\ket{\phi_i}\) for all \(i\) and sends it via the corresponding channel \(\mathcal{N}^{\mathcal{V}}_7\).
\begin{widetext}
\begin{center}
\begin{table}[htb!]
    \centering
    \resizebox{\textwidth}{!}{
    \begin{tabular}{|l|c||c|c|c||l|}
        \hline
        \multicolumn{2}{|c||}{Encoding} & \multicolumn{3}{c||}{Decoding} & \multicolumn{1}{c|}{Input-Output Probability} \\ \hline
        Classical message & Encoded state & Transferred state & Bob's Outcome & Env's Outcome & \multicolumn{1}{c|}{$p(y_j|x_i)$} \\ \hline\hline
        $x_1$ & $\ket{\phi_1}$ & $\ket{\psi_1}$ & $0$ & $2$ & $p(y_j|x_1)=\delta_{j1}$ \\ \hline
        $x_2$ & $\ket{\phi_2}$ & $\ket{\psi_2}$ & $1$ & $0$ & $p(y_j|x_2)=\delta_{j2}$ \\ \hline
        $x_3$ & $\ket{\phi_3}$ & $\ket{\psi_3}$ & $1$ & $2$ & $p(y_j|x_3)=\delta_{j3}$ \\ \hline
        $x_4$ & $\ket{\phi_4}$ & $\ket{\psi_4}$ & $2$ & $0$ & $p(y_j|x_4)=\delta_{j4}$ \\ \hline
        $x_5$ & $\ket{\phi_5}$ & $\ket{\psi_5}$ & $2$ & $1$ & $p(y_j|x_5)=\delta_{j5}$ \\ \hline \hline
        \multirow{2}{*}{$x_6$} & \multirow{2}{*}{$\ket{\phi_6}$} & \multirow{2}{*}{$\ket{\psi_6}$} & $0$ & $0$ & $p(y_j|x_6)=p\delta_{j6}+(1-p)\delta_{j7}$ \\ \cline{4-6}
         & & & $1$ & $1$ & $p(y_j|x_6)=p\delta_{j6}+(1-p)\delta_{j7}$ \\ \hline \hline
        \multirow{3}{*}{$x_7$} & \multirow{3}{*}{$\ket{\phi_7}$} & \multirow{3}{*}{$\ket{\psi_7}$} & $0$ & $0$ &  $p(y_j|x_7)=p\delta_{j6}+(1-p)\delta_{j7}$ \\ \cline{4-6}
         & & & $1$ & $1$ & $p(y_j|x_7)=p\delta_{j6}+(1-p)\delta_{j7}$ \\ \cline{4-6}
         & & & $2$ & $2$ & $p(y_j|x_7)=\delta_{j7}$ \\ \hline
    \end{tabular}}
\caption{Tabular form of the strategy used by Alice and Bob, under minimally aided by the environment, to generate the channel matrix \(M_7\).}
\label{tab1}
\end{table}
\end{center}
\end{widetext}

{\bf Decoding.} Corresponding to each classical index \(i\), the received joint state of Bob and environment is then given by Eq. \eqref{step2}. Upon receiving the states, both Bob and the environment perform computational basis measurement with effects $\{\ketbra{0}{0},\ketbra{1}{1},\ketbra{2}{2}\}$. Environment, then communicates its result to Bob via a \(\log3\)-bit classical channel. Bob, depending on environment's outcome, declares his variable \(y_j\in Y\). Note that, whenever Bob finds their outputs to be anti-correlated, he perfectly identifies the state \(\ket{\psi_i}\) and hence the encoded classical index \(i\). This happens whenever \(i\in[5]\). However, when Bob finds their output to be correlated he adopts a probabilistic strategy. Specifically, when both their outcomes are either \(0\) or \(1\), he answers \(y_6\) with a probability \(p\) and \(y_7\) with \(1-p\). On the other hand, he always answers \(y_7\) if both of their outcomes are \(2\). A simple observation reveals that this strategy successfully implements \(M_7\). The entire strategy is given in Table \ref{tab1}. 
This establishes that all channels \(\mathcal{N}^{\mathcal{V}}_7\) generates the channel matrix \(M_7\). Now, from 
Lemma \ref{chanmatclas}, we can conclude that \(M_7\notin \mathcal{P}^{7\to7}(\mathcal{Q}_d)\) for all \(d\le6\). Therefore, \(\mathcal{P}^{7\to7}_{\text{env}}(\mathcal{N}_7^\mathcal{V})\nsubseteq\mathcal{P}^{7\to7}(\mathcal{Q}_d)\) for all \(d\le6\). This essentially states that there exist at least one classical information processing task where, under minimal assistance of environment, all \(\mathcal{N}^{\mathcal{V}}_7\) are more useful than \(6\)-dimensional identity quantum channel. This concludes the proof.
\subsection{Proof of Proposition \ref{prop2}}\label{a7}
Clearly the maximum psd rank of all the channel matrices, simulated by a \(d\)-dimensional quantum system, is \(d\). This, along with Eq. (\ref{el22}), further implies the max-monotone \(\Lambda_{\max}\) of all those channel matrices are upper-bounded by \(d\). 

Now, for any non-negative square matrix \(M\) of order \(p\times p\), it is trivial to argue that
\[\Lambda_{\max}(M)=\sum_{k=1}^p\max_{l\in\{1, \cdots, p\}}M_{kl}~~\geq\sum_{i=1}^{p}M_{ii}=\Tr[M].\]
Therefore, \(\forall n\in\mathbb{N}\text{ and }\forall M\in\mathcal{P}^{n\to n}(\mathcal{Q}_d)\),
\begin{align}\label{ep21}
\tr[M]\leq d \implies \mathcal{F}_c(\mathcal{Q}_d)\leq d.
\end{align}
Finally, from Theorem \ref{chanmatduan}, we get \(M_7(p)\in\mathcal{P}^{7\to7}_{env}(\mathcal{N}_7^{\mathcal{V}})\) and \(\tr[M_7(p)]=6+\frac {2p}3\). Therefore, we can evidently conclude \(\mathcal{F}_c^{env}(\mathcal{N}_7^{\mathcal{V}})\geq \tr[M_7(p)] > 6\). This, along with Eq. (\ref{ep21}) (for \(d=6\)), completes the proof. 
\subsection{Proof of Theorem \ref{th2}}\label{a8}
We will prove the theorem in the following two parts:
\subsubsection*{Part 1: Proof of \(\mathcal{F}_c^{SR}(\mathcal{N}^{\mathcal{V}}_7)=\mathcal{F}_c(\mathcal{N}^{\mathcal{V}}_7)\le3\)}\label{pt1}
We will first state few useful lemmas regarding the channel matrices and Classical Transmission Fidelity of any quantum channel.  
\begin{lemma}\label{lm3}
For any quantum channel $\mathcal{M}:\mathcal{L}(\mathbb{C}^{d_A})\!\to\!\mathcal{L}(\mathbb{C}^{d_B})$ and $\forall$ $n,m\in\mathbb{N}$, the set of $n\times m$ channel matrices achievable when the sender and receiver are assisted with an unbounded supply of shared randomness, $
\mathcal{P}_{SR}^{n\to m}\!\big(\mathcal{M}(\mathcal{Q}_{d_A})\big),$ coincides with the convex hull of the unassisted set, i.e,
$$
\mathcal{P}_{SR}^{n\to m}\!\left(\mathcal{M}(\mathcal{Q}_{d_A})\right) = \operatorname{Convhull}\!\left(\mathcal{P}^{n\to m}\!\left(\mathcal{M}(\mathcal{Q}_{d_A})\right)\right).  
$$

\end{lemma}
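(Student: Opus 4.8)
The plan is to prove Lemma~\ref{lm3} by establishing the two set-inclusions separately, each following from an elementary operational argument about what shared randomness can and cannot do. The statement to be proved is that $\mathcal{P}_{SR}^{n\to m}(\mathcal{M}(\mathcal{Q}_{d_A})) = \operatorname{Convhull}(\mathcal{P}^{n\to m}(\mathcal{M}(\mathcal{Q}_{d_A})))$, so I would argue $\subseteq$ and $\supseteq$ in turn.

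For the inclusion $\operatorname{Convhull}(\mathcal{P}^{n\to m}(\mathcal{M}(\mathcal{Q}_{d_A}))) \subseteq \mathcal{P}_{SR}^{n\to m}(\mathcal{M}(\mathcal{Q}_{d_A}))$, I would take an arbitrary convex combination $P = \sum_\lambda q_\lambda P^{(\lambda)}$ of unassisted channel matrices $P^{(\lambda)} \in \mathcal{P}^{n\to m}(\mathcal{M}(\mathcal{Q}_{d_A}))$, where $\{q_\lambda\}$ is a probability distribution. Each $P^{(\lambda)}$ is realized by some encoding $\{\rho_i^{(\lambda)}\}$ and decoding POVM $\{M_j^{(\lambda)}\}$. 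The shared-randomness strategy is then transparent: Alice and Bob sample the label $\lambda$ from $\{q_\lambda\}$ using their shared random variable, and conditioned on $\lambda$ they run the $\lambda$-th encoding/decoding pair. The resulting conditional probability is $\sum_\lambda q_\lambda \Tr[M_j^{(\lambda)}\mathcal{M}(\rho_i^{(\lambda)})] = [P]_{ij}$, so $P \in \mathcal{P}_{SR}^{n\to m}(\mathcal{M}(\mathcal{Q}_{d_A}))$.

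For the reverse inclusion, I would note that any shared-randomness-assisted strategy consists of a shared random variable $\lambda \sim q_\lambda$ together with, for each value of $\lambda$, a (deterministic-given-$\lambda$) encoding and decoding. Conditioned on $\lambda$, the induced channel matrix is by definition an element of the unassisted set $\mathcal{P}^{n\to m}(\mathcal{M}(\mathcal{Q}_{d_A}))$, and the overall channel matrix is the average $\sum_\lambda q_\lambda P^{(\lambda)}$ over these. Hence every SR-assisted matrix is a convex combination of unassisted ones, giving $\mathcal{P}_{SR}^{n\to m}(\mathcal{M}(\mathcal{Q}_{d_A})) \subseteq \operatorname{Convhull}(\mathcal{P}^{n\to m}(\mathcal{M}(\mathcal{Q}_{d_A})))$. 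Combining the two inclusions yields the claimed equality.

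The only genuine subtlety—and thus the main point requiring care rather than the main obstacle—is the measure-theoretic handling of a possibly continuous shared random variable, since the shared randomness is unbounded. The cleanest fix is to invoke Carath\'eodory's theorem: since $\mathcal{P}^{n\to m}(\mathcal{M}(\mathcal{Q}_{d_A}))$ lives in the finite-dimensional affine space of $n\times m$ row-stochastic matrices (dimension $n(m-1)$), any point in its convex hull is already a convex combination of at most $n(m-1)+1$ extreme points. This reduces the continuous average to a finite one and sidesteps any integration technicalities, so both directions hold with a finite-support shared random variable, confirming that unbounded shared randomness yields exactly the convex hull.
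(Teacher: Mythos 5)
Your proof is correct and follows essentially the same route as the paper's: both inclusions are established by conditioning on the shared random variable and averaging the resulting unassisted channel matrices. Your Carath\'eodory remark is a small rigor upgrade that the paper omits (it implicitly assumes discrete shared randomness of the form $\sum_k p_k \ketbra{k}{k}\otimes\ketbra{k}{k}$), but it does not change the substance of the argument.
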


\begin{proof}
Let us first consider the scenario, where Alice is allowed to send a \(d_A\) level isolated quantum system via the channel \(\mathcal{M}\) assisted by unbounded SR, to Bob. Mathematically, SR can be seen as a set of correlated classical random variables \((K, K)\):  \(\sum_k p_k|k\rangle_A\langle k|\otimes|k\rangle_B\langle k|\), where \(p_k\) is the probability that the random variable \(k\in K\) revealed by both Alice and Bob at an given instant. Then they can implement a pre-decided strategy by communicating a \(d_A\)-level quantum system through \(\mathcal{M}\) and accordingly generates a channel matrix \(P_k\). Therefore, by averaging over the all the random variables we obtain,
\[P=\sum_k p_k P_k.\]
Since, \(\forall k,\quad P_k\in\mathcal{P}^{n\to m}(\mathcal{M}(\mathcal{Q}_{d_A}))\), we can trivially argue that \(P\in\operatorname{Convhull}\!\left(\mathcal{P}^{n\to m}\!\left(\mathcal{M}(\mathcal{Q}_{d_A})\right)\right)\). Hence,
\begin{equation}\label{et2-1}
\mathcal{P}_{SR}^{n\to m}\!\left(\mathcal{M}(\mathcal{Q}_{d_A})\right) \subseteq \operatorname{Convhull}\!\left(\mathcal{P}^{n\to m}\!\left(\mathcal{M}(\mathcal{Q}_{d_A})\right)\right).
\end{equation}
Conversely, consider a channel matrix \(R\in\operatorname{Convhull}\!\left(\mathcal{P}^{n\to m}\!\left(\mathcal{M}(\mathcal{Q}_{d_A})\right)\right)\). Then 
\[R=\sum_i r_i R_i,\]
where, \(\{r_i\in[0,1]:\sum_i r_i=1\}\) is the probability distribution and \(\forall i\quad R_i\) is the extreme channel matrix obtained by sending a \(d_A\)-level quantum system through \(\mathcal{M}\). 

Now, by sharing a SR of the form \(\sum_i r_i|i\rangle_A\langle i|\otimes|i\rangle_B\langle i|\), Alice and Bob can chose a suitable strategy to generate the channel matrix \(R_i\), whenever they locally reveal the random variable \(i\). Hence, the effective channel matrix under such a strategy is 
\[\sum_i r_i R_i=: R\].
This further implies \begin{equation}\label{et20}
\operatorname{Convhull}\!\left(\mathcal{P}^{n\to m}\!\left(\mathcal{M}(\mathcal{Q}_{d_A})\right)\right) \subseteq\mathcal{P}_{SR}^{n\to m}\!\left(\mathcal{M}(\mathcal{Q}_{d_A})\right).    
\end{equation}
Therefore, Eq. (\ref{et2-1}) and (\ref{et20}) together concludes the proof.

\end{proof}
\begin{lemma}\label{lm4}
    For any quantum channel \(\mathcal{M}:\mathcal{L}(\mathbb{C}^{d_A})\mapsto\mathcal{L}(\mathbb{C}^{d_B})\), the Classical Transmission Fidelity does not get enhanced even if the sender and the receiver is aided with unbounded shared randomness, i.e,
    \(
    \mathcal{F}^{SR}_c(\mathcal{M}(\mathcal{Q}_{d_A}))=\mathcal{F}_c(\mathcal{M}(\mathcal{Q}_{d_A}))    
    \).
\end{lemma}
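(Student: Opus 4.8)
The plan is to prove the statement $\mathcal{F}^{SR}_c(\mathcal{M}(\mathcal{Q}_{d_A}))=\mathcal{F}_c(\mathcal{M}(\mathcal{Q}_{d_A}))$ by leveraging Lemma \ref{lm3}, which identifies the shared-randomness-assisted set of channel matrices with the convex hull of the unassisted set. Since $\mathcal{F}_c$ is defined as a maximum of the trace functional over channel matrices, and the trace is a \emph{linear} functional on the space of $n\times n$ matrices, the key fact I would exploit is that a linear functional attains its maximum over a convex hull at an extreme point of the generating set. Thus the enrichment provided by shared randomness---which only produces convex mixtures---cannot increase the optimal value of a linear payoff.

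First I would fix an arbitrary $n\in\mathbb{N}$ and consider any $P\in\mathcal{P}_{SR}^{n\to n}(\mathcal{M}(\mathcal{Q}_{d_A}))$. By Lemma \ref{lm3}, I can write $P=\sum_k p_k P_k$ with each $P_k\in\mathcal{P}^{n\to n}(\mathcal{M}(\mathcal{Q}_{d_A}))$ and $\{p_k\}$ a probability distribution. Then by linearity of the trace,
\begin{align*}
\tr[P]=\tr\Big[\sum_k p_k P_k\Big]=\sum_k p_k\,\tr[P_k]\leq\max_k\tr[P_k]\leq\mathcal{F}_c(\mathcal{M}(\mathcal{Q}_{d_A})),
\end{align*}
where the penultimate step uses that a convex combination of real numbers is bounded above by their maximum. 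Taking the supremum over all such $P$ and over all $n$ yields $\mathcal{F}^{SR}_c(\mathcal{M}(\mathcal{Q}_{d_A}))\leq\mathcal{F}_c(\mathcal{M}(\mathcal{Q}_{d_A}))$.

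For the reverse inequality, I would simply note the trivial inclusion $\mathcal{P}^{n\to n}(\mathcal{M}(\mathcal{Q}_{d_A}))\subseteq\mathcal{P}_{SR}^{n\to n}(\mathcal{M}(\mathcal{Q}_{d_A}))$ (shared randomness can always be ignored, i.e.\ take a degenerate distribution), so maximizing the trace over the larger set can only give a value at least as large: $\mathcal{F}^{SR}_c\geq\mathcal{F}_c$. Combining the two inequalities establishes equality. I do not anticipate a serious obstacle here, as the argument is essentially a restatement of the fact that linear optimization is insensitive to convexification; the only point requiring mild care is ensuring the supremum over $n$ commutes appropriately with the inequality, which it does since the bound $\tr[P]\leq\mathcal{F}_c$ holds uniformly for every $n$. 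The genuine content was already absorbed into Lemma \ref{lm3}, making this lemma a direct corollary.
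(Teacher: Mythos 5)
Your proposal is correct and follows essentially the same route as the paper's own proof: both invoke Lemma \ref{lm3} to identify the shared-randomness-assisted set with the convex hull of the unassisted set, and then use linearity of the trace to conclude that convex mixing cannot raise the maximum. If anything, your version is slightly tidier, since you make the trivial reverse inclusion \(\mathcal{F}^{SR}_c \geq \mathcal{F}_c\) explicit and avoid the paper's implicit assumption that the maximum is attained.
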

\begin{proof}
We start by noting that from Lemma \ref{lm3}, we have
\begin{eqnarray}\nonumber\mathcal{F}^{SR}_c(\mathcal{M}(\mathcal{Q}_{d_A}))&=&\max_{n}~\max_{P\in\mathcal{P}_{SR}^{n\to n}\!\left(\mathcal{M}(\mathcal{Q}_{d_A})\right)}\tr P\\\nonumber&=&\max_{n}~\max_{P\in\operatorname{Convhull}\left(\mathcal{P}^{n\to n}\!\left(\mathcal{M}(\mathcal{Q}_{d_A})\right)\right)}\tr P\end{eqnarray}

Now, consider the maximum trace is attained by some channel matrix \(\tilde P\in\operatorname{Convhull}(\mathcal{P}^{n\to n}(\mathcal{M}(\mathcal{Q}_{d_A})))\) but \(\tilde P\notin\mathcal{P}^{n\to n}(\mathcal{M}(\mathcal{Q}_{d_A}))\). Now since, \(\tilde P=\sum_kp_k P_k\), where $\forall$ \(k\), \(P_k\in\mathcal{P}^{n\to n}(\mathcal{M}(\mathcal{Q}_{d_A}))\), we have:
\begin{align*}
    \tr \tilde P=\tr \left(\sum_kp_kP_k\right)=\sum_kp_k\tr P_k.
\end{align*}
Therefore,   if \(\tilde P\) attains maximum trace then \(P_k\), $\forall$ \(k\) also attains maximum trace. Now all \(P_k\) are simulated by the channel \(\mathcal{M}\) without the assistance of any shared randomness. This establishes the claim. 
\end{proof}
Now to prove the main claim, we recall from Proposition \ref{prop1} \[\mathcal{P}^{n\to m}(\mathcal{N}^{\mathcal{V}}_7)\subseteq\mathcal{P}^{n\to m}(\mathcal{Q}_3)~ \forall ~n,m\in\mathbb{N}~ \& ~\mathcal{V}\in\mathrm{S}\]
Now from Eq. \eqref{ep21} and Lemma \ref{lm4}, we have \(\mathcal{F}^{SR}_c(\mathcal{N}^{\mathcal{V}}_7)=\mathcal{F}_c(\mathcal{N}^{\mathcal{V}}_7)\le3\). This completes the proof.

\subsubsection*{Part 2: Proof of \(\mathcal{F}_c^{env}(\mathcal{N}^{\mathcal{V}}_7)>\mathcal{F}_c^{PR}(\mathcal{N}^{\mathcal{V}}_7)\)}\label{pt2}
We will start by noting the most general protocol to implement a channel matrix \(P\in\mathcal{P}^{n\to m}(\mathcal{N}_7^{\mathcal{V}})\) when the sender and receiver, along with shared randomness, share any of the channels \(\mathcal{N}_7^{\mathcal{V}}\) and are further assisted by most general \(2\)-input--\(2\)-output non-signaling correlation. Since all such non-signaling correlations can be obtained by first sharing a \(2\)-input--\(2\)-output PR correlation and then by local operation assisted by shared randomness, one can, without loss of generality, restrict the analysis to only shared PR correlation. Furthermore, Lemma \ref{lm4} implies that to optimize the trace, one can only focus on the extreme strategies, i.e, strategies which does not depend on any shared random variables. In the following we will break down such extreme strategies in two parts:

{\bf Encoding:} Consider the message random variable on the sender's (say, Alice) side be \(X:=\{x_1,x_2,\cdots,x_n\}\). Based on her message, Alice performs a boolean function \(f:X\mapsto \{0,1\}\) and puts the bit \(f(x_i)\) as the input to her side of the PR box. Upon receiving the corresponding output \(a_i\in\{0,1\}\), Alice uses a preparation device \(\rho:X\times\{0,1\}\mapsto\mathcal{D}(\mathbb{C}^{7})\) and communicates the state \(\rho(x_i,a_i)\) to the receiver (say Bob), via the channel \(\mathcal{N}^{\mathcal{V}}_7\).

{\bf Decoding:} Upon receiving the state \(\mathcal{N}^{\mathcal{V}}_7(\rho(x_i,a_i))\in\mathcal{D}(\mathbb{C}^{3})\) from Alice, Bob then performs a \(m\)-outcome measurement \(\{M\}_{j=1}^{m}\) with \(M_j\ge0\), $\forall$ \(j\) with \(\sum_jM_j=\mathbb{I}_3\) and obtains the corresponding classical index \(j\in[m]\) as output. Bob then performs a boolean function \(g:[m]\mapsto\{0,1\}\) and inputs \(g(j)\) to his part of the PR box. Upon receiving the output \(b_j\in\{0,1\}\) from the PR box, Bob then performs another function \(d:[m]\times\{0,1\}\mapsto Y\) where \(Y:=\{y_1,y_2,\cdots,y_m\}\) and declares his output variable \(y_j=d(j,b_j)\). Now, due to the preshared PR box they will always have the following
\begin{align}\label{pr}
    a_i\oplus b_j=f(x_i)g(j)
\end{align}
Any advantage that the preshared PR box can provide must come from the correlation in Eq. \eqref{pr}.

Now, we will show that if instead of PR box, Alice and Bob share a random variable \(\lambda\in\{0,1\}\) with a probability distribution \(p(\lambda=0)=p(\lambda=1)=1/2\) and an additional \(1\)-cbit perfect classical channel, they can successfully simulate the same strategy. This follows from the fact that the input output statistics of a PR box can always be simulated by the above resources. However, for sake of completeness we will outline the proof. The strategy to simulate the `PR-box strategy' using the above resources is listed below:

{\bf Encoding:} Alice will follow the same encoding strategy. However, instead of generating the bit \(a_i\) using the PR box, she will now generate it through her part of the shared random variable, i.e, \(a_i=\lambda\). Notably, this works because this unbiased random variable has the same local probability density as the outputs of the PR box. On the other hand, using the \(1\)-cbit perfect classical channel, she will send the bit \(f(x_i)\) to Bob.

{\bf Decoding:} On the decoding end, upon knowing the bit \(f(x_i)\), Bob then generates his bit \(b_j\) as \(b_j=\lambda\oplus f(x_i)g(j)\). The rest of the decoding strategy is exactly same as the previous one. It is very easy to check that, in this way, their corresponding bits will obey the Eq. \eqref{pr}. This establishes the claim.

Therefore, one can always draw the inclusion relation given as \(\mathcal{P}^{n\to m}_{PR}(\mathcal{N}_7^\mathcal{V})\subseteq\mathcal{P}^{n\to m}_{SR}(\mathcal{N}_7^\mathcal{V}+1\text{cbit})\). Now a perfect \(1\)-cbit channel can always be realized by a \(2\)-dimensional quantum identity channel. Therefore, we have
\begin{align*}
&\mathcal{P}^{n\to m}_{PR}(\mathcal{N}_7^\mathcal{V})\subseteq\mathcal{P}^{n\to m}_{SR}(\mathcal{N}_7^\mathcal{V}+1\text{cbit})\subseteq\mathcal{P}^{n\to m}_{SR}(\mathcal{N}_7^\mathcal{V}+\mathcal{Q}_2)\\
\Rightarrow &\mathcal{P}^{n\to m}_{PR}(\mathcal{N}_7^\mathcal{V})\subseteq\mathcal{P}^{n\to m}_{SR}(\mathcal{Q}_3+\mathcal{Q}_2)~[\text{From Proposition \ref{prop1} }]\\
\Rightarrow &\mathcal{P}^{n\to m}_{PR}(\mathcal{N}_7^\mathcal{V})\subseteq\mathcal{P}^{n\to m}_{SR}(\mathcal{Q}_5)~[\text{No hypersignaling principle} \cite{dall2017no}]\\
\Rightarrow &\mathcal{F}_c^{PR}(\mathcal{N}_7^\mathcal{V})\le\mathcal{F}_{c}^{SR}(\mathcal{Q}_5)\le5~[\text{From Eq. \eqref{ep21} and Lemma \ref{lm4}}]\\
\end{align*}
This completes the proof.

Combining both Part \hyperref[pt1]{1}, Part \hyperref[pt2]{2} and Proposition \ref{prop2}, we get \(\mathcal{F}_c^{env}(\mathcal{N}^{\mathcal{V}}_7)>\mathcal{F}_c^{PR}(\mathcal{N}^{\mathcal{V}}_7)\ge \mathcal{F}_c^{SR}(\mathcal{N}^{\mathcal{V}}_7)\). This finally completes the proof of Theorem 2.

\subsection{Proof of Theorem \ref{gen}}\label{a9}
To prove the theorem, let us first consider a generic class of channel matrices \(M_{k_d}\) of order \(k_d\times k_d\), with \(k_d=d^2-1\), of the form:
\begin{align}\label{et21}
\nonumber M_{k_d}&=M_{\sum}\oplus\mathbb{I}_{\tilde{d}},\text{ where, }\tilde{d}=d(d-1),\text{ and }\\\nonumber\\
 (M_{\sum})_{i,j} &=\begin{cases}
       ~~~0 \quad\quad\text{for }1\leq i<j\leq (d-1),\\\frac 2{i(i+1)}\quad\text{for }j=1,\\\frac 1{i(i+1)}\quad\text{for }1<j<i\leq(d-1),\\\frac j{(j+1)}\quad~\text{for }1<i=j\leq (d-1).
   \end{cases}
\end{align}
\begin{lemma}\label{sl3}
     The PSD rank of the channel matrix \(M_{k_d}\) is \(k_d\).
\end{lemma}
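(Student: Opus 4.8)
The plan is to exploit the block-diagonal structure $M_{k_d}=M_{\sum}\oplus\mathbb{I}_{\tilde d}$ together with the two instrumental lemmas on psd rank. Because $M_{k_d}$ is block diagonal, \emph{both} of its off-diagonal blocks vanish, so Lemma \ref{sl1} applies in its equality form (the case where the lower-left block is also null) and yields $\text{rank}_{\text{psd}}(M_{k_d})=\text{rank}_{\text{psd}}(M_{\sum})+\text{rank}_{\text{psd}}(\mathbb{I}_{\tilde d})$. The whole task thus reduces to computing the psd rank of each diagonal block separately.

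The block $\mathbb{I}_{\tilde d}$ is diagonal of order $\tilde d=d(d-1)$, so by the saturation clause of Lemma \ref{sl2} (equivalently Eq. (\ref{el23})) its psd rank is exactly $\tilde d$. For the block $M_{\sum}$ I would first read off, directly from its four-case definition, that all entries strictly above the diagonal vanish (the branch $1\le i<j\le d-1$), so $M_{\sum}$ is a non-negative \emph{lower-triangular} matrix of order $d-1$ whose diagonal entries $1,\tfrac23,\tfrac34,\dots,\tfrac{d-1}{d}$ are all strictly positive. Lemma \ref{sl2} then gives $\text{rank}_{\text{psd}}(M_{\sum})\ge d-1$, while the elementary bound that the psd rank of a square matrix never exceeds its order (a $(d-1)$-level system trivially simulates it) forces $\text{rank}_{\text{psd}}(M_{\sum})= d-1$.

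Combining the two contributions gives $\text{rank}_{\text{psd}}(M_{k_d})=(d-1)+\tilde d=(d-1)+d(d-1)=(d-1)(d+1)=d^2-1=k_d$, which is the claim. As a consistency check one can verify that $M_{\sum}$ is row-stochastic — for each $i>1$ the $j=1$ term $\tfrac{2}{i(i+1)}$, the $i-2$ middle terms $\tfrac{1}{i(i+1)}$, and the diagonal term $\tfrac{i}{i+1}$ sum to $1$ — so that $M_{k_d}$ is a bona fide channel matrix, as needed for the operational reading in Theorem \ref{gen}.

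The only place requiring genuine care is the bookkeeping: correctly reading the triangular structure and the positivity of the diagonal out of the piecewise definition of $M_{\sum}$, and tracking the index ranges so that the orders $(d-1)$ and $\tilde d=d(d-1)$ add up to $k_d=d^2-1$. Conceptually there is no obstacle, since the heavy lifting is entirely delegated to Lemmas \ref{sl1} and \ref{sl2}; the result is essentially the arbitrary-dimension analogue of the $\mathbb{I}_5\oplus\mathbb{P}(p)$ decomposition used in the proof of Lemma \ref{chanmatclas}.
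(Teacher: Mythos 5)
Your proof is correct, and it differs from the paper's in one substantive way. The lower bound is identical: both you and the paper split \(M_{k_d}=M_{\sum}\oplus\mathbb{I}_{\tilde d}\) via Lemma \ref{sl1} and apply Lemma \ref{sl2} to the lower-triangular block \(M_{\sum}\). For the upper bound, however, the paper never bounds \(\text{rank}_{\text{psd}}(M_{\sum})\) from above at all; instead it constructs an explicit quantum protocol (computational-basis encoding on a \((d^2-1)\)-level system, computational-basis decoding, plus tailored local randomness on the first \(d-1\) outcomes) that simulates the whole matrix \(M_{k_d}\), yielding \(\text{rank}_{\text{psd}}(M_{k_d})\le d^2-1\). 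You instead stay inside the equality clause of Lemma \ref{sl1} and close the gap blockwise, invoking the elementary fact that the psd rank of a nonnegative matrix never exceeds its order to get \(\text{rank}_{\text{psd}}(M_{\sum})\le d-1\). Your route is shorter and purely matrix-theoretic; note only that this order bound deserves its one-line justification, either algebraically (factor \(M_{ij}=\Tr(R_iC_j)\) with \(R_i=\operatorname{diag}(M_{i1},\dots,M_{i,d-1})\) and \(C_j=E_{jj}\), which needs no stochasticity assumption) or operationally via the trivial simulation you mention, in which case your row-stochasticity check is precisely what makes Bob's classical post-processing legitimate, so it is more than a consistency check. What the paper's longer protocol buys is operational content: it exhibits concretely how a \((d^2-1)\)-dimensional system generates \(M_{k_d}\), and that same encoding/decoding template is reused immediately afterwards in the proof of Theorem \ref{gen}, where the channels \(\mathcal{N}^{\mathcal{V}}_{d^2-1}\) reproduce \(M_{k_d}\) under minimal environment assistance. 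Your closing observation that the result is the arbitrary-dimension analogue of the \(\mathbb{I}_5\oplus\mathbb{P}(p)\) decomposition of Lemma \ref{chanmatclas} is exactly right.
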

\begin{proof}
    Since \(M_{k_d}\) is a block-diagonal matrix, using Lemma \ref{sl1} and Eq.(\ref{el23}) we can immediately conclude 
    \begin{align}\label{et22}
       \nonumber \text{rank}_{\text{psd}}(M_{k_d})&=\text{rank}_{\text{psd}}(M_{\sum})+\text{rank}_{\text{psd}}(\mathbb{I}_{\tilde{d}})\\&=\text{rank}_{\text{psd}}(M_{\sum})+\tilde{d}.
    \end{align}
    Further note that, \(M_{\sum}\) is a lower-triangular matrix of order \((d-1)\times(d-1)\), specifically:
    \[M_{\sum}=\begin{pmatrix}
        1&0&0&0&\cdots\\
        \frac 13&\frac 23&0&0&\cdots\\
        \frac 16&\frac 1{12}&\frac 34&0&\cdots\\
        \frac 1{10}&\frac 1{20}&\frac 1{20}&\frac 45&\cdots\\
        ..&..&..&..&\cdots\\
        ..&..&..&..&\cdots
    \end{pmatrix}.\]
    Therefore, by using Lemma \ref{sl2} we have \[\text{rank}_{\text{psd}}(M_{\sum})\geq d-1.\]
    This along with Eq. (\ref{et22}), further implies 
    \[\text{rank}_{\text{psd}}(M_{k_d})\geq (d-1)+\tilde{d}=d^2-1.\]
    We will now conversely show that \(\text{rank}_{\text{psd}}(M_{k_d})\leq d^2-1\) by simply providing a quantum strategy to simulate the channel matrix \(M_{k_d}\).
    \medskip
    
    \textbf{Encoding:} Given a \(k_d=d^2-1\) dimensional perfect quantum channel, Alice can encode her input random variable \(\{x_i\}_{i=0}^{k_d-1}\) using the quantum states \(\{\ket{i}\}_{i=0}^{k_d-1}\in\mathbb{C}^{k_d}\) and send them to Bob.\par
    \medskip
    \textbf{Decoding:} Bob will then perform a computational basis measurement \(\{\ketbra{i}{i}\}_{i=0}^{k_d-1}\). 
    For the first \(d-1\) outcomes he will use a cunningly chosen local random variables to generate the matrix  \(M_{\sum}\). 
    In particular, his output will be \(y_0\), whenever the projector \(\ketbra{0}{0}\) clicks. On the other hand, for the clicking of the projector \(\ketbra{k}{k},~(0< k\leq d-2)\), Bob will answer the random variables \(\{y_{j_k}\}_{j_k=0}^{k}\), with the probabilities \(\{\frac 2{(k+1)(k+2)},\frac 1{(k+1)(k+2)},\cdots,\frac 1{(k+1)(k+2)},\frac {k+1}{k+2}\}\) respectively.\par
\noindent
     Additionally, for the last \(d(d-1)\) projectors \(\{\ketbra{k}{k}\}_{k=d-1}^{k_d-1}\), Bob's output will be simply \(y_k\).\par\noindent
     This completes the proof.
\end{proof}
Coming back to the proof of the main theorem, in the following we will show that all the channels of the form \(\mathcal{N}^{\mathcal{V}}_{d^2-1}\) will be able to generate the channel matrix \(M_{k_d},~\forall d\geq 3\), in spite of possessing suboptimal EACC. 

\textbf{Encoding:} Let us first consider a set of orthogonal basis \(\{\ket{\psi_k}\}_{k=0}^{d^2-2}\in\mathbb{C}^d\otimes\mathbb{C}^d\), spanning the subspace \(\mathrm{S}_d\perp\{\ket{\phi^+_d}=\frac 1{\sqrt{d}}\sum_{i=0}^{d-1}\ket{i}\otimes\ket{i}\}\) (see Eq. (\ref{duanspace}) in the main text). In particular, the first \((d-1)\) states are entangled:
\begin{equation}\label{et23}
\ket{\psi_k}=\frac 1{\sqrt{k+2}}\ket{\phi^+_{k+1}}-\frac {\sqrt{k+1}}{\sqrt{k+2}}\ket{k+1}\otimes\ket{k+1},
\end{equation}
where \(k\in\{0,1,\cdots, d-2\}\) and the rests are products of the form
\begin{align}\label{et24}
    \nonumber\ket{\psi_k}&=\ket{i}\otimes\ket{j},\quad\text{with }i\neq j~\&~i,j\in\{0,\cdots,d-1\}\\\text{where, }k&=
    \begin{cases}
        i\cdot (d-1) + j + (d-1) \text{ for }i>j\\
        i\cdot (d-1) + (j-1) + (d-1)\text{ for }i<j
        \end{cases}
\end{align}
\noindent
Now, Alice encodes her input random variables \(\{x_i\}_{i=0}^{k_d-1}\) in \(k_d=d^2-1\) dimensional orthogonal quantum states \(\{\ket{\xi_i}\}_{i=0}^{d^2-2}\), such that the isometry \(\mathcal{V}_{d^2-1}\) as the following
\[\mathcal{V}_{d^2-1}\ket{\xi_i}=\ket{\psi_i},~\forall i\in\{0, 1, \cdots, d^2-2\},\]
where, the states \(\ket{\psi_i}\) are shared between the receiver Bob and the environment. 

\textbf{Decoding:} To generate the output random variables \(\{y_j\}_{j=0}^{d^2-2}\), Bob will perform a computational basis measurement \(\{|i\rangle_B\langle i|\}_{i=0}^{d-1}\) on his local constituents. Additionally, he will be informed the outcome of the measurement \(\{|{i}\rangle_E\langle{i}|\}_{i=0}^{d-1}\), independently performed on the environments side. The decoding structure readily identifies the setting as a minimal assistance one. We will denote the outcome obtained by Bob as \(b\in\{0,1,\cdots,d-1\}\), while \(e\in\{0,1,\cdots,d-1\}\) as the \(\log d\)-bit classical information from the environment. 

Now, Bob will output \(y_j\), where \(j=b\cdot (d-1)+e+(d-1)\) when \(b>e\) and \(j=b\cdot (d-1)+(e-1)+(d-1)\) when \(b<e\). Note that, in either cases \(j\in\{d-1,d,\cdots,d(d-1)\}\). It can be trivially argued from Eq. (\ref{et23}) and (\ref{et24}) that such an instance of clicking the projectors \(\ketbra{b}{b}\otimes\ketbra{e}{e}\) can only happen when \(\ket{\psi_j}=\ket{b}_B\otimes\ket{e}_E\) where \(b\neq e\). That is, the state sent by Alice is indeed \(\ket{\xi_j}\). Therefore, they can successfully simulate the channel matrix \(\mathbb{I}_{\tilde{d}}\), where \(\tilde{d}=d(d-1)\).

On the other hand, consider the case where \(b=e\), then the state distributed between Bob and the environment must be among \(\{\ket{\psi_0},\cdots,\ket{\psi_{d-2}}\}\). In those cases, Bob will simply output the index \(y_0\), whenever \(b=e\in\{0,1\}\) and output \(y_{b-1}\) when \(b=e\notin\{0,1\}\). 

As an illustration, consider the state \(\ket{\psi_k}\) as in Eq. (\ref{et21}), the projectors \(\ketbra{0}{0}\otimes\ketbra{0}{0}\) and \(\ketbra{1}{1}\otimes\ketbra{1}{1}\) clicks with a probability 
\[|\langle 00|\psi_k\rangle|^2+|\langle 11|\psi_k\rangle|^2=\frac 1{k+2}\times \frac 2{k+1}=:p(y_0|x_k),\]
which mimics the first column of the \((k+1)^{th}\) row. Similarly, for the \(l^{th}\) column of the same row we obtain
\[|\langle ll|\psi_k\rangle|^2=\frac 1{k+2}\times\frac 1{k+1}=:p(y_{l-1}|x_k),~\forall l\in\{2,\cdots,k\}\]
and finally to conclude the proof, for the \((k+1)^{th}\) column 
\[|\langle (k+1)(k+1)|\psi_k\rangle|^2=\frac {k+1}{k+2}=:p(y_k|x_k).\]

\end{document}